\documentclass[letterpaper, 11pt]{article}
\usepackage[margin=1in,paperheight=11in,paperwidth=8.5in]{geometry}
\iffalse
\usepackage{thmtools}
\usepackage{thm-restate}
\usepackage{amsmath,amsthm,amssymb}
\usepackage{hyperref}

\usepackage{cleveref}

\declaretheorem[name=Theorem,numberwithin=section]{thm}
\fi
\usepackage{amsthm}
\usepackage{amsmath,amssymb}
\usepackage[utf8]{inputenc} % allow utf-8 input
\usepackage[T1]{fontenc}    % use 8-bit T1 fonts
\usepackage{hyperref}       % hyperlinks
\usepackage{url}            % simple URL typesetting
\usepackage{booktabs}       % professional-quality tables
\usepackage{amsfonts}       % blackboard math symbols
\usepackage{nicefrac}       % compact symbols for 1/2, etc.
\usepackage{microtype}      % microtypography
\usepackage{palatino}
\usepackage{mathpazo}
\usepackage{graphicx}
\usepackage{xcolor}
\usepackage[linesnumbered,ruled,vlined]{algorithm2e}
\SetKwInput{Input}{Input}
\SetKwInput{Output}{Output}
\usepackage{lipsum}

\newtheorem{thm}{Theorem}
%\numberwithin{theorem}{subsection}

\newtheorem{lemma}{Lemma}[section]

\newtheorem{theorem}[lemma]{Theorem}
\newtheorem{cor}[lemma]{Corollary}

\newtheorem{rem}[lemma]{Remark}

\newcommand{\re}{\begin{rem}\rm}
\newcommand{\mar}{\end{rem}}

\def\<{\langle}
\def\>{\rangle}
\def\be{\begin{equation*}}
\def\ee{\end{equation*}}
\def\bea{\begin{eqnarray*}}
\def\eea{\end{eqnarray*}}

\newcommand{\comment}[1]{}

\newcommand{\pl}{\hspace{.1cm}}

\newcommand{\C}{{\mathcal C}}

\newcommand{\cB}{{\mathcal B}}

\newcommand{\cH}{{\mathcal H}}

\newcommand{\eps}{\varepsilon}

\newcommand{\ten}{\otimes}

\newcommand{\ket}[1]{\ensuremath{\left|#1\right\rangle}}
\newcommand{\op}[2]{|#1\rangle \langle #2|}

\newcommand{\tr}{\mathop{\mathrm{tr}}\nolimits}

\newcommand{\norm}[2]{\parallel \! #1 \! \parallel_{#2}}

\begin{document}

\title{Sample optimal tomography of quantum Markov chains}

\author{Li Gao \thanks{Department of Mathematics, University of Houston, Houston, Texas, USA.}\ \ and\ \
Nengkun Yu \thanks{
Centre for Quantum Software and Information, Faculty of Engineering and Information Technology, University of Technology, Sydney, NSW 2007, Australia.}}

%\date{\today}

\maketitle

%%%%%%%%%%%%%%%%%%%%%%%%%%%%%%%%%%%%%%%%%%%%%%
% \section{Introductory paragraph} % referenced, 150 words
%%%%%%%%%%%%%%%%%%%%%%%%%%%%%%%%%%%%%%%%%%%%%%

\begin{abstract}
A state on a tripartite quantum system $\cH_{A}\ten \cH_{B}\ten\cH_{C} $ forms a Markov chain, i.e., quantum conditional independence, if it can be reconstructed from its marginal on $\cH_{A}\ten \cH_{B}$ by a quantum operation from $ \cH_{B}$ to $ \cH_{B}\ten\cH_{C} $ via the famous Petz map: a quantum Markov chain $\rho_{ABC}$ satisfies $\rho_{ABC}=\rho_{BC}^{1/2}(\rho_B^{-1/2}\rho_{AB}\rho_B^{-1/2}\otimes id_C)\rho_{BC}^{1/2}$.

In this paper, we study the robustness of the Petz map for different metrics, i.e., the closeness of marginals implies the closeness of the Petz map outcomes. The robustness results are dimension-independent for infidelity $\delta$ and trace distance $\epsilon$. The applications of robustness results are
\begin{itemize}
\item The sample complexity of quantum Markov chain tomography, i.e., how many copies of an unknown quantum Markov chain are necessary and sufficient to determine the state, is $\tilde{\Theta}(\frac{(d_A^2+d_C^2)d_B^2}{\delta})$, and $\tilde{\Theta}(\frac{(d_A^2+d_C^2)d_B^2}{\epsilon^2}) $.
\item The sample complexity of quantum Markov Chain certification, i.e., to certify whether a tripartite state equals a fixed given quantum Markov Chain $\sigma_{ABC}$ or at least $\delta$-far from $\sigma_{ABC}$, is ${\Theta}(\frac{(d_A+d_C)d_B}{\delta})$, and ${\Theta}(\frac{(d_A+d_C)d_B}{\epsilon^2})$.
\item $\tilde{O}(\frac{\min\{d_Ad_B^3d_C^3,d_A^3d_B^3d_C\}}{\epsilon^2})$ copies to test whether $\rho_{ABC}$ is a quantum Markov Chain or $\epsilon$-far from its Petz recovery state. The bound is better than the standard tomography of general $\rho_{ABC}$ with $d_A\gg d_Bd_C$. In other words, tomography is not always necessary for testing quantum conditional independence.
\end{itemize}

We generalized the tomography results into multipartite quantum system $\otimes_{i=1}^n \cH_i$ by showing $\tilde{O}(\frac{n^2\max_{i} \{d_i^2d_{i+1}^2\}}{\delta})$ copies for infidelity $\delta$ are enough for $n$-partite quantum Markov chain tomography with $d_i$ being the dimension of the $i$-th subsystem.

We also prove the continuity of the Petz map for general quantum channels in $\ell_2$ distance, which may be of independent interest.

\end{abstract}

%\pacs{03.65.Wj, 03.67.-a}

\section{Introduction}
Estimating how many copies of an unknown quantum state are needed to identify the quantum state is a fundamental question in physics experiments. Quantum property testing provides a primary tool for achieving reliable control of quantum devices. Various quantum property testing problems occur during the preparation of an experimental setup \cite{MdW13}, such as quantum state certification--whether a state is close to a target state, purity--whether a state is pure, and entanglement verification--whether a state is entangled.

Quantum property testing is the natural extension of the classical distribution property testing problem, which is a fundamental and active research topic in statistics. The study of quantum property testing has attracted much attention recently \cite{MdW13,NIPS2018_8111,10.1145/3313276.3316378,doi:10.1137/18M120275X,10.1145/3406325.3451109,https://doi.org/10.48550/arxiv.2009.04610,10.1007/978-3-030-61792-9_30,https://doi.org/10.48550/arxiv.2206.05265}.
Among many scenarios of quantum property testing, quantum state tomography is a remarkable and central problem.
Quantum state tomography aims to reconstruct the quantum states of physical systems. For an unknown $d$-dimensional mixed quantum state $\rho$ with a rank less than $r$, how many copies of $\rho$ are necessary and sufficient to generate an output of a good approximation of ${\rho}$ with high probability? Provided general joint measurements are allowed, i.e., measurements on $\rho^{\otimes n}$ for any $n$, the sample complexity of state tomography is $\Theta(\frac{d^2}{\epsilon^2})$ for trace distance $\epsilon$, or $\tilde{\Theta}(\frac{dr}{\delta})$ for infidelity $\delta$ (see \cite{HHJ+16,OW16,OW17}). %For independent measurements on each copy of $\rho$,  a sequence of work showed that $\Theta(\frac{d^2 r}{\epsilon^2})$ copies are optimal for quantum state tomography, and \cite{https://doi.org/10.48550/arxiv.2206.05265} shows that $\Omega(\frac{d^3}{\epsilon^2})$ copies are also necessary for general quantum state tomography using adaptive copy-wise measurements.

Quantum state certification, or general quantum identity testing, has also been actively studied \cite{PhysRevLett.106.230501,PhysRevLett.107.210404,AGKE15,9318006,yu2019quantum,fanizza2021testing,https://doi.org/10.48550/arxiv.2009.11518,OW15}. This problem asks whether an unknown state is identical to the other state, known or unknown. \cite{BOW17} showed the sample complexity of the quantum identity testing (and state certification problem) is $\Theta(\frac{d}{\epsilon^2})$.

For many-body quantum systems, the state space dimension grows exponentially concerning the number of parties. Such growth is a fundamental challenge in performing quantum property testing. On the other hand, the quantum states of physical interest usually have special properties or are within a specific structure. Among all the quantum states of physical interest, quantum Markov chains are an important class. A quantum state $\rho_{ABC}$ is a quantum Markov chain if it can be reconstructed from its marginal $\rho_{AB}$ by a quantum operation from $\cH_{B}$ to $ \cH_{B}\ten\cH_{C}$. Due to the importance of quantum Markov Chains, several equivalence conditions have been characterized.
\begin{itemize}
\item $\rho_{ABC}$ is a quantum Markov chain, i.e., there exists a quantum operation $\mathcal{N}_{B\mapsto BC}$ from the $\cH_B$ to
the $\cH_{B\otimes C}$ system such that
$\rho_{ABC} = \mathcal{N}_{B\mapsto BC}(\rho_{AB})$.
\item $\rho_{ABC}$ can be reconstructed via the Petz map \cite{Petz_1996}, i.e., \[\rho_{ABC}=\rho_{BC}^{1/2}(\rho_B^{-1/2}\rho_{AB}\rho_B^{-1/2}\otimes id_C)\rho_{BC}^{1/2}\ .\]
\item The quantum conditional mutual information of $\rho_{ABC}$ is 0, i.e. \cite{Petz_1996}, \[I(A:C|B):=S(AB)_{\rho}+S(BC)_{\rho}-S(B)_{\rho}-S(ABC)_{\rho}=0\ .\] Here, $S(\rho) = -\tr(\rho \log_2 \rho)$ is the von Neumann entropy.
\item $\rho_{ABC}=\oplus_k p_k \rho_{AB_{L,k}}\otimes \rho_{CB_{R,k}}$
for a probability distribution $\{p_{k}\}$ and states $\rho_{AB_{L,k}}\in \cH_{A}\otimes\cH_{B_{L,k}}$ and $\rho_{CB_{R,k}}\in \cH_{C}\otimes\cH_{B_{L,k}}$ where $\cH_B =\oplus_k (\cH_{B_{L,k}}\otimes \cH_{B_{R,k}})$ (with orthogonal subspaces
$ \cH_{B_{L,k}}\otimes \cH_{B_{R,k}}$) \cite{Hayden_2004}.
\end{itemize}

The non-negativity of quantum conditional mutual information $I(A: C|B) \geq 0$, also known as strong subadditivity of the von Neumann entropy, is a highly
non-trivial theorem
\cite{doi:10.1063/1.1666274}. It plays a central role in quantum information theory.
Recently, applications of the quantum conditional mutual information have been found in
new areas of computer science and physics, including communication and information complexity
(see \cite{10.1145/2746539.2746613,doi:10.1137/16M1061400,10.1145/3055399.3055401} and references therein), de Finetti type theorems \cite{10.1145/2488608.2488719,10.1145/2488608.2488718} and also the study of quantum many-body
systems \cite{kim2013}.

It is natural to study the robustness of the quantum Markov Chains. Many efforts have been devoted to exploring the robustness of quantum Markov Chain in terms of quantum conditional mutual information $I(A: C|B)$. Unlike the classical probability distribution, there is a tripartite quantum state far from the quantum Markov Chain while having small conditional mutual information \cite{Ibinson_2007}.
A recent breakthrough from Fawzi and Renner
shows small conditional mutual information implies high fidelity recovery from bipartite reductions
\cite{Fawzi_2015}. On the other hand, a large conditional mutual information does not imply bad recovery \cite{Sutter_2018}. The relation between quantum conditional mutual information and fidelity of recovery has received lots of attention \cite{Brand_o_2015,Li_2018,Berta_2016,Junge_2018}.

\subsection{Our results}

One of the fundamental questions regarding quantum Markov Chain is then the following:

\vspace{1mm}\emph{For a quantum Markov Chain $\rho_{ABC}$ of system $\cH_{A}\ten \cH_{B}\ten\cH_{C} $, how many copies $n$ of $\rho_{ABC}$ are necessary and sufficient to output an estimate $\hat\rho$ with
expected trace distance $\epsilon$ (or infidelity $\delta$) to the true state $\rho_{ABC}$? }

\vspace{1mm}
Consider a simple quantum Markov Chain $\rho_{ABC}=\rho_{AB}\ten I_C/d_C$. To obtain a good estimation of $\rho_{ABC}$, one must have a good estimation of $\rho_{AB}$. Similarly, one must get a good estimation of $\rho_{BC}$ by considering quantum Markov Chains of the form $\rho_{ABC}=I_A/d_A\ten \rho_{BC}$. Intuitively, a quantum Markov Chain of system $\cH_{A}\ten \cH_{B}\ten\cH_{C} $ is completely determined by its marginal of subsystems $\cH_{A}\ten \cH_{B} $ and system $\cH_{B}\ten\cH_{C}$ via the Petz map
\begin{align}
\rho_{BC}^{1/2}(\rho_B^{-1/2}\rho_{AB}\rho_B^{-1/2}\otimes id_C)\rho_{BC}^{1/2}.
\end{align}
(One can switch the role of the system $AB$ and $BC$ here.)
If we can learn $\rho_{AB}$ and $\rho_{BC}$ \emph{exactly}, we can \emph{fully} recover $\rho_{ABC}$.
It is reasonable to conjecture that $\Theta(d_A^2d_B^2+d_B^2d_C^2)$ copies are necessary and sufficient.

However, with finite copies of $\rho$, one can not learn $\rho$ exactly, but only up to some precision. On the other hand, the Petz map involves $\rho^{-1/2}$, which is unbounded and not Lipschitz continuous for $\rho$ in general. This obstacle motivates us to study the robustness of the Petz map concerning $\rho_{AB}$ and $\rho_{BC}$.

The accuracy is usually measured in the following two forms.
The fidelity of two quantum states $\rho,\sigma$ is
$F(\rho,\sigma) := \tr\sqrt{\sqrt\rho ~\sigma \sqrt\rho}$, then
the ``infidelity'' is $1-F$, represented by $\delta$, and
their trace distance is $\frac 1 2 \|\rho-\sigma\|_1$, represented by $\epsilon$.
These are related by~\cite{Nielsen:2011:QCQ:1972505}.
\begin{align}
\delta \leq \epsilon \leq \sqrt{1-(1-\delta)^2}=\sqrt{2\delta-\delta^2}.
\label{eq:FuchsGraaf}
\end{align}

Our results on the robustness of the Petz map enable us to confirm the correctness of the above intuition. More precisely, we show
\begin{thm} \label{main}
The sample complexity of quantum Markov Chain tomography of system $\cH_{A}\ten \cH_{B}\ten\cH_{C} $ is
$\tilde{\Theta}(\frac{(d_A^2+d_C^2)d_B^2}{\delta})$ in term of infidelity $\delta$, and $\tilde{\Theta}(\frac{(d_A^2+d_C^2)d_B^2}{\epsilon^2}) $ in term of trace distance $\epsilon$.
\end{thm}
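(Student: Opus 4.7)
The plan is to prove matching upper and lower bounds. For the upper bound, I would exploit the fact that a Markov chain $\rho_{ABC}$ is completely determined by its two overlapping bipartite marginals via the Petz map. Given $n$ copies, split them in half; from one half, trace out $C$ to obtain $n/2$ copies of $\rho_{AB}$ and run a standard full-rank state tomography algorithm (Haah et al.~\cite{HHJ+16}, O'Donnell--Wright~\cite{OW16,OW17}) to produce an estimate $\hat\rho_{AB}$. From the other half, obtain $\hat\rho_{BC}$ analogously. Then form the Petz recovery
\[
\hat\rho_{ABC} \;:=\; \hat\rho_{BC}^{1/2}\bigl(\hat\rho_{B}^{-1/2}\hat\rho_{AB}\hat\rho_{B}^{-1/2}\otimes \id_C\bigr)\hat\rho_{BC}^{1/2},
\]
with $\hat\rho_B = \tr_A\hat\rho_{AB}$. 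This returns $\rho_{ABC}$ exactly when fed the true marginals, since $\rho_{ABC}$ is assumed Markov.

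To convert closeness of marginals into closeness of the recovered state, I would invoke the dimension-independent robustness of the Petz map promised earlier in the paper: if $1-F(\hat\rho_{AB},\rho_{AB})\le \delta_1$ and $1-F(\hat\rho_{BC},\rho_{BC})\le \delta_2$, then $1-F(\hat\rho_{ABC},\rho_{ABC})\le C(\delta_1+\delta_2)$ for an absolute constant $C$. Setting $\delta_1=\delta_2=\Theta(\delta)$ and using the marginal sample complexities $\tilde{O}(d_A^2 d_B^2/\delta_1)$ and $\tilde{O}(d_B^2 d_C^2/\delta_2)$ yields $\tilde{O}((d_A^2+d_C^2)d_B^2/\delta)$ copies for infidelity $\delta$; the trace-distance version follows from~\eqref{eq:FuchsGraaf}.

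For the lower bound, I would reduce ordinary bipartite tomography to Markov-chain tomography. For an arbitrary state $\rho_{AB}$ on $\cH_A\otimes\cH_B$, define $\rho_{ABC}:=\rho_{AB}\otimes(I_C/d_C)$; this is a Markov chain, and any algorithm estimating $\rho_{ABC}$ to infidelity $\delta$ gives (by tracing out $C$ and using monotonicity of fidelity under partial trace) an estimate of $\rho_{AB}$ to infidelity at most $\delta$. The $\tilde{\Omega}(d_A^2 d_B^2/\delta)$ lower bound for full-rank tomography in dimension $d_A d_B$ then applies. The symmetric construction $\rho_{ABC}:=(I_A/d_A)\otimes\rho_{BC}$ gives $\tilde{\Omega}(d_B^2 d_C^2/\delta)$, and together they produce the claimed lower bound.

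The main obstacle is the Petz-map robustness step. The map involves $\rho_B^{-1/2}$ (which blows up on low-rank inputs) and two occurrences of $(\cdot)^{1/2}$ (which is only $\tfrac12$-Hölder, not Lipschitz), so naive operator-perturbation bounds either fail outright or bring in dimensional factors that would destroy the tight $(d_A^2+d_C^2)d_B^2/\delta$ scaling. The real content lies in showing that for the Petz map specifically, and with the Markov assumption on the target $\rho_{ABC}$, these singular factors cancel so that the map is Lipschitz in infidelity with a dimension-free constant — this is exactly the robustness theorem established earlier in the paper, and plugging it into the two-stage scheme above is what makes the sample-optimal bound go through.
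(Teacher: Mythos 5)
Your proposal is correct and follows essentially the same route as the paper: tomograph the two bipartite marginals with the estimator of Haah et al., feed the estimates into the Petz map, and control the error via the dimension-free robustness bound (the paper's Theorem \ref{petz:fidelity}, whose $2(\delta_1^{1/2}+\delta_2^{1/2}+\delta_3^{1/2})^2$ bound is equivalent up to constants to your $C(\delta_1+\delta_2)$ form), with the lower bound obtained from the same reductions $\rho_{AB}\otimes I_C/d_C$ and $I_A/d_A\otimes\rho_{BC}$. The only cosmetic difference is your choice $\hat\rho_B=\tr_A\hat\rho_{AB}$ versus the paper's $\hat\rho_B=\tr_C\hat\rho_{BC}$, both of which are admissible by the symmetry noted in the remark following Lemma \ref{distance}.
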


By employing the robustness results, we can settle the sample complexity of the quantum Markov Chain certification problem.

\vspace{1mm}\emph{We have two quantum Markov Chains $\sigma$ and $\rho$ at $\cH_{A}\ten \cH_{B}\ten\cH_{C} $ where $\sigma$ is known. The quantum Markov Chain certification problem asks how many copies are sufficient and necessary to distinguish the case of $\rho=\sigma$ and the case where the infidelity (or trace distance) between $\rho$ and $\sigma$ is at least $\delta$ (or $\epsilon$)?
}
\vspace{1mm}

More precisely, we show
\begin{thm} \label{main-2}
The sample complexity of quantum Markov Chain certification is ${\Theta}(\frac{(d_A+d_C)d_B}{\delta})$ for indfidelity $\delta$, and ${\Theta}(\frac{(d_A+d_C)d_B}{\epsilon^2})$ for trace distance $\epsilon$.
\end{thm}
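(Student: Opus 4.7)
The guiding principle is that a quantum Markov chain $\rho_{ABC}$ is completely determined by its two marginals $\rho_{AB}$ and $\rho_{BC}$ through the Petz recovery map. Consequently, certifying $\rho$ against a fixed Markov chain $\sigma$ should reduce, up to dimension-independent losses, to separately certifying $\rho_{AB}$ against $\sigma_{AB}$ and $\rho_{BC}$ against $\sigma_{BC}$. Since the sample complexity of state certification on a $d$-dimensional system against a fixed target is $\Theta(d/\epsilon^2)$ for trace distance by \cite{BOW17} (and $\Theta(d/\delta)$ for infidelity via (\ref{eq:FuchsGraaf})), the two marginals contribute $d_A d_B$ and $d_B d_C$ respectively, yielding the claimed $(d_A + d_C) d_B$ scaling in both metrics.

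For the upper bound, I would run the certification protocol of \cite{BOW17} twice: once on copies of the $AB$-marginal against $\sigma_{AB}$ with tolerance $\epsilon_1 = c\epsilon$, and once on copies of the $BC$-marginal against $\sigma_{BC}$ with tolerance $\epsilon_2 = c\epsilon$, for a suitable absolute constant $c$. Partial tracing one copy of $\rho_{ABC}$ yields an independent copy of each marginal, so $n$ copies of the full state give $n$ copies of each marginal for free. Completeness is immediate when $\rho=\sigma$. For soundness, if $\tfrac12\|\rho-\sigma\|_1 \geq \epsilon$, I would invoke the contrapositive of the paper's earlier dimension-independent robustness result for the Petz map: since both $\rho$ and $\sigma$ equal the Petz reconstruction of their respective marginals, robustness forces $\max\{\|\rho_{AB}-\sigma_{AB}\|_1,\|\rho_{BC}-\sigma_{BC}\|_1\} = \Omega(\epsilon)$, so the appropriate subsystem test rejects with high probability. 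The total sample complexity is $O(d_A d_B/\epsilon^2) + O(d_B d_C/\epsilon^2) = O((d_A+d_C)d_B/\epsilon^2)$.

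For the lower bound, I would reduce from the matching $\Omega(d/\epsilon^2)$ state certification lower bound of \cite{BOW17}. Consider Markov chains of the product form $\rho_{ABC} = \rho_{AB} \otimes (I_C/d_C)$; every such state is automatically a Markov chain, and both trace distance and fidelity between two such states equal the trace distance and fidelity between their $AB$-marginals. A tester that distinguishes $\rho=\sigma$ from $\tfrac12\|\rho-\sigma\|_1\geq\epsilon$ in this product family therefore induces, with the same sample count, a tester for $d_A d_B$-dimensional certification of $\rho_{AB}$, forcing $\Omega(d_A d_B/\epsilon^2)$. Taking $\rho_{ABC} = (I_A/d_A) \otimes \rho_{BC}$ yields $\Omega(d_B d_C/\epsilon^2)$ by the symmetric argument, and combining gives $\Omega((d_A+d_C)d_B/\epsilon^2)$. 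The infidelity versions of both bounds follow from the two-sided translation (\ref{eq:FuchsGraaf}).

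The main obstacle is that the upper-bound argument is only as tight as the robustness of the Petz map allows: any hidden dependence on $d_A, d_B$, or $d_C$ in the robustness constant would inflate the tolerances $\epsilon_1, \epsilon_2$ and destroy the $(d_A+d_C)d_B/\epsilon^2$ scaling. This is precisely why the \emph{dimension-independent} robustness established earlier in the paper is essential, and invoking any weaker form of robustness (for instance, one incurring a $\log d$ factor) would already yield only the weaker tomography-style bound $\tilde{\Theta}((d_A^2+d_C^2)d_B^2/\epsilon^2)$ of Theorem~\ref{main}. A secondary technical check is that the Fuchs--van de Graaf translation between $\delta$ and $\epsilon$ incurs only constant-factor loss in both directions, which holds on the relevant regime $\delta, \epsilon \in (0,1)$.
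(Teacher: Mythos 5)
Your overall architecture matches the paper's: certify the two marginals against $\sigma_{AB}$ and $\sigma_{BC}$ with Lemma~\ref{BOW17}, and obtain the lower bound from the product families $\rho_{AB}\otimes I_C/d_C$ and $I_A/d_A\otimes\rho_{BC}$. Your lower bound is essentially identical to the paper's and is fine. The problem is the soundness step of your upper bound. You assert that $\tfrac12\norm{\rho-\sigma}{1}\ge\epsilon$ together with ``dimension-independent robustness'' forces $\max\{\norm{\rho_{AB}-\sigma_{AB}}{1},\norm{\rho_{BC}-\sigma_{BC}}{1}\}=\Omega(\epsilon)$. But the trace-distance robustness the paper actually proves (Theorem~\ref{petz:distance}, Corollary~\ref{cor:L2}) is $\norm{\rho_{ABC}-\sigma_{ABC}}{1}\le 2(\eps_1^{1/2}+\eps_2^{1/2}+\eps_3^{1/2})$, whose contrapositive only gives $\max_i\eps_i=\Omega(\epsilon^2)$. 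So with marginal tolerances $c\epsilon$ both subsystem tests can accept while the full states are $\epsilon$-far, and shrinking the tolerances to $\Theta(\epsilon^2)$ in trace distance would cost $\Theta(d/\epsilon^4)$, not $\Theta(d/\epsilon^2)$. Your ``secondary technical check'' is also wrong: the translation \eqref{eq:FuchsGraaf} between $\delta$ and $\epsilon$ is quadratic ($\delta\le\epsilon\le\sqrt{2\delta-\delta^2}$), not a constant-factor equivalence.

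The paper's route avoids this by certifying the marginals in the \emph{infidelity} metric, where the robustness bound is linear up to constants: Corollary~\ref{qmc:fidelity} gives $F(\rho_{ABC},\sigma_{ABC})\ge 1-2(\sqrt{\delta_1}+\sqrt{\delta_2}+\sqrt{\delta_3})^2$, so taking each $\delta_i=0.01\delta$ yields $F\ge 1-0.18\delta$; hence if $\rho_{ABC}$ is $\delta$-far in infidelity, some marginal must have infidelity at least $0.01\delta$, which the $\Theta(d/\delta)$ infidelity certifier of Lemma~\ref{BOW17} detects using $\Theta((d_A+d_C)d_B/\delta)$ samples. The trace-distance statement then follows by setting $\delta=\epsilon^2/2$: the quadratic loss in the metric translation is exactly absorbed by the quadratic-versus-linear gap between $\Theta(d/\epsilon^2)$ and $\Theta(d/\delta)$. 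You should restructure your upper bound accordingly; as written, the protocol with tolerance $c\epsilon$ in trace distance is not merely unproven but can fail.
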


Another application of the robustness results for the Petz map is the quantum Markov Chain certification problem.

\vspace{1mm}\emph{For tripartite quantum state $\rho_{ABC}$ of system $\cH_{A}\ten \cH_{B}\ten\cH_{C} $, how many copies are sufficient and necessary to distinguish the following two cases: $\rho_{ABC}$ is a quantum Markov Chain, and $\rho_{ABC}$ is $\epsilon$-far from its Petz recovery state $\rho_{BC}^{1/2}(\rho_B^{-1/2}\rho_{AB}\rho_B^{-1/2}\otimes I_C)\rho_{BC}^{1/2}$.
}
\vspace{1mm}

Here, we choose the Petz recovery state but not general quantum Markov Chains because of the complicated structure of quantum Markov Chains. According to the literature, the closeness to a recovery state is more suitable than the distance to quantum Markov Chains, and the Petz map provides the most natural recovery state. We choose the closeness to a recovery state because bad recovery implies large conditional mutual information, but the converse is not valid in general \cite{Sutter_2018}.

In the classical distribution version, the Petz recovery state is always a Markov Chain, i.e. distribution with $0$ conditional mutual information. \cite{Canonne:2018:TCI:3188745.3188756} gives the first conditional independence tester with {\em sublinear} sample complexity. The method is used in \cite{yu2019quantum} to provide a subquadratic tester for the quantum Markov Chain when $B$ is classical.

The following result provides a tester for general quantum Markov Chains.
\begin{thm} \label{main-3}
$\tilde{O}(\frac{\min\{d_Ad_B^3d_C^3,d_A^3d_B^3d_C\}}{\epsilon^2})$ copies are sufficient to certify whether
$\rho_{ABC}$ is a quantum Markov Chain or $\epsilon$-far from its Petz recovery state for trace distance $\epsilon$.
\end{thm}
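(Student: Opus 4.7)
The plan is a two-phase procedure: a tomography phase for one marginal, followed by a test phase against the approximate Petz recovery state. The minimum in the statement reflects the two symmetric choices -- learning $\rho_{BC}$ and recovering $C$ from $B$, or learning $\rho_{AB}$ and recovering $A$ from $B$; I describe the first, and the second is identical after swapping $A \leftrightarrow C$.

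In Phase 1 I would spend $n_1$ of the $n$ copies on standard trace-distance tomography of the marginal $\rho_{BC}$ (obtained by tracing out $A$), obtaining $\hat\rho_{BC}$ with $\|\hat\rho_{BC}-\rho_{BC}\|_1 \leq \eta$. By standard state tomography \cite{HHJ+16,OW16}, this costs $n_1 = \tilde{O}(d_B^2 d_C^2 / \eta^2)$ copies. I then form the approximate Petz channel $\hat{\mathcal{N}}_{B \to BC}(X) = \hat\rho_{BC}^{1/2}(\hat\rho_B^{-1/2} X \hat\rho_B^{-1/2} \otimes I_C)\hat\rho_{BC}^{1/2}$. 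The dimension-independent trace-distance robustness of the Petz map proved earlier in this paper implies that for any $\rho_{AB}$, $\|(\id_A \otimes \mathcal{N})(\rho_{AB}) - (\id_A \otimes \hat{\mathcal{N}})(\rho_{AB})\|_1 \leq C\eta$ for a universal constant $C$ independent of $d_A,d_B,d_C$.

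In Phase 2 I would use the remaining $n_2$ copies to decide between ``$\rho_{ABC}$ is a Markov chain'' (equivalently $\rho_{ABC} = (\id_A \otimes \mathcal{N})(\rho_{AB})$) and ``$\rho_{ABC}$ is $\epsilon$-far from its Petz recovery''. The key observation is that although the target $(\id_A \otimes \hat{\mathcal{N}})(\rho_{AB})$ depends on the unknown marginal $\rho_{AB}$, copies of it can be manufactured from copies of $\rho_{ABC}$ by applying the known channel $\id_A \otimes (\hat{\mathcal{N}} \circ \tr_C)$. Pairing untouched copies with processed ones, I would run a SWAP-based Hilbert--Schmidt test that estimates $\|\rho_{ABC} - (\id_A \otimes \hat{\mathcal{N}})(\rho_{AB})\|_F^2$ from two-copy statistics, and convert between $\|\cdot\|_1$ and $\|\cdot\|_F$ via $\|X\|_1 \leq \sqrt{d_A d_B d_C}\,\|X\|_F$ to make the Markov/$\epsilon$-far decision.

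Choosing $\eta = \Theta(\epsilon / \sqrt{d_A d_B d_C})$ absorbs the Petz-induced perturbation $C\eta$ into the Frobenius resolution of the test while keeping the Markov/$\epsilon$-far gap open, and yields $n_1 = \tilde{O}(d_B^2 d_C^2 \cdot d_A d_B d_C / \epsilon^2) = \tilde{O}(d_A d_B^3 d_C^3 / \epsilon^2)$, which dominates $n_2$. The symmetric procedure (tomography of $\rho_{AB}$ and recovery of $A$ from $B$) yields $\tilde{O}(d_A^3 d_B^3 d_C / \epsilon^2)$; running whichever is cheaper produces the stated $\min$ bound. The main obstacle I anticipate is Phase 2: since the target $(\id_A \otimes \hat{\mathcal{N}})(\rho_{AB})$ is not a fixed known state, single-copy certification tools do not apply off the shelf, and one must combine two-copy SWAP-type statistics with the Petz robustness estimate from Phase 1. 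A related subtlety is the choice of $\eta$: fine enough that the Markov side remains distinguishable from the $\epsilon$-far side after the channel approximation, but not so fine that $n_1$ exceeds the claimed bound -- the balance $\eta \sim \epsilon / \sqrt{d_A d_B d_C}$ is precisely what produces the stated sample complexity.
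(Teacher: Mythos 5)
Your overall architecture matches the paper's: learn one marginal ($\rho_{BC}$ or, symmetrically, $\rho_{AB}$), manufacture copies of the approximate recovery state by applying the known channel $\hat{\mathcal{N}}\circ\tr_C$ to fresh copies of $\rho_{ABC}$, run the dimension-independent two-unknown-state $\ell_2$ tester of Lemma \ref{BOW17}, convert norms via $\norm{X}{1}\le\sqrt{d_Ad_Bd_C}\,\norm{X}{2}$, and take the minimum over the two directions. The gap is in your Phase~1 robustness claim. You assert that $\norm{\rho_{BC}-\hat\rho_{BC}}{1}\le\eta$ implies $\norm{(\id_A\ten\mathcal{N})(\rho_{AB})-(\id_A\ten\hat{\mathcal{N}})(\rho_{AB})}{1}\le C\eta$, attributing this to the paper's trace-distance robustness. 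What Theorem \ref{petz:distance} actually gives is $O(\sqrt{\eta})$, not $O(\eta)$, and a linear bound cannot hold in general: the Petz map contains $\rho_B^{-1/2}$, which is not Lipschitz in $\rho_B$ --- this is precisely the obstacle the robustness section is built around, and all of its trace-distance bounds carry square roots (linearity holds only in the metric $\norm{\rho^{1/2}-\sigma^{1/2}}{2}$). With the correct $O(\sqrt{\eta})$ perturbation, your choice $\eta=\Theta(\epsilon/\sqrt{d_Ad_Bd_C})$ produces a disturbance of order $\sqrt{\epsilon}/(d_Ad_Bd_C)^{1/4}$, which swamps the $\ell_2$ decision gap $\epsilon/\sqrt{d_Ad_Bd_C}$; forcing the disturbance down to that gap requires $\eta\sim\epsilon^2/(d_Ad_Bd_C)$ and inflates Phase~1 to $\tilde O(d_A^2d_B^4d_C^4/\epsilon^4)$, far above the claimed bound. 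Your arithmetic lands on the right answer only because of the unsupported linear claim.

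The repair is the paper's key move: perform the marginal tomography in \emph{infidelity} rather than trace distance. Lemma \ref{HHJ} achieves $1-F(\rho_{BC},\hat\rho_{BC})\le\delta$ with $\tilde O(d_B^2d_C^2/\delta)$ copies (cost scaling as $1/\delta$, not $1/\delta^2$), and Corollary \ref{cor:L2-new} converts infidelity $\delta$ of the marginal into an $\ell_2$ (indeed $\ell_1$) perturbation $8\sqrt{\delta}$ of the recovery state. Setting $\delta=\Theta(\epsilon^2/(d_Ad_Bd_C))$ then yields perturbation $O(\epsilon/\sqrt{d_Ad_Bd_C})$ at cost $\tilde O(d_B^2d_C^2\cdot d_Ad_Bd_C/\epsilon^2)=\tilde O(d_Ad_B^3d_C^3/\epsilon^2)$. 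The quadratic saving of fidelity tomography exactly cancels the square-root loss in the Petz robustness; your version pays the square-root loss without collecting that saving.
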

For quantum systems with $d_A\gg d_Bd_C$ or $d_C\gg d_Ad_B$, this is better than the standard full state tomography which uses $\tilde{\Theta}(\frac{d_A^2d_B^2d_C^2}{\epsilon^2})$ copies.

We generalized the tomography results into multipartite quantum Markov Chain of the system $\otimes_{i=1}^n \cH_i$. Here $\rho$ is called a quantum Markov Chain in $\otimes_{i=1}^n \cH_i$ if
\begin{align}
\rho=\mathcal{N}_{n-1}\circ\cdots\mathcal{N}_2 (\rho_{1,2})
\end{align}
where $\mathcal{N}_{i}: \cB(\cH_{i})\mapsto \cB(\cH_{i}\otimes\cH_{i+1})$ with form $\mathcal{N}_{i}(X)=\rho_{i,i+1}^{1/2}(\rho_i^{-1/2}X\rho_i^{-1/2}\otimes id_{i+1})\rho_{i,i+1}^{1/2}$, the Petz map defined of system $\cH_{i}\otimes\cH_{i+1}$ with $\rho_{s}$ denotes the reduced density matrix of system $s\subset\{1,2,\cdots,n\}$.

\begin{thm} \label{main-4}
$\tilde{O}(\frac{n^2\max_{i} \{d_i^2d_{i+1}^2\}}{\delta})$ copies are sufficient to tomography a quantum Markov Chain in term of infidelity $\delta$ with $d_i$ be the dimension of the $i$-th subsystem.
\end{thm}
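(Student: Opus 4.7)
The plan is to reduce $n$-partite Markov-chain tomography to tomography of the $n-1$ nearest-neighbor marginals $\rho_{i,i+1}$ followed by an iterated Petz reconstruction, controlling the accumulated error through the tripartite Petz robustness that underlies Theorem~\ref{main}.

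First, I would dedicate a separate batch of $\tilde{O}(n\max_i\{d_i^2d_{i+1}^2\}/\delta)$ copies of $\rho$ to each of the $n-1$ nearest-neighbor pairs $(i,i+1)$. Tracing out the remaining subsystems on its batch yields copies of $\rho_{i,i+1}$ on which an optimal state tomography procedure (e.g.\ \cite{HHJ+16,OW16,OW17}) produces an estimate $\hat\rho_{i,i+1}$ with $1-F(\hat\rho_{i,i+1},\rho_{i,i+1})\leq \delta_0:=c\delta/n$. A union bound over the $n-1$ tasks, absorbed into $\tilde{O}$, guarantees that all estimates satisfy this simultaneously with constant probability, bringing the total sample count to $\tilde{O}(n^2\max_i\{d_i^2d_{i+1}^2\}/\delta)$, as claimed.

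Next I would form $\hat\rho$ by the iterated Petz construction: set $\hat\rho^{(2)}:=\hat\rho_{1,2}$ and, for $k=2,\dots,n-1$,
\begin{equation*}
\hat\rho^{(k+1)}\;:=\;\hat\rho_{k,k+1}^{1/2}\bigl(\hat\rho_k^{-1/2}\,\hat\rho^{(k)}\,\hat\rho_k^{-1/2}\otimes\mathrm{id}_{k+1}\bigr)\hat\rho_{k,k+1}^{1/2},
\end{equation*}
with $\hat\rho_k$ the reduction of $\hat\rho_{k,k+1}$ to subsystem $k$. Viewing each step as a tripartite Petz recovery with $A=1\cdots k-1$, $B=k$, $C=k+1$, and writing
\begin{equation*}
\hat\rho^{(k+1)}-\rho_{1,\dots,k+1}=\bigl[\hat{\mathcal{N}}_k(\hat\rho^{(k)})-\hat{\mathcal{N}}_k(\rho_{1,\dots,k})\bigr]+\bigl[\hat{\mathcal{N}}_k(\rho_{1,\dots,k})-\mathcal{N}_k(\rho_{1,\dots,k})\bigr],
\end{equation*}
the first bracket is handled by monotonicity of the channel $\hat{\mathcal{N}}_k$, which transports only the previous-level error on $\hat\rho^{(k)}$, and the second bracket is precisely the object controlled by the dimension-independent tripartite Petz robustness, of size $O(\delta_0)$. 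Telescoping over the $n-2$ levels and adding the base tomography contribution yields total infidelity $O(n\delta_0)=O(\delta)$.

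The main obstacle is exactly the dimension-independence of the per-step bound: the $A$ subsystem at level $k$ has dimension $\prod_{j<k}d_j$, exponential in $n$, so any prefactor in $d_A$ would be fatal. The dimension-independent Petz robustness proved earlier in the paper, already the engine of Theorem~\ref{main}, is what closes the induction and yields the linear-in-$n$ error accumulation responsible for the stated $n^2$ sample scaling (versus the $n^3$ one would get from routing through trace distance via \eqref{eq:FuchsGraaf}). A secondary issue is ensuring marginal consistency between $\hat\rho_{k-1,k}$ and $\hat\rho_{k,k+1}$ on the shared subsystem $k$; this is absorbed by the same infidelity bounds, since both reductions are $\delta_0$-close to the true $\rho_k$.
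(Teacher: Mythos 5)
Your overall architecture (learn the nearest-neighbor marginals, reconstruct by iterated Petz maps, telescope the error level by level) is the same as the paper's, but your error accounting has a genuine flaw. The two-bracket decomposition you write down is a triangle-inequality argument, and infidelity is not a metric: what adds across the $n-1$ levels is not $1-F$ but a distance comparable to $\sqrt{1-F}$ (the paper telescopes $\|\rho^{1/2}-\sigma^{1/2}\|_2$, which Lemma \ref{f-1} shows is equivalent to $\sqrt{1-F(\rho,\sigma)}$ up to constants). Each level contributes $O(\sqrt{\delta_0})$ in that metric, so the accumulated distance is $O(n\sqrt{\delta_0})$ and the final infidelity is $O(n^2\delta_0)$, not the $O(n\delta_0)$ you assert. (Linear accumulation of infidelity does hold for tensor products by multiplicativity of fidelity, but no triangle-inequality route of the kind you describe can deliver it here.) Consequently the per-marginal target must be $\delta_0=\Theta(\delta/n^2)$, i.e.\ $\tilde{O}(n^2\max_i\{d_i^2d_{i+1}^2\}/\delta)$ copies per marginal, exactly as in the paper's choice $\delta_i=\delta/(8m^2)$.

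With that corrected target and your decision to dedicate a separate batch to each of the $n-1$ pairs, the total becomes $\tilde{O}(n^3\max_i\{d_i^2d_{i+1}^2\}/\delta)$, one factor of $n$ worse than claimed. The paper avoids this with an interleaving you are missing: the pairs $(1,2),(3,4),\dots$ are pairwise disjoint, so a single batch of copies serves all of their tomographies simultaneously (the restriction of each copy to disjoint subsystems furnishes a sample of every such marginal at once), and likewise for $(2,3),(4,5),\dots$; two batches of size $\tilde{O}(n^2\max_i\{d_i^2d_{i+1}^2\}/\delta)$ then suffice, recovering the stated bound. Your points about the dimension-independence of the per-step Petz robustness and about marginal consistency on the shared subsystem are correct and do match the role of Lemma \ref{distance} in the paper, but as written the proof both overclaims the error accumulation and undercounts the samples needed to fix it.
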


This could be exponentially better than standard tomography method of quantum system $\otimes_{i=1}^n \cH_i$ which cost $\tilde{O}(\frac{n^2\max_{i} \{\Pi_i d_i^2\}}{\delta})$ copies.

\section{Preliminaries}
\subsection{Basic quantum mechanics}
An isolated physical system is associated with a
Hilbert space, called the {\it state space}. A {\it pure state} of a
quantum system is a normalized vector in its state space, denoted by the Dirac notation $\ket{\varphi}$. A
{\it mixed state} is represented by a density operator on the state
space. Here, a density operator $\rho$ on $d$-dimensional Hilbert space $\cH$ is a
semi-definite positive linear operator such that $\tr(\rho)=1$.

The state space of a composed quantum system is the tensor product of the state spaces of its component systems.
Let $\cH_k$ be a Hilbert space with dimension $d_k$. One can define a Hilbert space $\bigotimes_{k=1}^{n}\cH_{k}$ as the tensor product of Hilbert spaces $\cH^{k}$.
The quantum state on the multipartite system $\bigotimes_{k=1}^{n}\cH_{k}$ is a semi-definite positive linear operator such that $\tr(\rho)=1$.

\subsection{Quantum measurement}

A positive-operator valued measure (POVM) is
described by a collection of matrices $\{M_i\}$ with $M_i\geq 0$ and
\begin{align*}\sum_{i}M_i=I_{d}.\end{align*}
If the state of a quantum system was $\rho$
immediately before measurement $\{M_i\}$ was performed,
the probability of that result $i$ recurring is
\begin{align*}p(i)=\tr(M_i\rho).\end{align*}

\subsection{Quantum operation}
Quantum operations describe a broad class of transformations a quantum mechanical system can undergo. A quantum operation ${\Phi}$ can be represented in an elegant form known as the operator-sum
representation
\begin{align*}
{\Phi}(\rho)=\sum_{i} A_i\rho A_i^{\dag}
\end{align*} with $\sum_{i} A_i^{\dag}A_i=I$. The adjoint map $\Phi^*$ of $\Phi$ is defined as
\begin{align*}
\Phi^*(X)=\sum_i A_i^{\dag} X A_i.
\end{align*}
$\Phi^*$ is a quantum operation iff $\sum_{i} A_i^{\dag}A_i=\sum_{i}A_i A_i^{\dag}=I$.

Any quantum operation $\Phi$ applied on system $\cH$ has a Stinespring Dilation, i.e., there exists an isometry $V:\cH\mapsto \cH\ten\cH_E$ such that
\begin{align*}
\Phi(\rho)=\tr_E(V\rho V^{\dag}),
\end{align*}
where $V$ is an isometry if $V^{\dag}V=I_{\cH}$.

\subsection{Norms and fidelity}
Recall that the Schatten $p$-norm of a general operator $A$ is defined as
\begin{align*}
||A||_p=(\mathrm{Tr}|A|^p)^{1/p}\ ,\ 1<p<\infty,
\end{align*}
where $|A|\equiv\sqrt{A^\dag A}$ is the positive square root of $A^\dag A$. We denote $S_p(\cH)$ for the Schatten $p$-class operator on the Hilbert space $\cH$. For $p=1$, $\norm{\cdot }{1}$ is the trace class norm and the case $p=2$ is Hilbert-Schmidt norm. We will also denote $\norm{A}{\infty}$ as the operator norm of $A$.

The $\ell_1$ distance (also called trace distance) between quantum states $\rho$ and $\sigma$ is defined as
\begin{align*}
||\rho-\sigma||_1\equiv\mathrm{Tr}|\rho-\sigma|
\end{align*} and their $\ell_2$ distance is defined as
\begin{align*}
||\rho-\sigma||_2=\sqrt{\mathrm{Tr}(\rho-\sigma)^2}.
\end{align*}
For $\rho,\sigma\in\mathcal{D}(\C^{d})$, we have the following relation between $\ell_1$ and $\ell_2$ distances,
\begin{align*}
||\rho-\sigma||_2\leq ||\rho-\sigma||_1\leq \sqrt{d}||\rho-\sigma||_2.
\end{align*}

We will also use fidelity $F(\cdot,\cdot)$ to measure the "closeness" of two quantum states:
\begin{align*}
F(\rho,\sigma)=\tr|\rho^{\frac12}\sigma^{\frac12}|.
\end{align*}

It is known that the following relation between fidelity and $\ell_1$ distance hold.
\begin{lemma} \cite{Audenaert_2008,OW17}
For density matrices $\rho$ and $\sigma$, we have
\begin{align*}
&F(\rho,\sigma)+\frac{ \norm{\rho-\sigma}{1}}{2}\geq 1,\\
&F^2(\rho,\sigma)+\frac{ \norm{\rho-\sigma}{1}^2}{4}\geq 1, \\
&F^2(\rho,\sigma)\leq \tr(\rho^{\frac12}\sigma^{\frac12})\le F(\rho,\sigma).
\end{align*}
\end{lemma}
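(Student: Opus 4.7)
The plan is to prove the three relations between fidelity $F=F(\rho,\sigma)$ and trace distance $\|\rho-\sigma\|_1$ using three distinct tools. I first note that, as displayed, the second line $F^2+\|\rho-\sigma\|_1^2/4\geq 1$ is the reverse of the standard Fuchs--van de Graaf inequality (e.g.\ $\rho=|0\rangle\langle 0|$, $\sigma=I/2$ gives the left side $3/4<1$); consistency with both the cited references and the preamble relation~\eqref{eq:FuchsGraaf} forces the intended direction to be $\leq 1$, which is what I will prove.

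For the first bound $F+\tfrac12\|\rho-\sigma\|_1\geq 1$, I use Fuchs's measurement characterization $F(\rho,\sigma)=\min_{\{M_i\}}\sum_i\sqrt{p_iq_i}$, where $\{M_i\}$ ranges over POVMs and $p_i=\tr(M_i\rho)$, $q_i=\tr(M_i\sigma)$. The pointwise bound $\sqrt{p_iq_i}\geq\min(p_i,q_i)$ summed over $i$ gives $\sum_i\sqrt{p_iq_i}\geq 1-\tfrac12\|p-q\|_1$, and monotonicity of total variation under measurement yields $\|p-q\|_1\leq\|\rho-\sigma\|_1$; minimizing over POVMs finishes. For the intended $F^2+\|\rho-\sigma\|_1^2/4\leq 1$, I invoke Uhlmann's theorem to pick purifications $|\psi\rangle,|\phi\rangle\in\cH\otimes\cK$ with $\langle\psi|\phi\rangle=F$: two pure states at overlap $F$ have trace distance exactly $2\sqrt{1-F^2}$, which cannot increase under the partial trace recovering $\rho,\sigma$, so $\|\rho-\sigma\|_1\leq 2\sqrt{1-F^2}$.

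For the sandwich $F^2\leq\tr(\sqrt\rho\sqrt\sigma)\leq F$ I use the Uhlmann identity $F=\|\sqrt\rho\sqrt\sigma\|_1$, obtained from the canonical purifications $(\sqrt\rho\otimes I)|\Omega\rangle$ and $(\sqrt\sigma\otimes U)|\Omega\rangle$ with $|\Omega\rangle=\sum_i|i\rangle|i\rangle$ and varying unitary $U$. The upper bound is then just H\"older: $\tr(\sqrt\rho\sqrt\sigma)=\tr(I\cdot\sqrt\rho\sqrt\sigma)\leq\|I\|_\infty\|\sqrt\rho\sqrt\sigma\|_1=F$. The lower bound $F^2\leq\tr(\sqrt\rho\sqrt\sigma)$ is the main obstacle: noting that the ``square-root'' purification overlap with $U=I$ equals $\langle\psi_0|\phi_0\rangle=\tr(\sqrt\rho\sqrt\sigma)$, I plan to extract $F^2\leq\langle\psi_0|\phi_0\rangle$ via a polar-decomposition argument on $\sqrt\rho\sqrt\sigma$ combined with a refined matrix Cauchy--Schwarz. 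The technical difficulty is that the naive matrix Cauchy--Schwarz yields only the dimension-dependent bound $F^2\leq d\,\tr(\rho\sigma)$, so the trace inequality of Audenaert~\cite{Audenaert_2008} (cited in the lemma) serves as the safety net.
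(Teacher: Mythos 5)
You are right that the second display, as printed, has its inequality reversed: your counterexample $\rho=\op{0}{0}$, $\sigma=I/2$ gives $F^2(\rho,\sigma)+\norm{\rho-\sigma}{1}^2/4=3/4<1$, and the printed direction also contradicts the paper's own relation~\eqref{eq:FuchsGraaf}, so the intended statement is indeed $F^2(\rho,\sigma)+\norm{\rho-\sigma}{1}^2/4\leq 1$. Note there is no proof in the paper to compare against: the lemma is quoted wholesale from \cite{Audenaert_2008,OW17}. Judged on its own, your treatment of the first inequality (Fuchs--Caves measurement characterization of $F$, the pointwise bound $\sqrt{p_iq_i}\geq\min(p_i,q_i)$, and monotonicity of trace distance under measurement) and of the corrected second inequality (Uhlmann purifications at overlap $F$, the exact pure-state value $2\sqrt{1-F^2}$ of the $\ell_1$ distance, contraction under partial trace) is correct and complete, as is the H\"older upper bound $\tr(\rho^{1/2}\sigma^{1/2})\leq\norm{\rho^{1/2}\sigma^{1/2}}{1}=F(\rho,\sigma)$.

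The genuine gap is the remaining inequality $F^2(\rho,\sigma)\leq\tr(\rho^{1/2}\sigma^{1/2})$, which is exactly the part of the lemma the paper leans on \cite{Audenaert_2008} for (it is the inequality that drives the proof of Lemma~\ref{f-1}). You never prove it: the ``polar-decomposition argument combined with a refined matrix Cauchy--Schwarz'' is only announced, you yourself concede that the naive Cauchy--Schwarz yields only the dimension-dependent $F^2\leq d\,\tr(\rho\sigma)$ (which moreover has the wrong right-hand side), and your fallback is to cite the reference, which is a citation, not a proof. The step can be closed elementarily with the three-factor H\"older inequality $\norm{XYZ}{1}\leq\norm{X}{4}\,\norm{Y}{2}\,\norm{Z}{4}$: since $\norm{\rho^{1/4}}{4}=(\tr\rho)^{1/4}=1=\norm{\sigma^{1/4}}{4}$,
\begin{align*}
F(\rho,\sigma)=\norm{\rho^{1/4}\,(\rho^{1/4}\sigma^{1/4})\,\sigma^{1/4}}{1}\leq \norm{\rho^{1/4}\sigma^{1/4}}{2},
\end{align*}
and squaring gives $F^2(\rho,\sigma)\leq\norm{\rho^{1/4}\sigma^{1/4}}{2}^2=\tr(\sigma^{1/4}\rho^{1/2}\sigma^{1/4})=\tr(\rho^{1/2}\sigma^{1/2})$. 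Substituting this for your unexecuted plan makes the whole lemma (with the corrected sign in the second line) fully proved.
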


\subsection{Previous results}
We will use the following results on quantum property testing.
\begin{lemma} [\cite{HHJ+16}]\label{HHJ}
There exists a quantum measurement $\{ M(\sigma) \}$, such that for any $d$-dimensional quantum state $\rho$, the measurement on $\rho^{\otimes n}$
outputs an estimate $\hat \rho $
such that $F(\hat \rho,\rho) \ge 1 - \delta$
with probability at least
\begin{align*}
1 - (n+1)^{3d^2} e^{-2 n \delta},
\end{align*}
which is at least $1-e^{-kd^2\ln(d/\delta)}$ for $n \geq \frac{100 kd^2 \ln (d/\delta)}{\delta}$.

On the other hand, to achieve trace distance error $\epsilon$ between $\hat \rho$ and $\rho$ with a probability of at least $2/3$,
\begin{align*}
n\geq \Omega(d^2/\epsilon^2)
\end{align*}

According to the relation between $\epsilon$ and $\delta$, we have
\begin{align*}
n\geq \Omega(dr/\delta),
\end{align*}
to achieve infidelity $\delta$.
\end{lemma}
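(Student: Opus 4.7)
The plan is to combine a Keyl-type measurement built from Schur--Weyl duality with concentration of the empirical Young diagram for the upper bound, and to match this with an information-theoretic lower bound. The starting point is that $\rho^{\otimes n}$ commutes with both the permutation action of $S_n$ and the diagonal conjugation by $U(d)$. Schur--Weyl duality then gives $(\bC^{d})^{\otimes n}\cong\bigoplus_{\lambda}V_{\lambda}\ten W_{\lambda}$, indexed by Young diagrams $\lambda\vdash n$ with $\ell(\lambda)\le d$, and $\rho^{\otimes n}=\bigoplus_{\lambda}\rho_{\lambda}\ten I_{W_{\lambda}}$, where $\rho_{\lambda}\in\cB(V_{\lambda})$ depends only on the spectrum and eigenbasis of $\rho$.

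First I would define Keyl's POVM as a two-stage measurement. Stage (1) is weak Schur sampling, which records $\lambda$ and collapses the state onto $V_{\lambda}\ten W_{\lambda}$. Stage (2) is, within the $U(d)$-irrep $V_{\lambda}$, the covariant orbit POVM $M(dU)=d_{\lambda}\,\pi_{\lambda}(U)|v_{\lambda}\>\<v_{\lambda}|\pi_{\lambda}(U)^{\dag}\,dU$, where $|v_{\lambda}\>$ is the highest-weight vector, $d_{\lambda}=\dim V_{\lambda}$, and $dU$ is Haar measure on $U(d)$. The reported density-matrix-valued outcome is $\hat\rho=U\,\mathrm{diag}(\lambda/n)\,U^{\dag}$, so the POVM is naturally labelled $\{M(\sigma)\}$ as in the statement.

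Next I would analyze $F(\hat\rho,\rho)$ by splitting the error into its spectrum part $\lambda/n$ versus $\mathrm{spec}(\rho)$ and its eigenbasis part $U$. For the spectrum one invokes the Keyl--Werner / Christandl--Mitchison large-deviation estimate $\Pr[\|\lambda/n-\mathrm{spec}(\rho)\|_{1}>\eta]\le \mathrm{poly}(n,d)\,e^{-cn\eta^{2}}$. For the eigenbasis one uses representation-theoretic overlap estimates around the highest-weight vector to get an expected squared angular error of order $d^{2}/n$. A union bound over Young diagrams together with a covering of the flag manifold $U(d)/T$ compounds the polynomial factors into $(n+1)^{3d^{2}}$, producing the stated tail $1-(n+1)^{3d^{2}}e^{-2n\delta}$; inverting this to guarantee success probability $1-e^{-kd^{2}\ln(d/\delta)}$ gives $n\ge 100kd^{2}\ln(d/\delta)/\delta$.

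The main obstacle, and the reason this step is nontrivial, is the representation-theoretic control of the eigenbasis estimator: the fidelity loss from Stage (2) must be bounded uniformly in the spectrum, and the argument becomes delicate near degenerate spectra, where individual eigenvectors are ill-defined and one must reason about eigenspaces rather than single eigenvectors. For the lower bounds I would use a packing argument: construct a net of $\exp(\Omega(d^{2}))$ density matrices pairwise $\Omega(\epsilon)$-far in $\norm{\cdot}{1}$ and apply Holevo's inequality to $I(\Theta;\hat\rho)\le n\log d$, forcing $n=\Omega(d^{2}/\epsilon^{2})$. Restricting the packing to rank-$r$ states and converting to infidelity via the Fuchs--van de Graaf inequality~\eqref{eq:FuchsGraaf} $\delta\le\epsilon$ then yields the $\Omega(dr/\delta)$ bound.
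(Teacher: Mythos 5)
First, note that the paper does not prove this lemma at all: it is quoted as a black-box import from \cite{HHJ+16} in the ``Previous results'' subsection, so there is no internal proof to compare your attempt against; your sketch has to be judged against the argument of the cited reference. Your overall architecture --- weak Schur sampling followed by Keyl's covariant POVM on the $U(d)$-irrep for the upper bound, and a packing-plus-Holevo argument for the lower bound --- is indeed the architecture of that reference. However, two of your key steps have genuine gaps. For the upper bound, the tail $1-(n+1)^{3d^{2}}e^{-2n\delta}$ is \emph{not} obtained by separately bounding a ``spectrum error'' and an ``eigenbasis error'' and then union-bounding over a net of the flag manifold: fidelity does not decompose additively into those two contributions, and, as you yourself observe, the eigenbasis analysis degenerates when the spectrum is nearly degenerate. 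The actual argument writes down the exact outcome density of the Keyl measurement at $\hat\rho=U\,\mathrm{diag}(\lambda/n)\,U^{\dagger}$ (an explicit formula involving $\dim W_{\lambda}$ and the highest-weight matrix elements of $\pi_{\lambda}(U)$) and bounds it pointwise by a quantity of the form $(n+1)^{\mathrm{poly}(d)}\exp\bigl(-n\,(1-F^{2}(\hat\rho,\rho))\bigr)$; the large deviation is thus controlled directly in terms of the fidelity of the \emph{full} estimate, which is precisely what makes the bound insensitive to spectral degeneracy and rank-adaptive. Your plan as stated would not produce the claimed exponent.

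For the lower bound, the inequality $I(\Theta;\hat\rho)\le n\log d$ is too weak: combined with $\log M=\Omega(d^{2})$ it yields only $n=\Omega(d^{2}/\log d)$, with no dependence on $\epsilon$ at all. To obtain $\Omega(d^{2}/\epsilon^{2})$ you must place the packing inside an $O(\epsilon)$-neighborhood of the maximally mixed state, $\rho_{\theta}=I/d+\epsilon\Delta_{\theta}$, and bound the Holevo quantity of the ensemble by $n\max_{\theta}S(\rho_{\theta}\|\bar\rho)=O(n\epsilon^{2})$ before applying Fano. The same construction restricted to rank-$r$ perturbations, together with $\epsilon\le\sqrt{2\delta}$ from \eqref{eq:FuchsGraaf}, then yields the stated $\Omega(dr/\delta)$.
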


\begin{lemma}[\cite{OW16}]\label{OW16}
$O(\frac{d^2}{\epsilon^2})$ copies of $\rho$ is enough to output an estimate $\hat \rho $
such that $||\hat \rho-\rho||_1<\epsilon$ with probability $2/3$.
\end{lemma}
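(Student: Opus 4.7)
My plan is to follow the Schur--Weyl duality approach of O'Donnell--Wright. A short partial derivation is already available from what precedes: Lemma~\ref{HHJ} achieves infidelity $\delta$ using $O(d^2\log(d/\delta)/\delta)$ copies, so setting $\delta = \epsilon^2/2$ and applying the Fuchs--van de Graaf inequality~\eqref{eq:FuchsGraaf} gives trace distance $\epsilon$ using $O(d^2\log(d/\epsilon)/\epsilon^2)$ copies. This proves the claim up to a logarithmic factor; removing that factor is the substance of the OW16 bound and requires the direct argument outlined below.

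For the tight $O(d^2/\epsilon^2)$ bound, I would first apply Schur--Weyl duality to decompose
\begin{align*}
(\mathbb{C}^d)^{\otimes n} \,\cong\, \bigoplus_{\lambda \vdash n,\, \ell(\lambda)\le d} V_\lambda^{U(d)} \otimes V_\lambda^{S_n},
\end{align*}
and perform \emph{weak Schur sampling} on $\rho^{\otimes n}$, obtaining a random Young diagram $\lambda$ drawn from the Schur--Weyl distribution with parameter $\operatorname{spec}(\rho)$. Known concentration results (Alicki--Rudnicki--Sadowski, Keyl, Hayashi) show that the normalized diagram $\bar\lambda = \lambda/n$ lies within $\ell_1$-distance $O(d/\sqrt{n})$ of the sorted spectrum of $\rho$. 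Conditioned on $\lambda$, I would apply the $U(d)$-covariant Keyl POVM on the $V_\lambda^{U(d)}$ factor to produce a unitary $\hat U$ estimating the eigenbasis, and output $\hat\rho = \hat U\,\operatorname{diag}(\bar\lambda)\,\hat U^\dagger$. A joint analysis of the spectrum and basis errors should yield
\begin{align*}
\mathbb{E}\,\|\hat\rho - \rho\|_1 \,\lesssim\, \frac{d}{\sqrt{n}},
\end{align*}
after which setting $n = C d^2/\epsilon^2$ and applying Markov's inequality deliver the stated $2/3$ probability guarantee.

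The main obstacle is pushing the concentration analysis to the sharp $d/\sqrt{n}$ rate without logarithmic losses. A naive union bound over spectrum estimation and eigenbasis rotation introduces $\sqrt{\log d}$ factors and only recovers the weaker $O(d^2 \log d /\epsilon^2)$ rate already produced by the Lemma~\ref{HHJ} plus Fuchs--van de Graaf route. Eliminating the logarithm requires either the original Schur-polynomial large-deviation estimates of OW16 --- which rely on sharp asymptotics for $\dim V_\lambda^{U(d)}$ and $\dim V_\lambda^{S_n}$ as $\bar\lambda$ deviates from $\operatorname{spec}(\rho)$ --- or a subsequent simpler randomized-rounding argument applied to the Keyl estimator. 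I would attempt the randomized-rounding route first, since it bypasses the most delicate representation-theoretic asymptotics while still producing the optimal scaling.
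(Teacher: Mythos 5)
The paper does not prove this lemma at all: it is imported as a black box from \cite{OW16}, so there is no internal argument to compare yours against. Your outline is, in substance, a correct identification of how \cite{OW16} proceeds --- weak Schur sampling of $\rho^{\otimes n}$ to estimate the spectrum via the normalized Young diagram $\bar\lambda$, followed by Keyl's $U(d)$-covariant POVM on the $V_\lambda^{U(d)}$ factor to estimate the eigenbasis --- and you correctly observe that the cheap route through Lemma~\ref{HHJ} plus the relation \eqref{eq:FuchsGraaf} only yields $O(d^2\log(d/\epsilon)/\epsilon^2)$, which misses the stated bound by a logarithm.

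As a proof, however, the proposal has a genuine gap, and you have located it yourself: the single inequality $\mathbb{E}\|\hat\rho-\rho\|_1\lesssim d/\sqrt{n}$ \emph{is} the theorem, and nothing in the sketch establishes it. Combining a concentration bound for $\bar\lambda$ around $\mathrm{spec}(\rho)$ with a separate error bound for the basis estimate, as you propose, is exactly the step that produces spurious logarithmic factors; the actual argument in \cite{OW16} takes no union bound at all, but computes $\mathbb{E}\,\tr(\hat\rho\,\rho)$ and $\mathbb{E}\,\tr(\hat\rho^2)$ in closed form using representation-theoretic identities for the Keyl measurement, deduces $\mathbb{E}\|\hat\rho-\rho\|_2^2\le O(d/n)$, and only then passes to trace norm via $\|X\|_1\le\sqrt{d}\,\|X\|_2$ and Jensen. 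The ``randomized-rounding'' shortcut you hope to substitute for these moment computations is not exhibited, so the sharp $d/\sqrt{n}$ rate remains unestablished. Since the lemma is used in this paper purely as a citation, that is an acceptable state of affairs for the paper itself, but your text should not be read as a self-contained proof of the $O(d^2/\epsilon^2)$ bound.
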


\begin{lemma}[\cite{OW15,BOW17}]\label{BOW17}
The sample complexity of certifying whether unknown quantum state $\rho$ equals the known state $\sigma$ or $\epsilon$-far (resp. $\delta$-far) from $\sigma$ is $\Theta(\frac{d}{\epsilon^2})$ (resp. $\Theta(\frac{d}{\delta})$ ). The statement is still true if $\sigma$ is maximally mixed.

The sample complexity of certifying whether two unknown quantum states $\rho$ and $\sigma$ satisfies $||\rho-\sigma||_2<0.99\epsilon$ or $||\rho-\sigma||_2>\epsilon$ is $\Theta(\frac{1}{\epsilon^2})$.
\end{lemma}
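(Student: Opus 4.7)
The plan is to handle the lemma's three assertions separately — identity-testing in trace distance with complexity $\Theta(d/\epsilon^2)$, identity-testing in infidelity with complexity $\Theta(d/\delta)$, and $\ell_2$-closeness testing between two unknown states with complexity $\Theta(1/\epsilon^2)$ — each via a tailored estimator paired with a matching information-theoretic lower bound. The common template is to build a (nearly) unbiased estimator whose expectation separates the yes- and no-instances by a gap $\Delta$, bound its variance by $V$, and conclude $n = \Theta(V/\Delta^2)$ by Chebyshev; the matching lower bound comes from a chi-squared divergence calculation on the joint measurement distribution of $\rho^{\otimes n}$ evaluated on a carefully chosen hard family of alternatives.

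For the upper bounds on identity testing I would first reduce to the maximally mixed reference $\sigma = I/d$ by pre-conjugating the measurement with $\sigma^{-1/2}$ on the support of $\sigma$, following BOW17. Against $I/d$, I would use a weighted-collision / empirical Young diagram statistic of the OW15 type, whose expectation separates the null $\Tr(\rho^2)=1/d$ from the alternative $\Tr(\rho^2)\ge 1/d+\epsilon^2/d$ by a gap of $\Omega(\epsilon^2/d)$, with effective variance small enough to give $n=O(d/\epsilon^2)$; the key identities are $\Tr(\rho^2)=1/d+\|\rho-I/d\|_2^2$ and $\|\rho-I/d\|_2^2\ge \|\rho-I/d\|_1^2/d$. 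The infidelity variant uses a closely related fidelity-weighted statistic that exploits $1-F(\rho,\sigma)$ being quadratically smaller than $\|\rho-\sigma\|_1$ near $\rho=\sigma$ to achieve the sharper $n=O(d/\delta)$ rate; this cannot be derived just by plugging Fuchs--van de Graaf into the trace-distance tester, which would lose a factor of $\delta$. For the $\ell_2$-closeness test of two unknown states, a swap test on one copy of $\rho$ and one of $\sigma$ is an unbiased estimator of $\Tr(\rho\sigma)$ with $O(1)$ variance, and analogous swap tests on same-state pairs estimate $\Tr(\rho^2)$ and $\Tr(\sigma^2)$; linearly combining the three recovers $\|\rho-\sigma\|_2^2=\Tr(\rho^2)+\Tr(\sigma^2)-2\Tr(\rho\sigma)$ with dimension-independent constant variance, yielding $O(1/\epsilon^2)$.

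The main obstacle is the \emph{lower bounds}, which must hold against arbitrary entangled measurements on $\rho^{\otimes n}$, not merely local/product ones. My plan is to exploit permutation invariance: any joint measurement can be symmetrized without loss, and Schur--Weyl duality then reduces the problem to the symmetric subspace, where the chi-squared divergence between the null $\sigma^{\otimes n}$ and a Haar-averaged perturbed family $\{\sigma + \epsilon U_i\}_i$ is controlled by Weingarten-calculus moment identities. This step is the technically heaviest and is exactly the content of OW15 and BOW17; I would import those moment bounds rather than reprove them, since they are the single place where the quantum difficulty of the argument is concentrated. The $\ell_2$-closeness lower bound is considerably easier and reduces to distinguishing two Bernoulli distributions at parameters $1/2\pm\epsilon$, which requires $\Omega(1/\epsilon^2)$ samples by standard classical arguments.
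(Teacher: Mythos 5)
This lemma is one the paper does not prove at all: it appears in the ``Previous results'' subsection and is imported wholesale from \cite{OW15,BOW17}, so there is no in-paper argument to compare your proposal against. Your sketch is, in spirit, a fair reconstruction of how those papers proceed, and you are right to flag that the entangled-measurement lower bounds (Schur--Weyl reduction plus moment/Weingarten calculations) are the irreducible technical core that one would import rather than reprove. Two points, however, need attention.

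First, a genuine quantitative gap in the third assertion. You propose estimating $\|\rho-\sigma\|_2^2=\Tr(\rho^2)+\Tr(\sigma^2)-2\Tr(\rho\sigma)$ by swap tests and claim ``dimension-independent constant variance, yielding $O(1/\epsilon^2)$.'' But the quantity to be resolved is a gap of order $\epsilon^2$ in $\|\rho-\sigma\|_2^2$ (between $(0.99\epsilon)^2$ and $\epsilon^2$), so your own template $n=\Theta(V/\Delta^2)$ with $V=O(1)$ and $\Delta=\Theta(\epsilon^2)$ gives $n=O(1/\epsilon^4)$, not $O(1/\epsilon^2)$. The $O(1/\epsilon^2)$ rate in \cite{BOW17} (and in the classical analogue of Chan--Diakonikolas--Valiant--Valiant) comes from treating the pairwise statistic as a U-statistic over all $\binom{n}{2}$ pairs and showing that its variance is not constant but of the form $O\bigl(\|\rho-\sigma\|_2^2/n+1/n^2\bigr)$: the first-order projection variance degenerates (vanishes at the null $\rho=\sigma$), and both terms are $\ll\epsilon^4$ precisely when $n\gg 1/\epsilon^2$. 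Without this degeneracy observation the claimed rate does not follow.

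Second, a smaller but real issue in the reduction to the maximally mixed reference: ``pre-conjugating the measurement with $\sigma^{-1/2}$'' is not a trace-preserving operation and breaks down when $\sigma$ is ill-conditioned. The actual reduction in \cite{BOW17} (following Diakonikolas--Kane classically) partitions the spectrum of $\sigma$ into buckets and runs a chi-squared-type statistic adapted to each bucket; some such device is needed to get $\Theta(d/\epsilon^2)$ uniformly over all known $\sigma$. Since the present paper simply cites the lemma, the cleanest course is to do the same; if you do want a self-contained proof, the two items above are the places where your outline must be repaired.
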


\subsection{Petz map for general quantum operation}
The relative entropy between quantum state $\rho$ and $\sigma$ is
\begin{align*}
S(\rho||\sigma)=\tr(\rho\log \rho)-\tr(\rho\log\sigma)
\end{align*}
if the support of $\rho$ is contained in the support of $\sigma$, and
$\infty$ otherwise.

The strong subadditivity is equivalent to the monotonicity of relative entropy
under a partial trace operation:
\begin{align*}
S(\rho_{ABC}||\rho_A\otimes\rho_{BC})-S(\rho_{AB}||\rho_A\otimes\rho_B)=S(AB)_{\rho}+S(BC)_{\rho}-S(B)_{\rho}-S(ABC)_{\rho}=I(A:C|B).
\end{align*}

Petz considered the general Petz map for quantum operation $\mathcal{N}$.
\begin{lemma}[Petz~\cite{Petz_1996}]
\label{thm:petz}
For states $\rho$ and $\sigma$ and quantum operation $\Phi$,
$$S(\rho\|\sigma) = S(\Phi(\rho)\|\Phi(\sigma))$$
iff there exists a quantum operation $\widehat{\Phi}$
such that
$$\widehat{\Phi}(\Phi(\rho))=\rho,\quad \widehat{\Phi}(\Phi(\sigma))=\sigma.$$
Furthermore, on the support of $\Phi(\sigma)$,
\begin{equation}
\label{eq:transpose:channel}
\widehat{\Phi}(\alpha)=\sigma^{\frac{1}{2}}
\Phi^*\left(
\Phi(\sigma)^{-\frac{1}{2}}\alpha \Phi(\sigma)^{-\frac{1}{2}}
\right)\sigma^{\frac{1}{2}}.
\end{equation}
Here $\Phi^*$ denotes the adjoint map of $\Phi$.
\end{lemma}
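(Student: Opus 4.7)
The plan is to handle the two directions of the equivalence separately, relying throughout on the explicit form $\widehat{\Phi}(\alpha)=\sigma^{1/2}\Phi^*(\Phi(\sigma)^{-1/2}\alpha\Phi(\sigma)^{-1/2})\sigma^{1/2}$ for the candidate recovery map.

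For the easy direction (existence of $\widehat{\Phi}$ implies entropic equality), the plan is to chain two applications of monotonicity of relative entropy: $S(\rho\|\sigma)\geq S(\Phi(\rho)\|\Phi(\sigma))\geq S(\widehat{\Phi}\Phi(\rho)\|\widehat{\Phi}\Phi(\sigma))=S(\rho\|\sigma)$, which forces equality throughout. No further work is needed in this direction.

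For the converse, I first confirm that the stated $\widehat{\Phi}$ is a legitimate quantum operation on $\mathrm{supp}(\Phi(\sigma))$. Complete positivity is immediate, since $\widehat{\Phi}$ factors as a sandwich by $\Phi(\sigma)^{-1/2}$, followed by the CP map $\Phi^*$, followed by a sandwich by $\sigma^{1/2}$; trace preservation on $\mathrm{supp}(\Phi(\sigma))$ follows from the adjoint identity $\tr(\sigma\,\Phi^*(X))=\tr(\Phi(\sigma)X)$. The fixed-point relation $\widehat{\Phi}(\Phi(\sigma))=\sigma$ then drops out of the formula together with unitality $\Phi^*(I)=I$, and a symmetric check yields $\widehat{\Phi}(\Phi(\sigma))=\sigma$ without any use of the hypothesis.

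The real content is to prove $\widehat{\Phi}(\Phi(\rho))=\rho$ under the equality assumption, and this is the main obstacle. The strategy is to upgrade the scalar equality of relative entropies to an operator-level identity, namely that the Connes cocycle $\rho^{it}\sigma^{-it}$ factors through $\Phi^*$ in the sense $\Phi^*(\Phi(\sigma)^{-it}\Phi(\rho)^{it})=\sigma^{-it}\rho^{it}$ for every real $t$, and then to analytically continue in $t$ to the point $t=-i/2$, which rewrites the cocycle identity into precisely $\widehat{\Phi}(\Phi(\rho))=\rho$. Obtaining the cocycle identity from entropic equality is the delicate step: the standard route is to study the sign-definite function $f(t)=\tr(\rho^{1-t}\sigma^t)-\tr(\Phi(\rho)^{1-t}\Phi(\sigma)^t)$ on $[0,1]$, note that the equality assumption forces $f(0)=0$ and $f'(0)=0$, invoke the equality case of Lieb's concavity theorem (equivalently, of Araki's relative modular operator monotonicity) to deduce $f\equiv 0$, and bootstrap this scalar family of identities up to an operator identity via holomorphic extension of $t\mapsto \Phi^*(\Phi(\sigma)^{-it}\Phi(\rho)^{it})-\sigma^{-it}\rho^{it}$ into a strip in the complex plane, paying careful attention to the support of $\Phi(\sigma)$ (which is exactly why the formula for $\widehat{\Phi}$ is only defined there).
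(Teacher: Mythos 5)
You should first be aware that the paper contains no proof of this lemma at all: it is quoted verbatim from Petz's work as a black box, so there is no internal argument to compare yours against, and I can only judge your sketch against the known proof. Your easy direction (chaining two applications of data processing) is correct and complete, and your verification that the explicit $\widehat{\Phi}$ is completely positive, is trace-preserving on $\mathrm{supp}(\Phi(\sigma))$, and satisfies $\widehat{\Phi}(\Phi(\sigma))=\sigma$ unconditionally is also right (though the clause ``a symmetric check yields $\widehat{\Phi}(\Phi(\sigma))=\sigma$'' merely restates the identity you had just established).

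The gap is in the mechanism you propose for the hard direction. The skeleton---derive the cocycle intertwining relation $\Phi^*\bigl(\Phi(\sigma)^{-it}\Phi(\rho)^{it}\bigr)=\sigma^{-it}\rho^{it}$ for all real $t$, then analytically continue to $t=-i/2$---is indeed how Petz proves it. But your route to that relation does not work as stated: for a sign-definite function $f$ on $[0,1]$, the conditions $f(0)=0$ and $f'(0)=0$ do not force $f\equiv 0$ (take $f(t)=-t^2$), and the ``equality case of Lieb's concavity'' is not a statement about a one-variable function having a double zero at an endpoint. Moreover, even if you had the full scalar family $\tr(\rho^{1-t}\sigma^t)=\tr(\Phi(\rho)^{1-t}\Phi(\sigma)^t)$ for all $t\in[0,1]$, that is still only a family of numbers, and extracting an operator identity from it is exactly the hard step you are trying to carry out---so this bootstrap is circular. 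The standard argument works at the operator level from the start: write $S(\rho\|\sigma)=-\langle\xi,(\log\Delta_{\sigma,\rho})\xi\rangle$ with the relative modular operator, obtain monotonicity from $V^*\Delta_{\Phi\sigma,\Phi\rho}V\le\Delta_{\sigma,\rho}$ combined with the operator Jensen inequality for the operator-concave function $\log$, and then invoke the \emph{equality case of that operator Jensen inequality} (strict concavity of $\log$ on the spectrum) to conclude the vector identity $\Delta_{\Phi\sigma,\Phi\rho}^{it}V\xi=V\Delta_{\sigma,\rho}^{it}\xi$ for all real $t$, which is the cocycle relation. Replace your $f(0)=f'(0)=0$ step with that argument (or cite Petz for it); the analytic continuation to $t=-i/2$ and the support bookkeeping you describe are then fine.
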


For tripartite quantum system $\cH_{A}\ten \cH_{B}\ten\cH_{C} $, we choose $\Phi: \cH_{BC}\mapsto \cH_B$ to be the partial trace operator, i.e., for any $\sigma_{BC}$
\begin{align*}
\Phi(\sigma_{BC})=\sigma_{BC}.
\end{align*}
Then
$I(A:C|B)_{\rho}=0$ iff
\begin{align*}
S(\rho_{ABC}||\rho_A\otimes\rho_{BC})=S(\rho_{AB}||\rho_A\otimes\rho_B)
\end{align*}
Lemma \ref{thm:petz} implies
\begin{align}\label{eq:petz}
\rho_{ABC}=\rho_{BC}^{\frac12}\rho_{B}^{-\frac12}(\rho_{AB}\ten id_C)\rho_{B}^{-\frac12}\rho_{BC}^{\frac12}\pl .
\end{align}
One can verify that
\begin{lemma}
For any $\rho_{BC}$, the
Petz map $\mathcal{N}_{B\mapsto BC}(X)=\rho_{BC}^{\frac12}\rho_{B}^{-\frac12}(X\ten id_C)\rho_{B}^{-\frac12}\rho_{BC}^{\frac12}$ is a quantum operation.
\end{lemma}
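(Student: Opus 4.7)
The plan is to exhibit an explicit Kraus (operator-sum) representation of $\mathcal{N}_{B\mapsto BC}$ that simultaneously establishes complete positivity and the normalization condition $\sum_i A_i^{\dag} A_i = I$ required by the paper's definition of a quantum operation. For concreteness I would first treat the case when $\rho_B$ has full rank and then indicate how the general case reduces to it.

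Fix an orthonormal basis $\{|c_k\rangle\}_{k=1}^{d_C}$ of $\cH_C$ and set
$$A_k \pl := \pl \rho_{BC}^{1/2}\,(I_B \otimes |c_k\rangle)\,\rho_B^{-1/2}, \qquad k=1,\dots,d_C.$$
A direct calculation gives
$$A_k X A_k^{\dag} \pl = \pl \rho_{BC}^{1/2}\bigl(\rho_B^{-1/2} X \rho_B^{-1/2} \otimes |c_k\rangle\langle c_k|\bigr)\rho_{BC}^{1/2},$$
so summing over $k$ and using $\sum_k |c_k\rangle\langle c_k| = I_C$ recovers precisely $\mathcal{N}_{B\mapsto BC}(X)$. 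The existence of such an operator-sum form is equivalent to complete positivity, which handles one half of the statement.

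For the trace-preserving normalization I would compute
$$\sum_k A_k^{\dag} A_k \pl = \pl \rho_B^{-1/2}\Big(\sum_k(I_B\otimes\langle c_k|)\,\rho_{BC}\,(I_B\otimes|c_k\rangle)\Big)\rho_B^{-1/2} \pl = \pl \rho_B^{-1/2}\,\tr_C(\rho_{BC})\,\rho_B^{-1/2} \pl = \pl \rho_B^{-1/2}\rho_B\rho_B^{-1/2},$$
where the middle equality uses the definition of the partial trace and the final one uses $\tr_C(\rho_{BC})=\rho_B$. When $\rho_B$ has full rank this last expression equals $I_B$, completing the argument. When $\rho_B$ is rank-deficient, $\rho_B^{-1/2}$ is interpreted as the Moore--Penrose pseudoinverse, the sum above equals the support projector $P_B$ onto $\mathrm{supp}(\rho_B)$, and one extends $\mathcal{N}_{B\mapsto BC}$ to all of $\cB(\cH_B)$ by appending a single extra Kraus operator supported on $\ker\rho_B$ (for instance $A_0 = |\eta\rangle\langle\xi|$ with $|\xi\rangle$ an orthonormal basis vector of $\ker\rho_B$ and $|\eta\rangle\in\cH_{BC}$ any fixed unit vector), which preserves complete positivity and restores $\sum_k A_k^{\dag}A_k = I_B$. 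No step presents a real obstacle; the only delicate bookkeeping point is fixing the support convention for the inverse square root, which is standard in Petz map arguments.
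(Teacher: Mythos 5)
Your proposal is correct and follows essentially the same route as the paper: both exhibit the Kraus operators $A_k=\rho_{BC}^{1/2}(I_B\otimes\ket{c_k})\rho_B^{-1/2}$ and verify $\sum_k A_k^{\dag}A_k=\rho_B^{-1/2}\tr_C(\rho_{BC})\rho_B^{-1/2}=I_B$. Your additional treatment of the rank-deficient case via the support projector is a detail the paper leaves implicit, but it does not change the argument.
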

\begin{proof}
Let $M_i=\rho_{BC}^{\frac12}\rho_{B}^{-\frac12} \ten \ket{i}_C$, one can verify that
\begin{align*}
\sum_i M_i^{\dag} X M_i=\rho_{BC}^{\frac12}\rho_{B}^{-\frac12}(X\ten \sum_{i}\op{i}{i})\rho_{B}^{-\frac12}\rho_{BC}^{\frac12}=\rho_{BC}^{\frac12}\rho_{B}^{-\frac12}(X\ten id_C)\rho_{B}^{-\frac12}\rho_{BC}^{\frac12}=\mathcal{N}_{B\mapsto BC}(X).
\end{align*}
On the other hand,
\begin{align*}
\sum_{i} M_i^{\dag}M_i=\sum_{i} \rangle i|\rho_{B}^{-\frac12} \rho_{BC}\rho_{B}^{-\frac12}\ket{i}=\rho_{B}^{-\frac12} \sum_{i} \rangle i|\rho_{BC}\ket{i}\rho_{B}^{-\frac12}=\rho_{B}^{-\frac12}\rho_{B} \rho_{B}^{-\frac12}=I_{\cH_B}.
\end{align*}
\end{proof}
\section{Continuity of the Petz map}

In this section, we study the continuity of the Petz map.

\subsection{Useful Lemmata}
We start with the key estimate in our results.
\begin{lemma}\label{distance}
For density matrices, $\rho_{AB},\sigma_{AB}$, $\rho_{BC},\sigma_{BC}$ and $\rho_{B},\sigma_B$, with
\begin{align*}
\rho_B=\tr_A\rho_{AB},\\%=\tr_C\rho_{BC},\\
\sigma_B=\tr_C\sigma_{BC},%=\tr_A\sigma_{AB}=\tr_C\sigma_{BC},
\end{align*}
we have
\[
\norm{\rho_{BC}^{\frac12}\rho_{B}^{-\frac12}\rho_{AB}^{\frac12}
-\sigma_{BC}^{\frac12}\sigma_{B}^{-\frac12}\sigma_{AB}^{\frac12}}{2}\le \norm{\rho_{AB}^{\frac12}-\sigma_{AB}^{\frac12}}{2}+\norm{\rho_{BC}^{\frac12}-\sigma_{BC}^{\frac12}}{2}+\norm{\rho_{B}^{\frac12}-\sigma_{B}^{\frac12}}{2}.
\]
\end{lemma}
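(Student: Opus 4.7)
The plan is to bound $\norm{X-Y}{2}$, where $X:=\rho_{BC}^{\frac12}\rho_B^{-\frac12}\rho_{AB}^{\frac12}$ and $Y:=\sigma_{BC}^{\frac12}\sigma_B^{-\frac12}\sigma_{AB}^{\frac12}$, via a three-term telescoping sum followed by the triangle inequality. Inserting the hybrid expressions $\sigma_{BC}^{\frac12}\rho_B^{-\frac12}\rho_{AB}^{\frac12}$ and $\sigma_{BC}^{\frac12}\sigma_B^{-\frac12}\rho_{AB}^{\frac12}$, I would write
\begin{align*}
X-Y &= (\rho_{BC}^{\frac12}-\sigma_{BC}^{\frac12})\rho_B^{-\frac12}\rho_{AB}^{\frac12} \\
&\quad + \sigma_{BC}^{\frac12}(\rho_B^{-\frac12}-\sigma_B^{-\frac12})\rho_{AB}^{\frac12} \\
&\quad + \sigma_{BC}^{\frac12}\sigma_B^{-\frac12}(\rho_{AB}^{\frac12}-\sigma_{AB}^{\frac12}),
\end{align*}
and aim to show that the three summands have $\ell_2$-norms equal to (not merely bounded by) the three terms on the right-hand side of the claim.

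For the first and third summands I would compute $\norm{\cdot}{2}^2=\tr[Z^\dag Z]$ and cycle the trace to bring like factors together. For the first summand, this produces $\tr\bigl[(\rho_{BC}^{\frac12}-\sigma_{BC}^{\frac12})^2\cdot\rho_B^{-\frac12}\rho_{AB}\rho_B^{-\frac12}\bigr]$ on $\cH_{ABC}$; since $\rho_B=\tr_A\rho_{AB}$ by hypothesis, $\tr_A(\rho_B^{-\frac12}\rho_{AB}\rho_B^{-\frac12})=I_B$, and tracing out $A$ first collapses the expression to $\norm{\rho_{BC}^{\frac12}-\sigma_{BC}^{\frac12}}{2}^2$. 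The third summand is handled symmetrically using $\sigma_B=\tr_C\sigma_{BC}$, which forces $\tr_C(\sigma_B^{-\frac12}\sigma_{BC}\sigma_B^{-\frac12})=I_B$ and yields $\norm{\rho_{AB}^{\frac12}-\sigma_{AB}^{\frac12}}{2}^2$. Both computations rest on the elementary identity $\tr\bigl[(M_{AB}\otimes I_C)(I_A\otimes N_{BC})\bigr]=\tr_B\bigl[(\tr_A M_{AB})(\tr_C N_{BC})\bigr]$, which I would verify in one line by expanding in a product basis.

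The middle summand is the main obstacle, because $\rho_B^{-\frac12}-\sigma_B^{-\frac12}$ cannot be controlled by $\rho_B^{\frac12}-\sigma_B^{\frac12}$ in any uniform way. The decisive move is to apply the matrix identity $A^{-1}-B^{-1}=B^{-1}(B-A)A^{-1}$ with $A=\rho_B^{\frac12}$ and $B=\sigma_B^{\frac12}$, rewriting the middle summand as
\[
\sigma_{BC}^{\frac12}\sigma_B^{-\frac12}(\sigma_B^{\frac12}-\rho_B^{\frac12})\rho_B^{-\frac12}\rho_{AB}^{\frac12}.
\]
Now the two $\sigma_B^{-\frac12}$ factors end up flanking $\sigma_{BC}$ in $M^\dag M$, while the two $\rho_B^{-\frac12}$ factors flank $\rho_{AB}$. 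Cycling the trace produces
\[
\tr\bigl[(\rho_B^{-\frac12}\rho_{AB}\rho_B^{-\frac12})(\sigma_B^{\frac12}-\rho_B^{\frac12})(\sigma_B^{-\frac12}\sigma_{BC}\sigma_B^{-\frac12})(\sigma_B^{\frac12}-\rho_B^{\frac12})\bigr].
\]
Invoking the same partial-trace identity, $\tr_A$ kills the $AB$ block (using $\rho_B=\tr_A\rho_{AB}$) and $\tr_C$ kills the $BC$ block (using $\sigma_B=\tr_C\sigma_{BC}$) — this is the single step where both hypotheses are used simultaneously — leaving $\tr_B(\sigma_B^{\frac12}-\rho_B^{\frac12})^2=\norm{\rho_B^{\frac12}-\sigma_B^{\frac12}}{2}^2$. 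Summing the three bounds by the triangle inequality gives the claimed inequality.
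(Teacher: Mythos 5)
Your proposal is correct and follows essentially the same route as the paper: the identical three-term telescoping decomposition, with the first and third summands evaluated exactly via the partial-trace identities $\tr_A(\rho_B^{-1/2}\rho_{AB}\rho_B^{-1/2})=I_B$ and $\tr_C(\sigma_B^{-1/2}\sigma_{BC}\sigma_B^{-1/2})=I_B$. The only (cosmetic) difference is in the middle term, where you factor $\rho_B^{-1/2}-\sigma_B^{-1/2}=\sigma_B^{-1/2}(\sigma_B^{1/2}-\rho_B^{1/2})\rho_B^{-1/2}$ before taking the squared norm, whereas the paper expands the square into four cross terms and reduces each to a trace over $B$; both yield exactly $\norm{\rho_B^{1/2}-\sigma_B^{1/2}}{2}^2$.
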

\begin{proof}
By the triangle inequality,
\begin{align*}
&\norm{\rho_{BC}^{\frac12}\rho_{B}^{-\frac12}\rho_{AB}^{\frac12}
-\sigma_{BC}^{\frac12}\sigma_{B}^{-\frac12}\sigma_{AB}^{\frac12}}{2}\\
\le &\norm{\rho_{BC}^{\frac12}\rho_{B}^{-\frac12}\rho_{AB}^{\frac12}
-\sigma_{BC}^{\frac12}\rho_{B}^{-\frac12}\rho_{AB}^{\frac12}}{2}+
\norm{\sigma_{BC}^{\frac12}\rho_{B}^{-\frac12}\rho_{AB}^{\frac12}
-\sigma_{BC}^{\frac12}\sigma_{B}^{-\frac12}\rho_{AB}^{\frac12}}{2}
+\norm{\sigma_{BC}^{\frac12}\sigma_{B}^{-\frac12}\rho_{AB}^{\frac12}
-\sigma_{BC}^{\frac12}\sigma_{B}^{-\frac12}\sigma_{AB}^{\frac12}}{2}
\end{align*}
For the first term,
\begin{align*}
&\norm{\rho_{BC}^{\frac12}\rho_{B}^{-\frac12}\rho_{AB}^{\frac12}
-\sigma_{BC}^{\frac12}\rho_{B}^{-\frac12}\rho_{AB}^{\frac12}}{2}^2\\
=&\tr_{ABC}( (\rho_{BC}^{\frac12}-\sigma_{BC}^{\frac12})^2\rho_{B}^{-\frac12}\rho_{AB}\rho_{B}^{-\frac12})\\
=&\tr_{BC}( (\rho_{BC}^{\frac12}-\sigma_{BC}^{\frac12})^2)\\
=&\norm{\rho_{BC}^{\frac12}-\sigma_{BC}^{\frac12}}{2}^2
\end{align*}
Similarly, for the third term
\begin{align*}
&\norm{\sigma_{BC}^{\frac12}\sigma_{B}^{-\frac12}\rho_{AB}^{\frac12}
-\sigma_{BC}^{\frac12}\sigma_{B}^{-\frac12}\sigma_{AB}^{\frac12}}{2}^2\\
=&\tr_{ABC}(\sigma_{B}^{-\frac12}\sigma_{BC}\sigma_{B}^{-\frac12} (\rho_{AB}^{\frac12}-\sigma_{AB}^{\frac12})^2)\\
=&\tr_{ABC}( (\rho_{AB}^{\frac12}-\sigma_{AB}^{\frac12})^2)\\
= &\norm{\rho_{AB}^{\frac12}-\sigma_{AB}^{\frac12}}{2}^2
\end{align*}
For the second term
\begin{align*}
&\norm{\sigma_{BC}^{\frac12}\rho_{B}^{-\frac12}\rho_{AB}^{\frac12}
-\sigma_{BC}^{\frac12}\sigma_{B}^{-\frac12}\rho_{AB}^{\frac12}}{2}^2\\
=&\tr_{ABC}( \rho_{AB}\rho_{B}^{-\frac12}\sigma_{BC}\rho_{B}^{-\frac12})+\tr_{ABC}( \rho_{AB}\sigma_{B}^{-\frac12}\sigma_{BC}\sigma_{B}^{-\frac12})-\tr_{ABC}( \rho_{AB}\sigma_{B}^{-\frac12}\sigma_{BC}\rho_{B}^{-\frac12})-\tr_{ABC}( \rho_{AB}\rho_{B}^{-\frac12}\sigma_{BC}\sigma_{B}^{-\frac12})\\
=&\tr_{B}( \rho_{B}\rho_{B}^{-\frac12}\sigma_{B}\rho_{B}^{-\frac12})+\tr_{B}( \rho_{B}\sigma_{B}^{-\frac12}\sigma_{B}\sigma_{B}^{-\frac12})-\tr_{B}( \rho_{B}\sigma_{B}^{-\frac12}\sigma_{B}\rho_{B}^{-\frac12})-\tr_{B}( \rho_{B}\rho_{B}^{-\frac12}\sigma_{B}\sigma_{B}^{-\frac12})\\
=&2-2\tr( \sigma_{B}^{\frac12}\rho_{B}^{\frac12})\\
=& \tr((\rho_B^{\frac12}-\sigma_B^{\frac12})^2)\\
=& \norm{\rho_{B}^{\frac12}-\sigma_{B}^{\frac12}}{2}^2
\end{align*}
This completes the proof.
\end{proof}

\begin{rem}{\rm In the statement of the above lemma, we do not require
\begin{align*}
\rho_B=\tr_A\rho_{AB}=\tr_C\rho_{BC},\\
\sigma_B=\tr_C\sigma_{BC}=\tr_A\sigma_{AB}.
\end{align*}
By symmetry, an equivalent condition is that
\begin{align*}
\rho_B=\tr_C\rho_{BC},\\
\sigma_B=\tr_A\sigma_{AB}.
\end{align*}
For example, one can have a simplified case
\begin{align*}
\norm{\rho_{BC}^{\frac12}\rho_{B}^{-\frac12}\rho_{AB}^{\frac12}
-\sigma_{BC}^{\frac12}\sigma_{B}^{-\frac12}\rho_{AB}^{\frac12}}{2}\le \norm{\rho_{BC}^{\frac12}-\sigma_{BC}^{\frac12}}{2}+\norm{\rho_{B}^{\frac12}-\sigma_{B}^{\frac12}}{2}.
\end{align*}
}
\end{rem}

We use Lemma \ref{distance} will derive the continuity of the Petz map in terms of $\epsilon$ and $\delta$. Before that, we describe the relation between different distances. The following lemma shows that the square root of infidelity $\sqrt{1-F}$ is equivalent to $\norm{\rho^{\frac12}-\sigma^{\frac12}}{2}$ up to a constant $\sqrt{2}$.
\begin{lemma}\label{f-1}
For density matrices $\rho$ and $\sigma$, we have
\[
\sqrt{2}\sqrt{1-F(\rho,\sigma)}\le\norm{\rho^{\frac12}-\sigma^{\frac12}}{2}\le 2\sqrt{1-F(\rho,\sigma)}.
\]
\end{lemma}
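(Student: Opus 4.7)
The plan is to expand the Hilbert--Schmidt norm in terms of an explicit trace and then sandwich it between the two elementary bounds on $\operatorname{tr}(\rho^{1/2}\sigma^{1/2})$ that the previous lemma already supplies.

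First I would compute
\[
\norm{\rho^{1/2}-\sigma^{1/2}}{2}^{\,2}
= \tr\bigl((\rho^{1/2}-\sigma^{1/2})^{2}\bigr)
= \tr(\rho)+\tr(\sigma)-2\tr(\rho^{1/2}\sigma^{1/2})
= 2-2\tr(\rho^{1/2}\sigma^{1/2}),
\]
using that $\rho^{1/2}$ and $\sigma^{1/2}$ are self-adjoint and $\rho,\sigma$ are density matrices. This reduces the problem to controlling $\tr(\rho^{1/2}\sigma^{1/2})$.

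For the lower bound $\sqrt{2}\sqrt{1-F}\le \norm{\rho^{1/2}-\sigma^{1/2}}{2}$, I would invoke the inequality $\tr(\rho^{1/2}\sigma^{1/2})\le F(\rho,\sigma)$ from the earlier lemma in the preliminaries, which gives
\[
\norm{\rho^{1/2}-\sigma^{1/2}}{2}^{\,2}
= 2-2\tr(\rho^{1/2}\sigma^{1/2}) \ge 2\bigl(1-F(\rho,\sigma)\bigr).
\]
For the upper bound, I would apply the complementary inequality $F^{2}(\rho,\sigma)\le \tr(\rho^{1/2}\sigma^{1/2})$ from the same lemma, obtaining
\[
\norm{\rho^{1/2}-\sigma^{1/2}}{2}^{\,2}
\le 2-2F^{2}(\rho,\sigma)
= 2\bigl(1-F(\rho,\sigma)\bigr)\bigl(1+F(\rho,\sigma)\bigr)
\le 4\bigl(1-F(\rho,\sigma)\bigr),
\]
using $0\le F\le 1$ in the last step. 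Taking square roots in both chains yields the claimed sandwich.

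There is no real obstacle: both directions are immediate consequences of the identity for $\norm{\rho^{1/2}-\sigma^{1/2}}{2}^{2}$ combined with the inequalities $F^{2}\le \tr(\rho^{1/2}\sigma^{1/2})\le F$ that were already stated. The only minor care point is to remember that $F(\rho,\sigma)\le 1$ in order to drop the factor $(1+F)$ in the upper bound, which is justified because $F$ is a fidelity between density matrices.
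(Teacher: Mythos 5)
Your proof is correct and follows essentially the same route as the paper: expand $\norm{\rho^{1/2}-\sigma^{1/2}}{2}^2 = 2-2\tr(\rho^{1/2}\sigma^{1/2})$ and then apply the two-sided bound $F^2(\rho,\sigma)\le \tr(\rho^{1/2}\sigma^{1/2})\le F(\rho,\sigma)$ from the preliminary lemma. No gaps.
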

\begin{proof}
We observe that
\[
\norm{\rho^{\frac12}-\sigma^{\frac12}}{2}^2=2-2\tr(\rho^{\frac12}\sigma^{\frac12}).
\]
On the one hand,
\[
2-2\tr(\rho^{\frac12}\sigma^{\frac12}) \le 2-2 F^2(\rho,\sigma) \le 4(1-F(\rho,\sigma))
\]
On the other hand,
\begin{align*}
2-2\tr(\rho^{\frac12}\sigma^{\frac12}) \ge 2-2 F(\rho,\sigma) =2 (1-F(\rho,\sigma)).
\end{align*}
\end{proof}

\begin{lemma}\label{fidelity}
For operators $X,Y$ with $||X||_2=||Y||_2=1$ and
\[\norm{X-Y}{2}\le \delta\]
then,
\[
F(X^{\dag}X,Y^{\dag}Y)\ge 1-\delta^2/2.
\]
\end{lemma}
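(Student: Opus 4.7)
The plan is to reduce the fidelity lower bound to a Hilbert--Schmidt inner product computation via the canonical purification trick. First I would observe that under the hypothesis $\|X\|_2 = \|Y\|_2 = 1$, the operators $X^\dagger X$ and $Y^\dagger Y$ are genuine density matrices (positive with unit trace), so speaking of $F(X^\dagger X, Y^\dagger Y)$ is well posed.

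The key ingredient I would invoke is the Uhlmann-type inequality
\[
F(X^\dagger X, Y^\dagger Y) \;\ge\; |\tr(X^\dagger Y)|.
\]
This can be proved in one line using polar decompositions $X = U_X |X|$ and $Y = U_Y |Y|$: then $\tr(X^\dagger Y) = \tr(|X| U_X^\dagger U_Y |Y|)$, and since $U_X^\dagger U_Y$ is a contraction while $\||X||Y|\|_1 = \tr|\sqrt{X^\dagger X}\sqrt{Y^\dagger Y}| = F(X^\dagger X, Y^\dagger Y)$, the trace-norm duality gives the bound. Equivalently, the vectors $\sum_i |i\rangle \otimes X|i\rangle$ and $\sum_i |i\rangle \otimes Y|i\rangle$ purify $(X^\dagger X)^T$ and $(Y^\dagger Y)^T$ (up to transposition, which does not affect fidelity) and their overlap is exactly $\tr(X^\dagger Y)$.

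With this in hand, the remaining step is the Hilbert--Schmidt expansion
\[
\delta^2 \;\ge\; \|X-Y\|_2^2 \;=\; \|X\|_2^2 + \|Y\|_2^2 - 2\Re\,\tr(X^\dagger Y) \;=\; 2 - 2\Re\,\tr(X^\dagger Y),
\]
which yields $\Re\,\tr(X^\dagger Y) \ge 1 - \delta^2/2$. Combining with the Uhlmann bound and $|\tr(X^\dagger Y)| \ge \Re\,\tr(X^\dagger Y)$ (which is useful only when the right-hand side is nonnegative, i.e.\ $\delta \le \sqrt{2}$; otherwise the conclusion $F \ge 1-\delta^2/2$ is trivial since $F \ge 0$) gives the claim.

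There is no real obstacle here: the only thing one must get right is the Uhlmann-style lower bound $F(X^\dagger X, Y^\dagger Y) \ge |\tr(X^\dagger Y)|$. Everything else is a two-line Hilbert--Schmidt computation, and the factor $1/2$ in $1-\delta^2/2$ matches exactly the factor in the expansion of $\|X-Y\|_2^2$, so the constant is tight in this approach.
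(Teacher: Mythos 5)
Your proposal is correct and follows essentially the same route as the paper: both reduce the claim to the bound $F(X^{\dag}X,Y^{\dag}Y)=\tr\bigl|\,|X|\,|Y|\,\bigr|\ge|\tr(X^{\dag}Y)|$ via polar decomposition and H\"older (trace-norm duality against a contraction), combined with the expansion $\|X-Y\|_2^2=2-2\,\mathbf{Re}\,\tr(X^{\dag}Y)$. The purification picture and the remark about the trivial regime $\delta>\sqrt{2}$ are harmless embellishments, not a different argument.
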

\begin{proof}
Using polar decomposition, there exist unitaries $U$ and $V$ such that $X=U|X|$ and $Y=V|Y|$. We let $A=|X||Y|$. Denote $\mathbf{Re}(z)$ as the real part of a complex number $z$.
We observe that
\begin{align*}
&\norm{X-Y}{2}^2=\tr X^{\dag}X+\tr Y^{\dag}Y-\tr(X^{\dag}Y+XY^{\dag})=2-2 \mathbf{Re}(\tr XY^{\dag})\le \delta^2\\
\Longleftrightarrow & \mathbf{Re}(\tr XY^{\dag})\geq 1-\delta^2/2\\
\Longrightarrow & |\tr XY^{\dag}|\geq 1-\delta^2/2\\
\Longrightarrow & |\tr U |X| |Y|V^{\dag} |\geq 1-\delta^2/2\\
\Longleftrightarrow & |\tr UAV^{\dag}U |\geq 1-\delta^2/2\\
\Longrightarrow & \norm{A}{1} \geq 1-\delta^2/2,
\end{align*}
where in the last step, we use H\"older. On the other hand,
\[
F(X^{\dag}X,Y^{\dag}Y)=\tr |\sqrt{X^{\dag}X}\sqrt{Y^{\dag}Y}|=\tr | \ |X| |Y|\ |=\tr|A|
\]
Therefore,
\[
F(X^{\dag}X,Y^{\dag}Y)=\norm{A}{1}\ge 1-\delta^2/2.
\]
\end{proof}

The following lemma provides a relationship between $\ell_1$ norm and $\norm{\rho^{\frac12}-\sigma^{\frac12}}{2}$.
\begin{lemma}\label{Bhatia}
Let $\rho,\sigma$ be two positive operators, then we have% Then for $1\le p\le \infty$,
\[\norm{\rho^{\frac12}-\sigma^{\frac12}}{2}^2\le \norm{\rho-\sigma}{1} \le 2\norm{\rho^{\frac12}-\sigma^{\frac12}}{2}\]
\end{lemma}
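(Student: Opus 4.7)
The plan is to prove the two bounds separately. For the upper bound $\norm{\rho-\sigma}{1}\le 2\norm{\rho^{\frac12}-\sigma^{\frac12}}{2}$, I would start from the algebraic identity $\rho-\sigma=(\rho^{\frac12}-\sigma^{\frac12})\rho^{\frac12}+\sigma^{\frac12}(\rho^{\frac12}-\sigma^{\frac12})$, apply the triangle inequality for $\norm{\cdot}{1}$, and then invoke H\"older's inequality $\norm{XY}{1}\le\norm{X}{2}\norm{Y}{2}$. Since $\rho,\sigma$ are density matrices, $\norm{\rho^{\frac12}}{2}=\sqrt{\tr\rho}=1$ and likewise for $\sigma$, so each of the two H\"older terms contributes exactly $\norm{\rho^{\frac12}-\sigma^{\frac12}}{2}$, yielding the factor of $2$.

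For the lower (Powers--St{\o}rmer) direction $\norm{\rho^{\frac12}-\sigma^{\frac12}}{2}^2\le\norm{\rho-\sigma}{1}$, I would use the Jordan decomposition $\rho^{\frac12}-\sigma^{\frac12}=P-Q$ with $P,Q\ge 0$ and $PQ=0$, and let $R$ be the support projection of $P$, so that $RP=P$ and $RQ=0$. Squaring $\rho^{\frac12}=\sigma^{\frac12}+(P-Q)$ gives $\rho-\sigma=\sigma^{\frac12}(P-Q)+(P-Q)\sigma^{\frac12}+(P-Q)^2$. Using $R(P-Q)=P$ and $R(P-Q)^2=P^2$ under the trace gives $\tr(R(\rho-\sigma))=2\tr(\sigma^{\frac12}P)+\tr(P^2)\ge\tr(P^2)$ because $\sigma^{\frac12},P\ge 0$. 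Expanding $\sigma=(\rho^{\frac12}-(P-Q))^2$ symmetrically and using $(I-R)(P-Q)=-Q$ yields $\tr((I-R)(\sigma-\rho))=2\tr(\rho^{\frac12}Q)+\tr(Q^2)\ge\tr(Q^2)$. Adding, and noting that $\norm{2R-I}{\infty}=1$ so $\tr((2R-I)(\rho-\sigma))\le\norm{\rho-\sigma}{1}$, gives $\norm{\rho^{\frac12}-\sigma^{\frac12}}{2}^2=\tr(P^2)+\tr(Q^2)\le\norm{\rho-\sigma}{1}$.

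The main obstacle is the Powers--St{\o}rmer side. A naive attempt such as factoring $\rho-\sigma=\frac12\big((\rho^{\frac12}-\sigma^{\frac12})(\rho^{\frac12}+\sigma^{\frac12})+(\rho^{\frac12}+\sigma^{\frac12})(\rho^{\frac12}-\sigma^{\frac12})\big)$ and hoping that $\rho^{\frac12}+\sigma^{\frac12}$ dominates $|\rho^{\frac12}-\sigma^{\frac12}|$ in operator order runs into the usual non-commutativity issue for positive operators. The clean route is the sign-projection trick described above: choosing $R$ as the spectral projection of $\rho^{\frac12}-\sigma^{\frac12}$ onto its nonnegative part aligns $P-Q$ with $2R-I$, and then the cross-terms $\tr(\sigma^{\frac12}P),\tr(\rho^{\frac12}Q)$ are automatically nonnegative and can simply be dropped. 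Dualizing $\norm{\cdot}{1}$ against the contraction $2R-I$ then closes the proof.
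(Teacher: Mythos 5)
Your proof is correct, but it is genuinely more than the paper provides: the paper disposes of this lemma by citation only, attributing the lower bound to Theorem X.1.6 of Bhatia's \emph{Matrix Analysis} and the upper bound to Lemma 2.2 of Carlen--Vershynina, whereas you reprove both from scratch. Your upper-bound argument (split $\rho-\sigma=(\rho^{\frac12}-\sigma^{\frac12})\rho^{\frac12}+\sigma^{\frac12}(\rho^{\frac12}-\sigma^{\frac12})$, then triangle plus Cauchy--Schwarz/H\"older) is exactly the standard proof behind the cited Carlen--Vershynina lemma, and your lower-bound argument is the classical Powers--St{\o}rmer proof: Jordan-decompose $\rho^{\frac12}-\sigma^{\frac12}=P-Q$, test $\rho-\sigma$ against the contraction $2R-I$ built from the support projection $R$ of $P$, and drop the nonnegative cross terms $\tr(\sigma^{\frac12}P)$ and $\tr(\rho^{\frac12}Q)$. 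All the algebra checks out ($RA=AR=P$, $(I-R)A=-Q$, $\tr(P^2)+\tr(Q^2)=\|P-Q\|_2^2$ since $PQ=0$), so this is a complete, self-contained substitute for the two citations. One caveat worth making explicit: the lemma as stated in the paper says ``positive operators,'' but the upper bound $\norm{\rho-\sigma}{1}\le 2\norm{\rho^{\frac12}-\sigma^{\frac12}}{2}$ is false without normalization (e.g.\ $\rho=9$, $\sigma=0$ in one dimension gives $9\not\le 6$); your proof correctly pinpoints that it needs $\norm{\rho^{\frac12}}{2}=\norm{\sigma^{\frac12}}{2}=1$, i.e.\ unit trace, which is how the lemma is actually used in the paper. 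The Powers--St{\o}rmer direction, by contrast, holds for arbitrary positive operators, as your argument shows. You might state the hypothesis ``density matrices'' (or $\sqrt{\tr\rho}+\sqrt{\tr\sigma}\le 2$) for the second inequality to make the lemma literally correct.
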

\begin{proof}The lower estimate is a special case of Theorem X.1.6 of \cite{bhatia2013matrix} and the upper estimate is Lemma 2.2 \cite{carlen2020recovery}.
\end{proof}

For the Schatten $p$-norm, we have
\begin{lemma}\label{lemma:4to2}
Let $X,Y$ be operators with $\norm{X}{2p}=1$ and $\norm{Y}{2p}=1$.
\[\norm{|X|-|Y|}{2p}^2\le\norm{X^{\dag}X-Y^{\dag}Y}{p}\le 2\norm{X-Y}{2p}\]
\end{lemma}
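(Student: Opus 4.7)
The plan is to prove the two inequalities separately, as they have quite different flavors: the upper bound is a routine Hölder estimate, while the lower bound is the essence of a Schatten-$p$ Powers-Størmer-type inequality.

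For the upper bound $\norm{X^\dag X - Y^\dag Y}{p}\le 2\norm{X-Y}{2p}$, I would use the algebraic identity
\begin{align*}
X^\dag X - Y^\dag Y \lell X^\dag(X-Y) + (X-Y)^\dag Y,
\end{align*}
apply the triangle inequality for $\norm{\cdot}{p}$, and then bound each term by Hölder's inequality $\norm{AB}{p}\le \norm{A}{2p}\norm{B}{2p}$. Together with the normalization $\norm{X}{2p}=\norm{Y}{2p}=1$, this gives
$\norm{X^\dag X-Y^\dag Y}{p}\le \norm{X}{2p}\norm{X-Y}{2p}+\norm{X-Y}{2p}\norm{Y}{2p}=2\norm{X-Y}{2p}$.

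For the lower bound $\norm{|X|-|Y|}{2p}^2\le \norm{X^\dag X-Y^\dag Y}{p}$, I would reduce it to an inequality for positive operators: setting $A=X^\dag X$ and $B=Y^\dag Y$, the claim becomes
\begin{align*}
\norm{A^{\frac12}-B^{\frac12}}{2p}^2 \le \norm{A-B}{p}\pl\text{for}\pl A,B\ge 0.
\end{align*}
This is the Schatten-$p$ generalization of Powers--Størmer (the $p=1$ case recovers Lemma \ref{Bhatia}). I would appeal to the Ando--Bhatia inequality for operator monotone functions: if $f\ge 0$ is operator monotone on $[0,\infty)$ with $f(0)=0$, then for any unitarily invariant norm $|||\cdot|||$ and any $A,B\ge 0$,
\begin{align*}
||| f(A)-f(B) ||| \le ||| f(|A-B|) |||.
\end{align*}
Applying this with $f(t)=t^{1/2}$ (which is operator monotone) and $|||\cdot|||=\norm{\cdot}{2p}$ yields
$\norm{A^{1/2}-B^{1/2}}{2p}\le \norm{|A-B|^{1/2}}{2p}=\norm{A-B}{p}^{1/2}$, and squaring gives the claim.

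The main obstacle is the lower bound, which rests on the non-trivial Ando--Bhatia inequality for operator monotone functions. One could also attempt a direct argument: exploit the self-adjointness of $D := |X|-|Y|$, use the identity $X^\dag X - Y^\dag Y = \tfrac12\{D,|X|+|Y|\}$, and combine with the operator inequality $|D|\le |X|+|Y|$ together with a carefully chosen duality pairing against $|D|^{2p-1}\Op{sign}(D)$. However, tracking the Schatten exponents in this direct approach is delicate, so invoking the Ando--Bhatia theorem (in the same spirit as the citation in Lemma \ref{Bhatia}) gives the cleanest route. The upper bound, by contrast, is straightforward and requires only Hölder's inequality and the normalization assumptions.
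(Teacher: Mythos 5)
Your proposal is correct and follows essentially the same route as the paper: the upper bound is the identical H\"older/triangle-inequality argument (your splitting $X^\dag(X-Y)+(X-Y)^\dag Y$ versus the paper's $(X^\dag-Y^\dag)X+Y^\dag(X-Y)$ is immaterial), and the lower bound invokes the same operator-monotone-function inequality $|||f(A)-f(B)|||\le |||f(|A-B|)|||$ with $f(t)=t^{1/2}$, which is precisely Theorem X.1.6 of Bhatia that the paper cites. Your write-up just makes explicit the reduction to positive operators $A=X^\dag X$, $B=Y^\dag Y$ and the computation $\norm{|A-B|^{1/2}}{2p}=\norm{A-B}{p}^{1/2}$ that the paper leaves implicit.
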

\begin{proof}Using H\"older inequality,
\begin{align*}
\norm{X^{\dag}X-Y^{\dag}Y}{p}\le& \norm{X^{\dag}X-Y^{\dag}X}{p}+\norm{Y^{\dag}X-Y^{\dag}Y}{p}\\
\le& \norm{X^{\dag}-Y^{\dag}}{2p}
\norm{X}{2p}+\norm{Y^{\dag}}{2p}\norm{X-Y}{2p}=2\norm{X-Y}{2p}\pl.
\end{align*}
Another inequality is, again, from Theorem X.1.6 of \cite{bhatia2013matrix}
\end{proof}

\subsection{Continuity of the Petz map in terms of infidelity}
We have the following continuity of the Petz map in terms of infidelity.
\begin{theorem}\label{petz:fidelity}
For density matrices, $\rho_{AB},\sigma_{AB}$, $\rho_{BC},\sigma_{BC}$ and $\rho_{B},\sigma_B$, with
\begin{align*}
\rho_B=\tr_A\rho_{AB},\\%=\tr_C\rho_{BC},\\
\sigma_B=\tr_C\sigma_{BC},%=\tr_A\sigma_{AB}=\tr_C\sigma_{BC},
\end{align*}
If \[F(\rho_{AB},\sigma_{AB})\ge 1-\delta_1\pl, F(\rho_{BC},\sigma_{BC})\ge 1-\delta_2\pl, F(\rho_{B},\sigma_{B})\ge 1-\delta_3,\]
we have
\[
F(\rho_{BC}^{\frac12}\rho_{B}^{-\frac12}\rho_{AB}\rho_{B}^{-\frac12}\rho_{BC}^{\frac12},\sigma_{BC}^{\frac12}\sigma_{B}^{-\frac12}\sigma_{AB}\sigma_{B}^{-\frac12}\sigma_{BC}^{\frac12})\geq 1-2(\delta_1^{\frac12}+\delta_2^{\frac12}+\delta_3^{\frac12})^2.
\]
\end{theorem}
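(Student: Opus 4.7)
The proof will combine the three structural lemmas already established in this section. First I set $X := \rho_{BC}^{1/2}\rho_B^{-1/2}\rho_{AB}^{1/2}$ and $Y := \sigma_{BC}^{1/2}\sigma_B^{-1/2}\sigma_{AB}^{1/2}$, so that $XX^\dagger$ and $YY^\dagger$ are precisely the two Petz-recovered states whose fidelity we want to control. Before invoking Lemma \ref{fidelity}, I need to confirm $\|X\|_2 = \|Y\|_2 = 1$. For $X$, I cycle the trace to write $\tr(XX^\dagger) = \tr(\rho_B^{-1/2}\rho_{AB}\rho_B^{-1/2}\rho_{BC})$ and perform the $A$-trace on $\rho_{AB}$ first, using $\tr_A \rho_{AB} = \rho_B$ so that the middle collapses to the projection onto the support of $\rho_B$, leaving $\tr(\rho_{BC}) = 1$. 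For $Y$ I instead carry out the $C$-trace on $\sigma_{BC}$ first, using $\tr_C \sigma_{BC} = \sigma_B$, which reduces the expression to $\tr(\sigma_{AB}) = 1$. These are exactly the two asymmetric marginal hypotheses in the statement.

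Next I apply Lemma \ref{distance}, which is phrased precisely for this setup, to obtain
\[
\|X - Y\|_2 \le \|\rho_{AB}^{1/2} - \sigma_{AB}^{1/2}\|_2 + \|\rho_{BC}^{1/2} - \sigma_{BC}^{1/2}\|_2 + \|\rho_B^{1/2} - \sigma_B^{1/2}\|_2.
\]
Each term on the right is bounded via Lemma \ref{f-1}: from $F(\rho_\bullet,\sigma_\bullet) \geq 1-\delta_i$ I get $\|\rho_\bullet^{1/2}-\sigma_\bullet^{1/2}\|_2 \leq 2\sqrt{\delta_i}$. Summing yields $\|X-Y\|_2 \leq 2(\sqrt{\delta_1}+\sqrt{\delta_2}+\sqrt{\delta_3})$.

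Finally I apply Lemma \ref{fidelity} to the adjoint pair $X^\dagger, Y^\dagger$, which still have unit Hilbert--Schmidt norm and satisfy $\|X^\dagger-Y^\dagger\|_2 = \|X-Y\|_2$. The lemma produces $F(XX^\dagger, YY^\dagger) \geq 1 - \tfrac{1}{2}\|X-Y\|_2^2$, and plugging in the previous bound yields the claimed $1 - 2(\sqrt{\delta_1}+\sqrt{\delta_2}+\sqrt{\delta_3})^2$.

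I do not anticipate a serious obstacle: the proof is essentially bookkeeping once the three lemmas are in hand. The only mildly subtle step is verifying the unit $S_2$-normalization of $X$ and $Y$, which has to invoke each of the two asymmetric marginal conditions separately (the $A$-trace for $X$, the $C$-trace for $Y$); this is presumably why the hypothesis is phrased in its asymmetric form rather than the stronger $\rho_B = \tr_A \rho_{AB} = \tr_C \rho_{BC}$.
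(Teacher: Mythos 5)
Your proposal is correct and follows exactly the paper's route: convert the fidelity hypotheses to $\|\rho_\bullet^{1/2}-\sigma_\bullet^{1/2}\|_2$ bounds via Lemma \ref{f-1}, combine them with Lemma \ref{distance}, and finish with Lemma \ref{fidelity}. You are in fact slightly more careful than the paper, which silently skips both the verification that $\|X\|_2=\|Y\|_2=1$ and the fact that Lemma \ref{fidelity} must be applied to $X^\dagger,Y^\dagger$ to produce $F(XX^\dagger,YY^\dagger)$ rather than $F(X^\dagger X,Y^\dagger Y)$.
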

\begin{proof}
According to Lemma \ref{f-1}, we have
\begin{align*}
\norm{\rho_{AB}^{\frac12}-\sigma_{AB}^{\frac12}}{2}\le 2{\delta_1}^{\frac12}, \norm{\rho_{BC}^{\frac12}-\sigma_{BC}^{\frac12}}{2}\le 2{\delta_2}^{\frac12},
\norm{\rho_{B}^{\frac12}-\sigma_{B}^{\frac12}}{2}\le 2{\delta_3}^{\frac12}.
\end{align*}
Using Lemma \ref{distance}, we have
\begin{align*}
\norm{\rho_{BC}^{\frac12}\rho_{B}^{-\frac12}\rho_{AB}^{\frac12}
-\sigma_{BC}^{\frac12}\sigma_{B}^{-\frac12}\sigma_{AB}^{\frac12}}{2}\le2(\delta_1^{\frac12}+\delta_2^{\frac12}+\delta_3^{\frac12}).
\end{align*}
Lemma \ref{fidelity} implies
\begin{align*}
F(\rho_{BC}^{\frac12}\rho_{B}^{-\frac12}\rho_{AB}\rho_{B}^{-\frac12}\rho_{BC}^{\frac12},\sigma_{BC}^{\frac12}\sigma_{B}^{-\frac12}\sigma_{AB}\sigma_{B}^{-\frac12}\sigma_{BC}^{\frac12})\geq 1-2(\delta_1^{\frac12}+\delta_2^{\frac12}+\delta_3^{\frac12})^2.
\end{align*}
\end{proof}
We immediately have the following result,
\begin{cor}\label{qmc:fidelity}
Let $\rho_{ABC},\sigma_{ABC}$ be two quantum Markov Chains with \[F(\rho_{AB},\sigma_{AB})\ge 1-\delta_1\pl, F(\rho_{BC},\sigma_{BC})\ge 1-\delta_2\pl, F(\rho_{B},\sigma_{B})\ge 1-\delta_3\]
then
\[
F(\rho_{ABC},\sigma_{ABC})\geq 1-2(\delta_1^{\frac12}+\delta_2^{\frac12}+\delta_3^{\frac12})^2.
\]
\end{cor}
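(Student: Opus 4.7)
The plan is that this corollary is essentially an immediate specialization of Theorem \ref{petz:fidelity} to the case where both tripartite states are quantum Markov chains, so almost no additional work is needed.

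First I would invoke the characterization of quantum Markov chains via the Petz map, as recorded in equation \eqref{eq:petz}. Since $\rho_{ABC}$ is a quantum Markov chain, it equals its Petz recovery from the $AB$ and $BC$ marginals:
\begin{align*}
\rho_{ABC} = \rho_{BC}^{\frac12}\rho_{B}^{-\frac12}(\rho_{AB}\ten id_C)\rho_{B}^{-\frac12}\rho_{BC}^{\frac12},
\end{align*}
and analogously for $\sigma_{ABC}$. This reduces the problem of controlling $F(\rho_{ABC},\sigma_{ABC})$ to controlling the fidelity between the two Petz-reconstructed operators, which is exactly the quantity bounded in Theorem \ref{petz:fidelity}.

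Next I would check that the consistency hypotheses of Theorem \ref{petz:fidelity} are satisfied, namely $\rho_B = \tr_A \rho_{AB}$ and $\sigma_B = \tr_C \sigma_{BC}$. These follow automatically because $\rho_B$ is the $B$-marginal of the global state $\rho_{ABC}$, hence coincides with both $\tr_A\rho_{AB}$ and $\tr_C\rho_{BC}$; the analogous statement holds for $\sigma$. With these conditions verified, Theorem \ref{petz:fidelity} applies verbatim with the given values of $\delta_1,\delta_2,\delta_3$ and yields
\begin{align*}
F(\rho_{ABC},\sigma_{ABC}) \geq 1 - 2(\delta_1^{\frac12}+\delta_2^{\frac12}+\delta_3^{\frac12})^2.
\end{align*}

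There is no real obstacle here; the only thing to be mindful of is not to confuse the hypothesis on marginals, since the marginal consistency is an automatic byproduct of $\rho_{ABC}$ and $\sigma_{ABC}$ being genuine tripartite density matrices. The substantive work, including the triangle-inequality estimate of Lemma \ref{distance}, the passage from $\|\rho^{1/2}-\sigma^{1/2}\|_2$ to infidelity via Lemma \ref{f-1}, and the use of Lemma \ref{fidelity} to convert an $\ell_2$ bound between ``square roots'' back into a fidelity bound, has all been done in the proof of Theorem \ref{petz:fidelity}, so the corollary is a one-line consequence.
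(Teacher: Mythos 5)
Your proposal is correct and matches the paper's intent exactly: the paper gives no written proof and simply states the corollary "immediately" follows from Theorem \ref{petz:fidelity}, which is precisely the specialization you carry out, using the Petz-map characterization $\rho_{ABC}=\rho_{BC}^{1/2}\rho_{B}^{-1/2}(\rho_{AB}\otimes id_C)\rho_{B}^{-1/2}\rho_{BC}^{1/2}$ for both states and noting the marginal-consistency hypotheses hold automatically. Nothing further is needed.
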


\subsection{Continuity of the Petz map in terms of trace distance}

Below is the continuity of the Petz map in terms of $\ell_1$-norm.

\begin{theorem}\label{petz:distance}
For density matrices, $\rho_{AB},\sigma_{AB}$, $\rho_{BC},\sigma_{BC}$ and $\rho_{B},\sigma_B$, with
\begin{align*}
\rho_B=\tr_A\rho_{AB},\\%=\tr_C\rho_{BC},\\
\sigma_B=\tr_C\sigma_{BC},%=\tr_A\sigma_{AB}=\tr_C\sigma_{BC},
\end{align*}
If \[\norm{\rho_{AB}-\sigma_{AB}}{1}\le \eps_1\pl, \norm{\rho_{BC}-\sigma_{BC}}{1}\le \eps_2\pl, \norm{\rho_{B}-\sigma_{B}}{1}\le \eps_3,\]
we have
\[
||\rho_{BC}^{\frac12}\rho_{B}^{-\frac12}\rho_{AB}\rho_{B}^{-\frac12}\rho_{BC}^{\frac12}-\sigma_{BC}^{\frac12}\sigma_{B}^{-\frac12}\sigma_{AB}\sigma_{B}^{-\frac12}\sigma_{BC}^{\frac12}||_1\leq 2(\eps_1^{\frac12}+\eps_2^{\frac12}+\eps_3^{\frac12}).
\]
\end{theorem}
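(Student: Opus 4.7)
\textbf{Proof proposal for Theorem \ref{petz:distance}.} The plan mirrors the fidelity-case argument (Theorem \ref{petz:fidelity}), but routes through the $S_2 \to S_1$ conversion provided by Lemma \ref{lemma:4to2} in place of Lemma \ref{fidelity}. The key observation is that the Petz output factors as a positive square: setting
\[
X \;=\; \rho_{BC}^{1/2}\rho_{B}^{-1/2}\rho_{AB}^{1/2}, \qquad Y \;=\; \sigma_{BC}^{1/2}\sigma_{B}^{-1/2}\sigma_{AB}^{1/2},
\]
we have $XX^{\dagger} = \rho_{BC}^{1/2}\rho_{B}^{-1/2}\rho_{AB}\rho_{B}^{-1/2}\rho_{BC}^{1/2}$ and likewise $YY^{\dagger}$ is the $\sigma$-side Petz output. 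So the quantity to bound is exactly $\|XX^{\dagger} - YY^{\dagger}\|_1$.

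The first step is to verify the normalization $\|X\|_2 = \|Y\|_2 = 1$. A direct cyclic-trace computation gives $\tr(X^\dagger X) = \tr\bigl(\rho_{AB}\rho_{B}^{-1/2}\rho_{BC}\rho_{B}^{-1/2}\bigr)$; taking the partial trace over $A$ and using the hypothesis $\rho_B = \tr_A\rho_{AB}$ collapses this to $\tr(\tr_C\rho_{BC}) = 1$. The symmetric computation for $Y$ uses the hypothesis $\sigma_B = \tr_C\sigma_{BC}$ together with taking the partial trace over $C$ first. These are exactly the two marginal assumptions in the statement, so both normalizations hold.

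The second step is to apply Lemma \ref{lemma:4to2} with $p=1$ to the adjoints $X^{\dagger}, Y^{\dagger}$ (whose $S_2$-norms also equal $1$), giving
\[
\|XX^{\dagger} - YY^{\dagger}\|_1 \;\le\; 2\,\|X^{\dagger} - Y^{\dagger}\|_2 \;=\; 2\,\|X - Y\|_2.
\]
The third step is Lemma \ref{distance}, which immediately yields
\[
\|X - Y\|_2 \;\le\; \|\rho_{AB}^{1/2} - \sigma_{AB}^{1/2}\|_2 + \|\rho_{BC}^{1/2} - \sigma_{BC}^{1/2}\|_2 + \|\rho_{B}^{1/2} - \sigma_{B}^{1/2}\|_2.
\]
Finally, invoking the Powers--Störmer-type inequality in Lemma \ref{Bhatia} to each term converts square-root $S_2$ differences into trace-norm differences via $\|\rho^{1/2}-\sigma^{1/2}\|_2 \le \|\rho - \sigma\|_1^{1/2}$, producing the bound $\eps_1^{1/2} + \eps_2^{1/2} + \eps_3^{1/2}$, which multiplied by the factor of $2$ gives the claim.

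The only real obstacle is bookkeeping: one has to make sure the asymmetric marginal hypotheses ($\rho_B = \tr_A\rho_{AB}$ versus $\sigma_B = \tr_C\sigma_{BC}$) suffice to justify both the $\|X\|_2 = 1$ calculation and the hypotheses of Lemma \ref{distance}. Both proofs hinge on being able to ``absorb'' one marginal into $\rho_B^{-1/2}$ on one side and the other marginal into $\sigma_B^{-1/2}$ on the other side, so the two asymmetric assumptions are precisely what the argument needs, and nothing more delicate arises. Once the normalization is confirmed, the proof is a clean three-line chain $\text{Lemma \ref{lemma:4to2}} \to \text{Lemma \ref{distance}} \to \text{Lemma \ref{Bhatia}}$.
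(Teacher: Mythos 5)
Your proof is correct and follows essentially the same route as the paper's: Powers--St\"ormer (Lemma \ref{Bhatia}) to pass to square roots, the factorization of the Petz output as $XX^{\dagger}$ together with the $p=1$ case of Lemma \ref{lemma:4to2} to reduce to $2\|X-Y\|_2$, and Lemma \ref{distance} to finish. You also correctly identify Lemma \ref{lemma:4to2} as the relevant tool for the $S_1$-versus-$S_2$ step (the paper's proof mis-cites Lemma \ref{fidelity} there) and you supply the normalization check $\|X\|_2=\|Y\|_2=1$ that the paper leaves implicit.
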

\begin{proof}
According to Lemma \ref{Bhatia}, we have
\begin{align*} \norm{\rho_{AB}^{\frac12}-\sigma_{AB}^{\frac12}}{2}\le \eps_1^{\frac12}, \norm{\rho_{BC}^{\frac12}-\sigma_{BC}^{\frac12}}{2}\le \eps_2^{\frac12},
\norm{\rho_{B}^{\frac12}-\sigma_{B}^{\frac12}}{2}\le \eps_3^{\frac12}.
\end{align*}
Using Lemma \ref{fidelity}, we obtain
\begin{align*}
||\rho_{BC}^{\frac12}\rho_{B}^{-\frac12}\rho_{AB}\rho_{B}^{-\frac12}\rho_{BC}^{\frac12}-\sigma_{BC}^{\frac12}\sigma_{B}^{-\frac12}\sigma_{AB}\sigma_{B}^{-\frac12}\sigma_{BC}^{\frac12}||_1\le &
2\norm{\rho_{BC}^{\frac12}\rho_{B}^{-\frac12}\rho_{AB}^{\frac12}
-\sigma_{BC}^{\frac12}\sigma_{B}^{-\frac12}\sigma_{AB}^{\frac12}}{2}\\
\le & 2(\eps_1^{\frac12}+\eps_2^{\frac12}+\eps_3^{\frac12})\pl,
\end{align*}
\end{proof}
\begin{cor}\label{cor:L2}
Let $\rho_{ABC},\sigma_{ABC}$ be two quantum Markov Chains \[\norm{\rho_{AB}-\sigma_{AB}}{1}\le \eps_1\pl, \norm{\rho_{BC}-\sigma_{BC}}{1}\le \eps_2\pl, \norm{\rho_{B}-\sigma_{B}}{1}\le \eps_3\]
Then
\[ \norm{\rho_{ABC}-\sigma_{ABC}}{1}\le 2(\eps_1^{\frac12}+\eps_2^{\frac12}+\eps_3^{\frac12})\]
\end{cor}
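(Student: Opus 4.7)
The plan is to recognize this as an immediate consequence of Theorem \ref{petz:distance}, invoking the Petz-recovery characterization of quantum Markov chains. Since $\rho_{ABC}$ and $\sigma_{ABC}$ are both quantum Markov chains, equation \eqref{eq:petz} gives
\begin{align*}
\rho_{ABC} &= \rho_{BC}^{\frac12}\rho_{B}^{-\frac12}(\rho_{AB}\otimes id_C)\rho_{B}^{-\frac12}\rho_{BC}^{\frac12},\\
\sigma_{ABC} &= \sigma_{BC}^{\frac12}\sigma_{B}^{-\frac12}(\sigma_{AB}\otimes id_C)\sigma_{B}^{-\frac12}\sigma_{BC}^{\frac12},
\end{align*}
so the left-hand side $\|\rho_{ABC}-\sigma_{ABC}\|_1$ equals the quantity already bounded in Theorem \ref{petz:distance}.

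Before applying the theorem I should check that its hypothesis on the marginals is met. That hypothesis requires $\rho_B=\tr_A\rho_{AB}$ and $\sigma_B=\tr_C\sigma_{BC}$. Since $\rho_{ABC}$ and $\sigma_{ABC}$ are bona fide tripartite states, consistency of the marginals gives $\rho_B=\tr_A\rho_{AB}=\tr_C\rho_{BC}$ and similarly for $\sigma$, so both required identities hold automatically.

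With the hypothesis verified, I plug $\eps_1,\eps_2,\eps_3$ into Theorem \ref{petz:distance} and read off
\[
\|\rho_{ABC}-\sigma_{ABC}\|_1 \le 2(\eps_1^{\frac12}+\eps_2^{\frac12}+\eps_3^{\frac12}).
\]
There is essentially no obstacle here: the work was done in Lemma \ref{distance}, Lemma \ref{Bhatia}, and Lemma \ref{lemma:4to2}, which Theorem \ref{petz:distance} already combines. The corollary just specializes the three-marginal Petz continuity estimate to the case where both tripartite states are themselves fixed points of their Petz recovery, so no additional argument is needed.
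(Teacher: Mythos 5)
Your proposal is correct and matches the paper's intent exactly: the paper states Corollary \ref{cor:L2} with no separate proof, treating it as an immediate specialization of Theorem \ref{petz:distance} once both states are written via their Petz recovery as in \eqref{eq:petz}, and your check that the marginal-consistency hypotheses hold automatically for genuine tripartite states is the only (correctly handled) point of care.
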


\subsection{Continuity of general Petz maps}
Let $\Phi$ be a quantum channel. The Petz recovery map associated with an input state $\sigma$ is
\[\hat{\Phi}_\sigma(X)=\sigma^{1/2}\Phi^*(\Phi(\sigma)^{-1/2}X\Phi(\sigma)^{-1/2})\sigma^{1/2}\pl.\]
Let $\Phi(\rho)=\tr_E(V\rho V^*)$ be the Stinespring dilation of $\Phi$. We have
\[\hat{\Phi}_\sigma(X)=\sigma^{1/2} V^*\Big(\Phi(\sigma)^{-1/2}X\Phi(\sigma)^{-1/2}\ten I_E\Big)V\sigma^{1/2} . \]

\begin{theorem}\label{Petz}
Let $\Phi:B\to A$ be a quantum channel and $\rho_{BC},\sigma_{BC}$ be two bipartite state. Denote $\hat{\Phi}_\rho$ and $\hat{\Phi}_{\sigma}$ be the Petz recovery maps for $\rho_B,\sigma_B$ respectively. Then
for any bipartite states $\rho_{BC}$ with $\tr_C\rho_BC=\rho_B$ and $\omega_{BC}$,
\begin{align*}
%\norm{R_\rho(\Phi(\rho_{BC}))-R_\sigma(\Phi(\rho_{BC}))}{1}\le & 2\norm{\rho_B^{1/2}-\sigma_B^{1/2}}{2}+2\norm{\Phi(\rho_B)^{1/2}-\Phi(\sigma_B)^{1/2}}{2}\\
\norm{\hat{\Phi}_\rho(\Phi(\rho_{BC}))-\hat{\Phi}_\sigma(\Phi(\omega_{BC}))}{1}\le & 2\norm{\rho_B^{1/2}-\sigma_B^{1/2}}{2}+2\norm{\Phi(\rho_B)^{1/2}-\Phi(\sigma_B)^{1/2}}{2}
\\ &+ 2 \norm{\Phi(\rho_{BC})^{1/2}-\Phi(\omega_{BC})^{1/2}}{2}
\end{align*}
\end{theorem}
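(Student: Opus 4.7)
The plan is to follow the spirit of Lemma \ref{distance}, combined with the $\ell_1$-to-$\ell_2$ passage of Lemma \ref{lemma:4to2}. I first split via the triangle inequality:
\begin{align*}
\norm{\hat{\Phi}_\rho(\Phi(\rho_{BC})) - \hat{\Phi}_\sigma(\Phi(\omega_{BC}))}{1}
&\le \norm{\hat{\Phi}_\rho(\Phi(\rho_{BC})) - \hat{\Phi}_\sigma(\Phi(\rho_{BC}))}{1}\\
&\quad + \norm{\hat{\Phi}_\sigma(\Phi(\rho_{BC})) - \hat{\Phi}_\sigma(\Phi(\omega_{BC}))}{1}.
\end{align*}
Throughout, I abbreviate $\Phi\otimes\id_C$ to $\Phi$ and $\hat{\Phi}_\rho\otimes\id_C$ to $\hat{\Phi}_\rho$. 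The second summand is immediate: $\hat{\Phi}_\sigma$ is CPTP and hence contracts trace distance, and combining with Lemma \ref{Bhatia} yields the bound $2\norm{\Phi(\rho_{BC})^{1/2}-\Phi(\omega_{BC})^{1/2}}{2}$, i.e.\ the third term of the theorem.

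For the first summand, set $X:=\Phi(\rho_{BC})$ and
\[ Y_\rho := (X^{1/2}\otimes I_E)(\Phi(\rho_B)^{-1/2}\otimes I_{CE})(V\otimes I_C)(\rho_B^{1/2}\otimes I_C), \]
with $Y_\sigma$ obtained by replacing $\rho$ by $\sigma$ everywhere. Direct expansion shows $Y_\rho^* Y_\rho = \hat{\Phi}_\rho(X)$ and $Y_\sigma^* Y_\sigma = \hat{\Phi}_\sigma(X)$. The hypothesis $\tr_C\rho_{BC}=\rho_B$ forces $\tr_C X=\Phi(\rho_B)$, whence $\tr(Y_\rho^* Y_\rho)=\tr(X)=1$; assuming $\Phi(\sigma_B)$ has full support (else restrict to the support or approximate) also $\norm{Y_\sigma}{2}=1$. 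Lemma \ref{lemma:4to2} then delivers
\[ \norm{\hat{\Phi}_\rho(X) - \hat{\Phi}_\sigma(X)}{1} \le 2\norm{Y_\rho - Y_\sigma}{2}. \]

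The heart of the argument is bounding $\norm{Y_\rho-Y_\sigma}{2}$, which I do by telescoping through the intermediate
\[ Y_1 := (X^{1/2}\otimes I_E)(\Phi(\rho_B)^{-1/2}\otimes I_{CE})(V\otimes I_C)(\sigma_B^{1/2}\otimes I_C), \]
i.e.\ first changing $\rho_B^{1/2}\to\sigma_B^{1/2}$ and then $\Phi(\rho_B)^{-1/2}\to\Phi(\sigma_B)^{-1/2}$. For $\norm{Y_\rho-Y_1}{2}^2$, repeated use of the ``partial Fubini'' identity
\[ \tr_{AEC}\big[(A_{AE}\otimes I_C)(B_{AC}\otimes I_E)\big] = \tr_A\big[\tr_E(A)\cdot \tr_C(B)\big] \]
reduces the expression to $\tr\big[(\rho_B^{1/2}-\sigma_B^{1/2})^2\cdot \tr_C Z_\rho\big]$, where $Z_\rho:=(\rho_B^{-1/2}\otimes I_C)\hat{\Phi}_\rho(X)(\rho_B^{-1/2}\otimes I_C)$. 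The pivotal cancellation is
\[ \tr_C \hat{\Phi}_\rho(X) = \hat{\Phi}_\rho(\tr_C X) = \hat{\Phi}_\rho(\Phi(\rho_B)) = \rho_B, \]
so $\tr_C Z_\rho = I_B$ and hence $\norm{Y_\rho-Y_1}{2}=\norm{\rho_B^{1/2}-\sigma_B^{1/2}}{2}$. The parallel computation for $\norm{Y_1-Y_\sigma}{2}^2$, using $\tr_E(V\sigma_B V^*)=\Phi(\sigma_B)$ together with $\tr_C X=\Phi(\rho_B)$, produces $\tr\big[\Phi(\sigma_B)(\Phi(\rho_B)^{-1/2}-\Phi(\sigma_B)^{-1/2})\Phi(\rho_B)(\Phi(\rho_B)^{-1/2}-\Phi(\sigma_B)^{-1/2})\big]$, and four uses of cyclicity collapse this to $2-2\tr(\Phi(\rho_B)^{1/2}\Phi(\sigma_B)^{1/2})=\norm{\Phi(\rho_B)^{1/2}-\Phi(\sigma_B)^{1/2}}{2}^2$.

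The main subtle point is the \emph{order} of the telescope: $\rho_B^{1/2}$ must be swapped before $\Phi(\rho_B)^{-1/2}$. With the opposite order the middle term becomes $\tr\big[\Phi(\rho_B)(\Phi(\rho_B)^{-1/2}-\Phi(\sigma_B)^{-1/2})\Phi(\rho_B)(\Phi(\rho_B)^{-1/2}-\Phi(\sigma_B)^{-1/2})\big]$, which fails to telescope and can be arbitrarily large when $\Phi(\sigma_B)$ has a small eigenvalue; the chosen order is what makes the inverses cancel cleanly. Combining the two intermediate bounds with the easy second summand yields the stated three-term estimate.
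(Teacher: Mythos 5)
Your proof is correct and takes essentially the same route as the paper's: the same Stinespring factorization of the Petz output as $Y^{\dagger}Y$, the same passage from $\ell_1$ to $\ell_2$ via Lemma \ref{lemma:4to2}, and the same ordered telescope (swap $\rho_B^{1/2}$ before the inverse square roots) with the same two trace cancellations $\Phi^*(I)=I$ and $\tr_C\Phi(\rho_{BC})=\Phi(\rho_B)$. The only, harmless, deviation is that you dispose of the $\Phi(\rho_{BC})\to\Phi(\omega_{BC})$ replacement up front by data processing of the CPTP map $\hat{\Phi}_\sigma$ combined with Lemma \ref{Bhatia}, whereas the paper carries it as a third term inside the $\ell_2$ telescope; both yield the same constant.
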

\begin{proof}
Let $\Phi(\rho)=\tr_E(V\rho V^{\dagger})$ be the Stinespring dilation of $\Phi$. We have
\[\hat{\Phi}_\sigma(X)=\sigma_B^{1/2} V^{\dagger}\Big(\Phi(\sigma_B)^{-1/2}X\Phi(\sigma_B)^{-1/2}\ten I_E\Big)V\sigma_B^{1/2} . \]
Then, by the triangle inequality,
\begin{align*}
&\norm{\hat{\Phi}_\rho(\Phi(\rho_{BC}))-\hat{\Phi}_\sigma(\Phi(\omega_{BC}))}{1}\\
\le &2\norm{\sigma_B^{1/2} V^{\dagger}\Big(\Phi(\sigma_B)^{-1/2}\Phi(\sigma_{BC})^{1/2}\ten I_E\Big)-\rho_B^{1/2} V^{\dagger}\Big(\Phi(\rho_B)^{-1/2}\Phi(\rho_{BC})^{1/2}\ten I_E\Big)}{2}\\
\le &2\norm{(\rho_B^{1/2} -\sigma_B^{1/2}) V^{\dagger}\Big(\Phi(\rho_B)^{-1/2}\Phi(\rho_{BC})^{1/2}\ten I_E\Big)}{2}\\ +&2\norm{\sigma_B^{1/2} V^{\dagger}\Big(\big(\Phi(\sigma_B)^{-1/2}-\Phi(\rho_B)^{-1/2}\big)\Phi(\rho_{BC})^{1/2}\ten I_E\Big)}{2}
\\ +&2\norm{\sigma_B^{1/2} V^{\dagger}\Big(\Phi(\sigma_B)^{-1/2}\big(\Phi(\rho_{BC})^{1/2}-\Phi(\omega_{BC})^{1/2}\big)\ten I_E\Big)}{2}
\end{align*}
For the first term,
\begin{align*}
&\norm{(\rho_B^{1/2} -\sigma_B^{1/2}) V^{\dagger}\Big(\Phi(\rho_B)^{-1/2}\Phi(\rho_{BC})^{1/2}\ten I_E\Big)}{2}^2\\
=&\norm{(\rho_B^{1/2} -\sigma_B^{1/2}) \Phi^*\big(\Phi(\rho_B)^{-1/2}\Phi(\rho_{BC})\Phi(\rho_B)^{-1/2}\ten I_E\big)(\rho_B^{1/2} -\sigma_B^{1/2})}{1}
\\
=&\tr_{BC}\Big((\rho_B^{1/2} -\sigma_B^{1/2})^2 \Phi^*\big(\Phi(\rho_B)^{-1/2}\Phi(\rho_{BC})\Phi(\rho_B)^{-1/2}\big)\Big)
\\
=&\tr_{B}\Big((\rho_B^{1/2} -\sigma_B^{1/2})^2 \Phi^*\big(\Phi(\rho_B)^{-1/2}\Phi(\rho_{B})\Phi(\rho_B)^{-1/2}\big)\Big)
\\
=&\tr_{B}\Big((\rho_B^{1/2} -\sigma_B^{1/2})^2 \Phi^*(I)\Big)
\\
=&\tr_{B}\Big((\rho_B^{1/2} -\sigma_B^{1/2})^2\Big)=\norm{\rho_B^{1/2}-\sigma_B^{1/2}}{2}
\end{align*}
For the second term
\begin{align*}
&\norm{\sigma_B^{1/2}V^{\dagger}\Big(\big(\Phi(\sigma_B)^{-1/2}-\Phi(\rho_B)^{-1/2}\big)\Phi(\rho_{BC})^{1/2}\ten I_E\Big)}{2}^2\\
=&\tr_{BC}\left( \sigma_B \Phi^*\left( \left(\Phi(\sigma_B)^{-1/2}-\Phi(\rho_B)^{-1/2}\right) \Phi(\rho_{BC}) \left(\Phi(\sigma_B)^{-1/2}-\Phi(\rho_B)^{-1/2}\right) \right) \right) \\
=&\tr_{BC}\left( \Phi(\sigma_B) \left(\Phi(\sigma_B)^{-1/2}-\Phi(\rho_B)^{-1/2}\right) \Phi(\rho_{BC}) \left(\Phi(\sigma_B)^{-1/2}-\Phi(\rho_B)^{-1/2}\right) \right) \\
=&\tr_{B}\left( \Phi(\sigma_B) \left(\Phi(\sigma_B)^{-1/2}-\Phi(\rho_B)^{-1/2}\right) \Phi(\rho_{B}) \left(\Phi(\sigma_B)^{-1/2}-\Phi(\rho_B)^{-1/2}\right) \right) \\
=& \tr_{B}\Big((\Phi(\rho_B)^{1/2} -\Phi(\sigma_B)^{1/2})^2\Big)\\
=&\norm{\Phi(\rho_B)^{1/2}-\Phi(\sigma_B)^{1/2}}{2}
\end{align*}
For the third term
\begin{align*}
&\norm{\sigma_B^{1/2} V^{\dagger}\Phi(\sigma_B)^{-1/2}\big(\Phi(\rho_{BC})^{1/2}-\Phi(\omega_{BC})\big)^{1/2}\ten I_E\Big)}{2}^2
\\=&\tr_{BCE}\Big(\big((\Phi(\rho_{BC})^{1/2}-\Phi(\omega_{BC}))^{1/2}\Phi(\sigma_B)^{-1/2}\ten I_E\big) V\sigma_B V^{\dagger}\big(\Phi(\sigma_B)^{-1/2}\Phi(\rho_{BC})^{1/2}-\Phi(\omega_{BC})^{1/2}\ten I_E\Big)
\\=&\tr_{BC}\Big(\Phi(\sigma_B)^{-1/2} \Phi(\sigma_B)\Phi(\sigma_B)^{-1/2}(\Phi(\rho_{BC})^{1/2}-\Phi(\sigma_{BC})^{1/2})^2\Big)
\\=&\tr_{BC}(\Phi(\rho_{BC})^{1/2}-\Phi(\omega_{BC})^{1/2})^2=\norm{\Phi(\rho_{BC})^{1/2}-\Phi(\omega_{BC})^{1/2} }{2}^2
\end{align*}
This completes the proof.
\end{proof}

\subsection{Continuity of the Petz map in terms of $\ell_2$ distance}

\begin{theorem}\label{Petz-2}
For density matrices, $\rho_{AB},\sigma_{AB}$, $\rho_{BC},\sigma_{BC}$ and $\rho_{B},\sigma_B$, with
\begin{align*}
\rho_B=\tr_A\rho_{AB},\\
\sigma_B=\tr_C\sigma_{BC}.
\end{align*}
We have
\begin{align*}
%\norm{\rho_{BC}^{\frac12}\rho_{B}^{-\frac12}\rho_{AB}\rho_{B}^{-\frac12}\rho_{BC}^{\frac12}
%-\sigma_{BC}^{\frac12}\sigma_{B}^{-\frac12}\rho_{AB}\sigma_{B}^{-\frac12}\sigma_{BC}^{\frac12}}{2}\le & 2d_A^{1/4}\norm{\sigma_{BC}^{\frac12}-\rho_{BC}^{\frac12}}{4}+2d_A^{1/4}d_B^{1/4}\norm{\sigma_{B}^{\frac12}-\rho_{B}^{\frac12}}{4},\\
&\norm{\rho_{BC}^{\frac12}\rho_{B}^{-\frac12}\rho_{AB}\rho_{B}^{-\frac12}\rho_{BC}^{\frac12}
-\sigma_{BC}^{\frac12}\sigma_{B}^{-\frac12}\sigma_{AB}\sigma_{B}^{-\frac12}\sigma_{BC}^{\frac12}}{2}\\ \le & 2d_A^{1/4}\norm{\sigma_{BC}^{\frac12}-\rho_{BC}^{\frac12}}{4}+2d_A^{1/4}d_C^{1/4}\norm{\sigma_{B}^{\frac12}-\rho_{B}^{\frac12}}{4}+2d_C^{1/4}\norm{\sigma_{AB}^{\frac12}-\rho_{AB}^{\frac12}}{4}.
\end{align*}
\end{theorem}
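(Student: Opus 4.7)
The plan is to mirror the proof of Theorem~\ref{petz:distance} at one higher Schatten exponent: replace the $\|\cdot\|_1\le 2\|\cdot\|_2$ reduction used there by the $p=2$ case of Lemma~\ref{lemma:4to2}, namely $\|XX^\dagger-YY^\dagger\|_2\le 2\|X-Y\|_4$, and carry out the factor-by-factor $\|\cdot\|_4$ bounds via Lieb-Thirring's inequality $\tr((AB)^2)\le\tr(A^2B^2)$ for $A,B\ge 0$ together with the partial-trace formula $\tr(P_{BC}Q_{AB})=\tr_B[(\tr_C P)(\tr_A Q)]$. Setting $X:=\rho_{BC}^{1/2}\rho_B^{-1/2}\rho_{AB}^{1/2}$ and $Y:=\sigma_{BC}^{1/2}\sigma_B^{-1/2}\sigma_{AB}^{1/2}$, so $M=XX^\dagger$ and $N=YY^\dagger$, the hypotheses $\rho_B=\tr_A\rho_{AB}$ and $\sigma_B=\tr_C\sigma_{BC}$ give $\tr(X^\dagger X)=\tr(Y^\dagger Y)=1$; since $X^\dagger X$ and $Y^\dagger Y$ are then density operators on $\cH_{ABC}$, one has $\|X\|_4^4=\|X^\dagger X\|_2^2\le 1$ and similarly for $Y$. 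Applying Lemma~\ref{lemma:4to2} to $X^\dagger,Y^\dagger$ therefore yields $\|M-N\|_2\le 2\|X-Y\|_4$.

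Next, decompose $X-Y=T_1+T_2+T_3$ with $T_1=(\rho_{BC}^{1/2}-\sigma_{BC}^{1/2})\rho_B^{-1/2}\rho_{AB}^{1/2}$, $T_2=\sigma_{BC}^{1/2}(\rho_B^{-1/2}-\sigma_B^{-1/2})\rho_{AB}^{1/2}$, and $T_3=\sigma_{BC}^{1/2}\sigma_B^{-1/2}(\rho_{AB}^{1/2}-\sigma_{AB}^{1/2})$. Set $R:=\rho_B^{-1/2}\rho_{AB}\rho_B^{-1/2}$ and $S:=\sigma_B^{-1/2}\sigma_{BC}\sigma_B^{-1/2}$; both are positive semidefinite with $\tr_A R=I_B$, $\tr_C S=I_B$, and the standard Choi-type bound yields $\|R\|_\infty\le d_A$ and $\|S\|_\infty\le d_C$, so $\tr_A R^2\le d_A I_B$ and $\tr_C S^2\le d_C I_B$. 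Writing $\Delta_{BC}:=\rho_{BC}^{1/2}-\sigma_{BC}^{1/2}$ and $W:=\rho_B^{-1/2}\rho_{AB}^{1/2}$ (so $WW^\dagger=R$), direct expansion together with Lieb-Thirring and the partial-trace formula gives
\[
\|T_1\|_4^4=\tr\bigl((\Delta_{BC}^2 R)^2\bigr)\le\tr(\Delta_{BC}^4 R^2)=\tr_B\!\bigl[(\tr_C\Delta_{BC}^4)(\tr_A R^2)\bigr]\le d_A\|\Delta_{BC}\|_4^4,
\]
hence $\|T_1\|_4\le d_A^{1/4}\|\Delta_{BC}\|_4$; symmetrically, $\|T_3\|_4\le d_C^{1/4}\|\rho_{AB}^{1/2}-\sigma_{AB}^{1/2}\|_4$.

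The main obstacle is the middle term, as $\rho_B^{-1/2}-\sigma_B^{-1/2}$ admits no obvious Lipschitz control. I resolve this via the identity $\rho_B^{-1/2}-\sigma_B^{-1/2}=\sigma_B^{-1/2}(\sigma_B^{1/2}-\rho_B^{1/2})\rho_B^{-1/2}$, which rewrites $T_2=U\delta' W$ with $U:=\sigma_{BC}^{1/2}\sigma_B^{-1/2}$ (so $U^\dagger U=S$) and $\delta':=\sigma_B^{1/2}-\rho_B^{1/2}$, so that both sandwich factors retain the partial-trace property on $\cH_B$. Expanding $\|T_2\|_4^4=\tr((\delta' S\delta' R)^2)$ and applying Lieb-Thirring twice---first to $A=\delta' S\delta'$, $B=R$, then to $A'=\delta'^2$, $B'=S$---together with $\tr_A R^2\le d_A I_B$ and $\tr_C S^2\le d_C I_B$ yields
\[
\|T_2\|_4^4\le\tr\bigl((\delta' S\delta')^2 R^2\bigr)\le d_A\tr\bigl((\delta' S\delta')^2\bigr)\le d_A\tr(\delta'^4 S^2)\le d_A d_C\|\delta'\|_4^4,
\]
so $\|T_2\|_4\le(d_A d_C)^{1/4}\|\rho_B^{1/2}-\sigma_B^{1/2}\|_4$. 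Summing the three bounds via the triangle inequality and doubling yields the theorem.
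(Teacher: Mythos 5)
Your proof is correct and follows essentially the same route as the paper: the identical reduction $\norm{XX^{\dagger}-YY^{\dagger}}{2}\le 2\norm{X-Y}{4}$ via Lemma \ref{lemma:4to2}, the identical three-term telescoping of $X-Y$, and the same sources for the dimension factors $d_A^{1/4}$, $(d_Ad_C)^{1/4}$, $d_C^{1/4}$. The only divergence is in how each $\norm{\cdot}{4}$ term is estimated --- you use Lieb--Thirring together with $\norm{\rho_B^{-1/2}\rho_{AB}\rho_B^{-1/2}}{\infty}\le d_A$ and a resolvent-type identity for the middle term, while the paper manipulates the squared $\norm{\cdot}{2}$ expressions directly; your justification is in fact the tighter one, since the paper's cited fact $\norm{\tr_A X_{AR}}{2}\le d_A^{1/2}\norm{X_{AR}}{2}$ is invoked in the reverse direction from what that step actually requires.
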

\begin{proof}
According to Lemma \ref{lemma:4to2}, we have
\begin{align*}
&\norm{\rho_{BC}^{\frac12}\rho_{B}^{-\frac12}\rho_{AB}\rho_{B}^{-\frac12}\rho_{BC}^{\frac12}
-\sigma_{BC}^{\frac12}\sigma_{B}^{-\frac12}\sigma_{AB}\sigma_{B}^{-\frac12}\sigma_{BC}^{\frac12}}{2}
\\ \le &
2\norm{\rho_{BC}^{\frac12}\rho_{B}^{-\frac12}\rho_{AB}^{1/2}
-\sigma_{BC}^{\frac12}\sigma_{B}^{-\frac12}\sigma_{AB}^{1/2}}{4}
\\ \le & 2\norm{\rho_{BC}^{\frac12}\rho_{B}^{-\frac12}\rho_{AB}^{1/2}
-\sigma_{BC}^{\frac12}\rho_{B}^{-\frac12}\rho_{AB}^{1/2}}{4}+2\norm{\sigma_{BC}^{\frac12}(\sigma_{B}^{-\frac12}-\rho_{B}^{-\frac12})\rho_{AB}^{1/2}}{4}+2\norm{\sigma_{BC}^{\frac12}\sigma_{B}^{-\frac12}(\sigma_{AB}^{1/2}-\rho_{AB}^{1/2})}{4}
\end{align*}

For the first term,
\begin{align*}
&\norm{(\rho_{BC}^{\frac12}
-\sigma_{BC}^{\frac12})\rho_{B}^{-\frac12}\rho_{AB}^{1/2}}{4}
\\ \le &\norm{(\rho_{BC}^{\frac12}
-\sigma_{BC}^{\frac12})\rho_{B}^{-\frac12}\rho_{AB}\rho_{B}^{-\frac12} (\rho_{BC}^{\frac12}
-\sigma_{BC}^{\frac12})}{2}^{1/2}
\\ \le & d_A^{1/4} \norm{(\rho_{BC}^{\frac12}
-\sigma_{BC}^{\frac12})\rho_{B}^{-\frac12}\rho_{B}\rho_{B}^{-\frac12} (\rho_{BC}^{\frac12}
-\sigma_{BC}^{\frac12})}{2}^{1/2}
\\ = & d_A^{1/4} \norm{(\rho_{BC}^{\frac12}
-\sigma_{BC}^{\frac12})^2}{2}^{1/2}
\\ = & d_A^{1/4} \norm{\rho_{BC}^{\frac12}
-\sigma_{BC}^{\frac12}}{4}
\end{align*}
Here the first inequality uses Lemma \ref{lemma:4to2}. The second inequality follows from the following fact: For any $X_{AR}$,
\[||\tr_A X_{AR}||_2\leq d_A^{1/2}||X_{AR}||_2\ .\]
Indeed, we write
\begin{align*}
X_{AR}=\sum_{i,j}\op{i}{j}\ten X_{i,j}
\end{align*}
Then
\begin{align*}
||\tr_A X_{AR}||_2^2=||\sum_{i} X_{i,i}||_2^2\leq (\sum_{i} ||X_{i,i}||_2)^2\leq d_A (\sum_{i} ||X_{i,i}||_2^2)\leq d_A (\sum_{i,j} ||X_{i,j}||_2^2)=d_A ||X_{AR}||_2^2.
\end{align*}
The third term is similar,
\begin{align*}
&\norm{\sigma_{BC}^{\frac12}\sigma_{B}^{-\frac12}(\sigma_{AB}^{1/2}-\rho_{AB}^{1/2})}{4}
\\ \le & \norm{(\sigma_{AB}-\rho_{AB}^{1/2})\sigma_{B}^{-\frac12}\sigma_{BC}\sigma_{B}^{-\frac12}(\sigma_{AB}^{1/2}-\rho_{AB}^{1/2})}{2}^{1/2}
\\ \le & (d_C)^{1/4} \norm{(\sigma_{AB}-\rho_{AB}^{1/2})\sigma_{B}^{-\frac12}\sigma_{B}\sigma_{B}^{-\frac12}(\sigma_{AB}^{1/2}-\rho_{AB}^{1/2})}{2}^{1/2}
\\ \le & (d_C)^{1/4} \norm{(\sigma_{AB}^{1/2}-\rho_{AB}^{1/2})^2}{2}^{1/2}
\\ = & (d_C)^{1/4} \norm{\sigma_{AB}^{1/2}-\rho_{AB}^{1/2}}{4}^{1/2}
\end{align*}
For the second term,
\begin{align*}
&\norm{\sigma_{BC}^{\frac12}(\sigma_{B}^{-\frac12}-\rho_{B}^{-\frac12})\rho_{AB}^{1/2}}{4}
\\ \le & \norm{\sigma_{BC}^{\frac12}(\sigma_{B}^{-\frac12}-\rho_{B}^{-\frac12})\rho_{AB}(\sigma_{B}^{-\frac12}-\rho_{B}^{-\frac12})\sigma_{BC}^{\frac12}}{2}^{1/2}
\\ \le & d_A^{1/4} \norm{\sigma_{BC}^{\frac12}(\sigma_{B}^{-\frac12}-\rho_{B}^{-\frac12})\rho_{B}(\sigma_{B}^{-\frac12}-\rho_{B}^{-\frac12})\sigma_{BC}^{\frac12}}{2}^{1/2}\\
= & d_A^{1/4} \norm{\rho_{B}^{1/2}(\sigma_{B}^{-\frac12}-\rho_{B}^{-\frac12})\sigma_{BC}(\sigma_{B}^{-\frac12}-\rho_{B}^{-\frac12})\rho_{B}^{1/2}}{2}^{1/2}
\\ \le & d_A^{1/4}d_C^{1/4} \norm{\rho_{B}^{1/2}(\sigma_{B}^{-\frac12}-\rho_{B}^{-\frac12})\sigma_{B}(\sigma_{B}^{-\frac12}-\rho_{B}^{-\frac12})\rho_{B}^{1/2}}{2}^{1/2}
\\ = & d_A^{1/4}d_C^{1/4} \norm{(\sigma_{B}^{\frac12}-\rho_{B}^{\frac12})^2}{2}^{1/2}\\
=& d_A^{1/4}d_C^{1/4} \norm{\sigma_{B}^{\frac12}-\rho_{B}^{\frac12}}{4}
\end{align*}
\iffalse
Here the first inequality uses Lemma \ref{lemma:4to2}. The second inequality follows from the fact:
For any $X_{AR}$,
\[||\tr_A X_{AR}||_2\leq d_A^{1/2}||X_{AR}||_2\ .\]
The first equality follows from the fact that
\[
||X^{\dag}X||_p=||XX^{\dag}||_p.
\]
The third inequality follows from the fact:
For any $X_{CR}$,
\[||\tr_C X_{CR}||_2\leq d_C^{1/2}||X_{CR}||_2\ .\]

For the third term,
\begin{align*}
&\norm{\sigma_{BC}^{\frac12}\sigma_{B}^{-\frac12}(\sigma_{AB}^{1/2}-\rho_{AB}^{1/2})}{4}
\\ \le & \norm{(\sigma_{AB}-\rho_{AB}^{1/2})\sigma_{B}^{-\frac12}\sigma_{BC}\sigma_{B}^{-\frac12}(\sigma_{AB}^{1/2}-\rho_{AB}^{1/2})}{2}^{1/2}
\\ \le & (d_C)^{1/4} \norm{(\sigma_{AB}-\rho_{AB}^{1/2})\sigma_{B}^{-\frac12}\sigma_{B}\sigma_{B}^{-\frac12}(\sigma_{AB}^{1/2}-\rho_{AB}^{1/2})}{2}^{1/2}
\\ \le & (d_C)^{1/4} \norm{(\sigma_{AB}^{1/2}-\rho_{AB}^{1/2})^2}{2}^{1/2}
\\ = & (d_C)^{1/4} \norm{\sigma_{AB}^{1/2}-\rho_{AB}^{1/2}}{4}^{1/2}
\end{align*}
Here the first inequality uses Lemma \ref{lemma:4to2}. The second inequality follows from the fact:
For any $X_{CR}$,
\[||\tr_C X_{CR}||_2\leq d_C^{1/2}||X_{CR}||_2\ .\]
\fi
\end{proof}

\begin{rem}{\rm The above lemma also applies to Schatten $p$-norm using a similar argument.
For $1\le p\le \infty$ and $\frac{1}{p}+\frac{1}{p'}$,
\begin{align*}
&\norm{\rho_{BC}^{\frac12}\rho_{B}^{-\frac12}\rho_{AB}\rho_{B}^{-\frac12}\rho_{BC}^{\frac12}
-\sigma_{BC}^{\frac12}\sigma_{B}^{-\frac12}\omega_{AB}\sigma_{B}^{-\frac12}\sigma_{BC}^{\frac12}}{p}\\
\le & \norm{\rho_{BC}^{\frac12}\rho_{B}^{-\frac12}\rho_{AB}^{\frac12}
-\sigma_{BC}^{\frac12}\sigma_{B}^{-\frac12}\sigma_{AB}^{\frac12}}{p}\\ \le &\ 2d_A^{1/2p'}\norm{\sigma_{BC}^{\frac12}-\rho_{BC}^{\frac12}}{2p}+2d_A^{1/2p'}d_C^{1/2p'}\norm{\sigma_{B}^{\frac12}-\rho_{B}^{\frac12}}{2p} +2d_C^{1/2p'}\norm{\omega_{AB}^{\frac12}-\rho_{AB}^{\frac12}}{2p}.
\end{align*}}
\end{rem}

\section{Quantum Markov Chain tomography}
In this section, we study the quantum Markov Chain tomography problem:

How many copies are needed to output $\hat \rho$ with expected trace distance $\epsilon$ (or infidelity $\delta$) to quantum Markov Chain $\rho_{ABC}$?

\subsection{Tripartite quantum Markov Chain}
We settle the complexity of this problem for a tripartite quantum system.

\vspace{2mm}
{\bf{Theorem}} \ref{main}.\
To solve the quantum Markov Chain tomography in system $\cH_{A}\ten \cH_{B}\ten\cH_{C} $ with $n$ samples, there are constant $c,C>0$ such that
\[
c\frac{d_B^2 (d_A^2+d_C^2)}{\delta} \leq n\leq C\min\{\frac {(d_A^2+d_C^2)d_B^2 \ln (d_Ad_Bd_C/\delta)}{\delta}, \frac{(d_A^2+d_C^2)d_B^2}{\delta^4}\}
\]
for infidelity $\delta$.
\[
c\frac{d_B^2 (d_A^2+d_C^2)}{\epsilon^2} \leq n\leq C\min\{\frac {(d_A^2+d_C^2)d_B^2 \ln (d_Ad_Bd_C/\epsilon)}{\epsilon^2}, \frac{(d_A^2+d_C^2)d_B^2}{\epsilon^4}\}
\]
for trace distance $\eps$.
\vspace{2mm}

\begin{proof}
Suppose the given state is $\rho_{ABC}$. The upper bound for infidelity $\delta$ is proved as follows:
\begin{itemize}
\item Use Lemma \ref{HHJ} to tomography $\rho_{AB}=\tr_C\rho_{ABC}$ and $\rho_{BC}=\tr_A\rho_{ABC}$ with infidelity $0.01\delta$, and the estimations are $\hat \rho_{AB}$ and $\hat\rho_{BC}$, respectively.
\item Output \[
\hat\rho_{BC}^{\frac12}\hat\rho_{B}^{-\frac12}\hat\rho_{AB}\hat\rho_{B}^{-\frac12}\hat\rho_{BC}^{\frac12}
\]
where $\hat\rho_B=\tr_C\rho_{BC}$.
\end{itemize}
By the monotonicity of fidelity under partial trace, we have
\[
F(\rho_B,\hat\rho_B)\geq F(\rho_{BC},\hat\rho_{BC})\geq 1-0.01\delta,
\]
According to Theorem \ref{petz:fidelity}, we have
\begin{align*}
F(\rho_{ABC},\hat\rho_{BC}^{\frac12}\hat\rho_{B}^{-\frac12}\hat\rho_{AB}\hat\rho_{B}^{-\frac12}\hat\rho_{BC}^{\frac12})\geq 1-0.18\delta.
\end{align*}
where we use the fact that for quantum Markov Chain $\rho_{ABC}=\rho_{BC}^{\frac12}\rho_{B}^{-\frac12}\rho_{AB}\rho_{B}^{-\frac12}\rho_{BC}^{\frac12}$.

In the first step, we use
\[
\frac{100 d_A^2d_B^2 \ln (d_Ad_B/\delta)}{\delta}+\frac{100 d_B^2d_C^2 \ln (d_Bd_C/\delta)}{\delta}=O(\frac {(d_A^2+d_C^2)d_B^2 \ln (d_Ad_Bd_C/\delta)}{\delta}).
\]
samples of $\rho_{ABC}$.

To prove the upper for distance $\epsilon$, we only need to choose $\delta=\epsilon^2/2$. The relation between $\epsilon$ and $\delta$ leads us to
\[
\frac{1}{2}||\rho_{ABC}-\hat\rho_{BC}^{\frac12}\hat\rho_{B}^{-\frac12}\hat\rho_{AB}\hat\rho_{B}^{-\frac12}\hat\rho_{BC}^{\frac12}||_1\leq \epsilon.
\]
The used number of samples is
\[
O(\frac {(d_A^2+d_C^2)d_B^2 \ln (d_Ad_Bd_C/\epsilon)}{\epsilon^2}).
\]
If we use Lemma \ref{OW16} with parameter $0.01\epsilon^2$ instead of Lemma \ref{HHJ} with parameter $0.01\delta$.
Theorem \ref{petz:distance} implies
\[
\frac{1}{2}||\rho_{ABC}-\hat\rho_{BC}^{\frac12}\hat\rho_{B}^{-\frac12}\hat\rho_{AB}\hat\rho_{B}^{-\frac12}\hat\rho_{BC}^{\frac12}||_1\leq 0.6\epsilon.
\]
We use
\[
O(\frac{(d_A^2+d_C^2)d_B^2}{\epsilon^4})
\]
samples.
The bound for $\delta$ leads to the following upper bound for $\delta$,
\[
O(\frac{(d_A^2+d_C^2)d_B^2}{\delta^4})
\]
The lower bound follows from the following fact: $\rho_{AB}\otimes \frac{id_C}{d_C}$ and $\frac{id_A}{d_A} \otimes \rho_{BC}$ and are both quantum Markov Chains for any $\rho_{AB}$ and $\rho_{BC}$. To tomography these quantum Markov Chains, we need to accomplish the tomography of $\rho_{AB}$ or $\rho_{BC}$.

According to the lower bound part of Lemma \ref{HHJ}, one needs at least
\[
\Omega(\frac{d_B^2\max\{d_A^2,d_C^2\}}{\epsilon^2})=\Omega(\frac{d_B^2 (d_A^2+d_C^2)}{\epsilon^2})
\]
samples of $\rho_{ABC}$.
The bound also leads to the following lower bound for $\delta$,
\[
\Omega(\frac{d_B^2 (d_A^2+d_C^2)}{\delta}).
\]
\end{proof}
\begin{rem}{\rm The above theorem is robust. We only need $\rho_{ABC}$ is $0.01\epsilon$ (or $0.01\delta$) close to a quantum Markov Chain.}
\end{rem}

\subsection{Multipartite quantum Markov Chain}

In this subsection, we generalized the tomography results into multipartite quantum Markov Chain of the system $\otimes_{i=1}^m \cH_i$. Here $\rho$ is called a quantum Markov Chain in $\otimes_{i=1}^m \cH_i$ if
\begin{align}
\rho=\mathcal{N}_{m-1}\circ\cdots\circ\mathcal{N}_2 (\rho_{1,2})
\end{align}
where $\mathcal{N}_{i}: \cB(\cH_{i})\mapsto \cB(\cH_{i}\otimes\cH_{i+1})$ with form $\mathcal{N}_{i}(X)=\rho_{i,i+1}^{1/2}(\rho_i^{-1/2}X\rho_i^{-1/2}\otimes I_{i+1})\rho_{i,i+1}^{1/2}$, the Petz map defined of system $\cH_{i}\otimes\cH_{i+1}$ with $\rho_{s}$ denotes the reduced density matrix of system $s\subseteq\{1,2,\cdots,m\}$.

\vspace{2mm}
{\bf{Theorem}} \ref{main-4}.\
For quantum system $\otimes_{i=1}^m \cH_i$ with $d_i$ be the dimension of the $i$-th subsystem,
$\tilde{O}(\frac{m^2\max_{i} \{d_i^2d_{i+1}^2\}}{\delta})$ copies are sufficient to tomography a quantum Markov Chain with infidelity $\delta$.
\vspace{2mm}

\begin{proof}
We use the following algorithm.
\begin{itemize}
\item For each even number $2\leq i\leq m-1$, we use Lemma \ref{HHJ} to tomography $\rho_{i,i+1}$ in parallel, with infidelity $\delta_i$ and a successful probability $1-\frac{1}{100m}$.

\item For each odd number $2\leq i\leq m-1$, we use Lemma \ref{HHJ} to tomography $\rho_{i,i+1}$ in parallel, with infidelity $\delta_i$ and a succesful probability $1-\frac{1}{100m}$.

\item Let the estimations be $\hat \rho_{i,i+1}$ and $\hat{\mathcal{N}}_{i}: \cB(\cH_{i})\mapsto \cB(\cH_{i}\otimes\cH_{i+1})$ with form
\[\hat{\mathcal{N}}_{i}(X)=\hat\rho_{i,i+1}^{1/2}(\hat\rho_i^{-1/2}X\hat\rho_i^{-1/2}\otimes id_{i+1})\hat\rho_{i,i+1}^{1/2}.\]

\item Output \[
\hat\rho=\hat{\mathcal{N}}_{m-1}\circ\cdots\circ\hat{\mathcal{N}}_2 (\hat\rho_{1,2}).
\]
\end{itemize}
The parameters $\delta_i$ will be chosen later.

By union bound, we know that with probability at least $1-\frac{m}{100m}=0.99$,
\[
1-F(\rho_{i,i+1},\hat\rho_{i,i+1})\leq \delta_i\ \ \ \ \forall \ \ \ 1\leq i\leq m-1.
\]
We first show the following,
\begin{align*}
&||\rho_{m-1,m}^{\frac12}\rho_{m-1}^{-\frac12}\cdots \rho_{1,2}^{\frac12}-\hat\rho_{m-1,m}^{\frac12}\hat\rho_{m-1}^{-\frac12}\cdots \hat\rho_{1,2}^{\frac12}||_2\\
\leq & ||\rho_{m-1,m}^{\frac12}\rho_{m-1}^{-\frac12}\cdots \rho_{1,2}^{\frac12}-\hat\rho_{m-1,m}^{\frac12}\rho_{m-1}^{-\frac12}\cdots \rho_{1,2}^{\frac12}||_2+||\hat\rho_{m-1,m}^{\frac12}\hat\rho_{m-1}^{-\frac12}\cdots \hat\rho_{1,2}^{\frac12}-\hat\rho_{m-1,m}^{\frac12}\rho_{m-1}^{-\frac12}\cdots \rho_{1,2}^{\frac12}||_2\\
=& \sqrt{\tr\mathcal{N}_1\circ\cdots \circ\mathcal{N}_{m-2}((\rho_{m-1,m}^{\frac12}-\hat\rho_{m-1,m}^{\frac12})^2)}+||\hat\rho_{m-1,m}^{\frac12}\hat\rho_{m-1}^{-\frac12}\cdots \hat\rho_{1,2}^{\frac12}-\hat\rho_{m-1,m}^{\frac12}\rho_{m-1}^{-\frac12}\cdots \rho_{1,2}^{\frac12}||_2\\
=& ||\rho_{m-1,m}^{\frac12}-\hat\rho_{m-1,m}^{\frac12}||_2+||\hat\rho_{m-1,m}^{\frac12}\hat\rho_{m-1}^{-\frac12}\cdots \hat\rho_{1,2}^{\frac12}-\hat\rho_{m-1,m}^{\frac12}\rho_{m-1}^{-\frac12}\cdots \rho_{1,2}^{\frac12}||_2\\
\leq& ||\rho_{m-1,m}^{\frac12}-\hat\rho_{m-1,m}^{\frac12}||_2+||\hat\rho_{m-1,m}^{\frac12}\rho_{m-1}^{-\frac12}\rho_{m-2,m-1}^{\frac12}\cdots \rho_{1,2}^{\frac12}-\hat\rho_{m-1,m}^{\frac12}\hat\rho_{m-1}^{-\frac12}\rho_{m-2,m-1}^{\frac12}\cdots \rho_{1,2}^{\frac12}||_2\\
&\ \ \ \ \ \ \ \ \ \ \ \ \ \ \ \ \ \ \ \ \ \ \ \ \ \ \ \ \ \ \ \ \ \ \ +||\hat\rho_{m-1,m}^{\frac12}\hat\rho_{m-1}^{-\frac12}\hat\rho_{m-2,m-1}^{\frac12}\cdots \hat\rho_{1,2}^{\frac12}-\hat\rho_{m-1,m}^{\frac12}\hat\rho_{m-1}^{-\frac12}\rho_{m-2,m-1}^{\frac12}\cdots \rho_{1,2}^{\frac12}||_2\\
=& ||\rho_{m-1,m}^{\frac12}-\hat\rho_{m-1,m}^{\frac12}||_2+\sqrt{\tr \mathcal{N}_1\circ\cdots \circ\mathcal{N}_{m-3}((\hat\rho_{m-1,m}^{\frac12}\rho_{m-1}^{-\frac12}\rho_{m-2,m-1}^{\frac12}-\hat\rho_{m-1,m}^{\frac12}\hat\rho_{m-1}^{-\frac12}\rho_{m-2,m-1}^{\frac12})^2)}\\
&\ \ \ \ \ \ \ \ \ \ \ \ \ \ \ \ \ \ \ \ \ \ \ \ \ \ \ \ \ \ \ \ \ \ \ +||\hat\rho_{m-1,m}^{\frac12}\hat\rho_{m-1}^{-\frac12}\hat\rho_{m-2,m-1}^{\frac12}\cdots \hat\rho_{1,2}^{\frac12}-\hat\rho_{m-1,m}^{\frac12}\hat\rho_{m-1}^{-\frac12}\rho_{m-2,m-1}^{\frac12}\cdots \rho_{1,2}^{\frac12}||_2\\
=& ||\rho_{m-1,m}^{\frac12}-\hat\rho_{m-1,m}^{\frac12}||_2+||\hat\rho_{m-1,m}^{\frac12}\rho_{m-1}^{-\frac12}\rho_{m-2,m-1}^{\frac12}-\hat\rho_{m-1,m}^{\frac12}\hat\rho_{m-1}^{-\frac12}\rho_{m-2,m-1}^{\frac12}||_2\\
&\ \ \ \ \ \ \ \ \ \ \ \ \ \ \ \ \ \ \ \ \ \ \ \ \ \ \ \ \ \ \ \ \ \ \ +||\hat\rho_{m-1,m}^{\frac12}\hat\rho_{m-1}^{-\frac12}\hat\rho_{m-2,m-1}^{\frac12}\cdots \hat\rho_{1,2}^{\frac12}-\hat\rho_{m-1,m}^{\frac12}\hat\rho_{m-1}^{-\frac12}\rho_{m-2,m-1}^{\frac12}\cdots \rho_{1,2}^{\frac12}||_2\\
=& ||\rho_{m-1,m}^{\frac12}-\hat\rho_{m-1,m}^{\frac12}||_2+||\rho_{m-1}^{\frac12}-\hat\rho_{m-1}^{\frac12}||_2\\
&\ \ \ \ \ \ \ \ \ \ \ \ \ \ \ \ \ \ \ \ \ \ \ \ \ \ \ \ \ \ \ \ \ \ \ +\sqrt{\tr \mathcal{N}_{m-1}((\hat\rho_{m-2,m-1}^{\frac12}\cdots \hat\rho_{1,2}^{\frac12}-\rho_{m-2,m-1}^{\frac12}\cdots \rho_{1,2}^{\frac12})^2)}\\
=&||\rho_{m-1,m}^{\frac12}-\hat\rho_{m-1,m}^{\frac12}||_2+||\rho_{m-1}^{\frac12}-\hat\rho_{m-1}^{\frac12}||_2+||\hat\rho_{m-2,m-1}^{\frac12}\cdots \hat\rho_{1,2}^{\frac12}-\rho_{m-2,m-1}^{\frac12}\cdots \rho_{1,2}^{\frac12}||_2\\
&\cdots\\
\leq& \sum_{i=1}^{m-1}||\rho_{i,i+1}^{\frac12}-\hat\rho_{i,i+1}^{\frac12}||_2+\sum_{i=2}^{m-1}||\rho_{i}^{\frac12}-\hat\rho_{i}^{\frac12}||_2\\
\leq & 2\sum_{i=1}^{m-1}\sqrt{1-F(\rho_{i,i+1},\hat\rho_{i,i+1})}+2\sum_{i=2}^{m-1}\sqrt{1-F(\rho_{i},\hat\rho_i)}\\
\leq & 2\sum_{i=1}^{m-1}\sqrt{1-F(\rho_{i,i+1},\hat\rho_{i,i+1})}+2\sum_{i=2}^{m-1}\sqrt{1-F(\rho_{i,i+1},\hat\rho_{i,i+1})}\\
\leq &4\sum_{i=1}^{m-1}\sqrt{1-F(\rho_{i,i+1},\hat\rho_{i,i+1})}\\
\leq &4 \sum_{i=1}^{m-1} \delta_i^{\frac12}.
\end{align*}
According to Lemma \ref{f-1}, we have
\[
F(\rho,\hat\rho)\geq 1-8(\sum_{i=1}^{m-1} \delta_i^{\frac12})^2.
\]
To ensure
\[
F(\rho,\hat\rho)\geq 1-\delta,
\]
one only needs to choose $\delta_i$ such that
\[
\sum_{i=1}^{m-1} \delta_i^{\frac12}\leq \frac{\delta^{\frac12}}{2\sqrt{2}}.
\]
For instance, we can choose $\delta_i=\frac{\delta}{8m^2}$.

The parallel processing in the first two steps allows us to use the same copies to accomplish the tomography of $\rho_{1,2}$, $\rho_{3,4}$, $\cdots$, as well as $\rho_{2,3}$, $\rho_{4,5}$, $\cdots$.
This follows from the observation that: For each sample $i_1i_2\cdots i_m$ of an $m$-partite distribution $p_{1,2,\cdots,m}$, $i_k$ is a sample of $p_k$ for each $k$.

According to Lemma \ref{HHJ}, the total number of samples is
\[
O(\frac{\max_i\{d_i^2d_{i+1}^2\ln m \ln(d_id_{i+1}/\delta_i)\}}{\delta_i})=\tilde O(\frac{m^2\max_i\{d_i^2d_{i+1}^2\}}{\delta}).
\]
\end{proof}

\begin{rem}
For some region of parameters, $d_i$ and $m$, one can do better than $\tilde O(\frac{m^2\max_i\{d_i^2d_{i+1}^2\}}{\delta})$. For instance, $\max_i\{d_i^2d_{i+1}^2\}$ is siginificantly larger than other $d_jd_{j+1}$, then we can choose not uniform $\delta_i$ to decrease the number of samples.
\end{rem}

\section{Certification of quantum Markov Chain}

In this section, we study the quantum Markov Chain certification problem:

For a quantum Markov Chain $\rho_{ABC}$ and a known quantum Markov Chain $\sigma_{ABC}$, how many copies of $\rho_{ABC}$ are needed to distinguish between two cases:
\begin{itemize}
\item $\rho_{ABC}=\sigma_{ABC}$;
\item $ F(\rho_{ABC},\sigma_{ABC})<1- \delta$ (or, $||\rho_{ABC}-\sigma_{ABC}||_1>\epsilon$)?
\end{itemize}

\vspace{2mm}

{\bf{Theorem}} \ref{main-2}. \
The sample complexity of quantum Markov Chain certification is ${\Theta}(\frac{(d_A+d_C)d_B}{\delta})$ for indfidelity $\delta$, and ${\Theta}(\frac{(d_A+d_C)d_B}{\epsilon^2})$ for trace distance $\epsilon$.
\vspace{2mm}

\begin{proof}
The upper bound for infidelity $\delta$ is proved as follows:
\begin{itemize}
\item Use Lemma \ref{BOW17} certify whether $\rho_{AB}=\sigma_{AB}$ and $\rho_{BC}=\sigma_{BC}$ with infidelity $0.01\delta$ and success probability $0.99$. and the estimations are $\hat \rho_{AB}$ and $\hat\rho_{BC}$, respectively.
\item If any of the certifications returns that the infidelity is at least $0.01\delta$, then we output ``They are at least $\delta$-far". Otherwise, return $\rho_{ABC}=\sigma_{ABC}$.
\end{itemize}
If one of the certification returns the infidelity at least $0.01\delta$, says $F(\rho_{AB},\sigma_{AB})<1-0.01\delta$, then $\rho_{ABC}\neq\sigma_{ABC}$, which implies $ F(\rho_{ABC},\sigma_{ABC})<1- \delta$.

If both $F(\rho_{AB},\sigma_{AB})>1-0.01\delta$ and $F(\rho_{BC},\sigma_{BC})>1-0.01\delta$.
By the monotonicity of fidelity under partial trace, we have
\[
F(\rho_B,\hat\rho_B)\geq F(\rho_{BC},\hat\rho_{BC})\geq 1-0.01\delta,
\]
Corrolary \ref{qmc:fidelity} leads to
\[
F(\rho_{ABC},\sigma_{ABC})>1-0.18\delta.
\]
This implies $\rho_{ABC}=\sigma_{ABC}$.

In this process, we use
\[
{\Theta}(\frac{(d_A+d_C)d_B}{\delta})
\]
samples of $\rho_{ABC}$.

To prove the upper for distance $\epsilon$, we only need to choose $\delta=\epsilon^2/2$ and use
\[
{\Theta}(\frac{(d_A+d_C)d_B}{\epsilon^2})
\]
samples of $\rho_{ABC}$.

The lower bound follows from the following fact: $\rho_{AB}\otimes \frac{I_C}{d_C}$ and $\frac{I_A}{d_A} \otimes \rho_{BC}$ and are both quantum Markov Chains for any $\rho_{AB}$ and $\rho_{BC}$. We let $\sigma=\frac{I_A}{d_A}\ten\frac{I_B}{d_B}\otimes \frac{I_C}{d_C}$.
If we know $\rho_{ABC}=\rho_{AB}\otimes \frac{I_C}{d_C}$, then
\[
||\rho_{ABC}-\sigma_{ABC}||_1=||\rho_{AB}- \frac{I_A}{d_A}\ten\frac{I_B}{d_B}||_1.
\]
According to Lemma \ref{BOW17}, $\Omega(\frac{d_Ad_B}{\epsilon^2})$ copies of $\rho_{AB}$ is needed to distinguish
\begin{itemize}
\item $\rho_{AB}=\sigma_{AB}$;
\item $||\rho_{AB}-\sigma_{AB}||_1>\epsilon$.
\end{itemize}
By the relation between $\epsilon$ and $\delta$, we know that
$\Omega(\frac{d_Ad_B}{\delta})$ copies of $\rho_{AB}$ is needed to distinguish
\begin{itemize}
\item $\rho_{AB}=\sigma_{AB}$;
\item $F(\rho_{AB},\sigma_{AB})<1- \delta$.
\end{itemize}
By choosing $\rho_{ABC}=\frac{id_A}{d_A} \otimes \rho_{BC}$, similar arguments lead to lower bound
$\Omega(\frac{d_Bd_C}{\epsilon^2})$ and $\Omega(\frac{d_Bd_C}{\delta})$. This completes the proof.
\end{proof}
\section{Testing quantum Markov Chain}
In this section, we study the problem of testing quantum Markov Chain, i.e., distinguishing between two cases for $\rho_{ABC}$:
\begin{itemize}
\item $\rho_{ABC}$ is a quantum Markov Chain;
\item $ \norm{\rho_{ABC}-\rho_{BC}^{\frac12}\rho_{B}^{-\frac12}\rho_{AB}\rho_{B}^{-\frac12}\rho_{BC}^{\frac12}}{1}\geq \epsilon$.
\end{itemize}
We use the following observation to study the quantum Markov Chain testing problem.
\begin{cor}\label{cor:L2-new}
Let $\rho_{B}=\tr_A\rho_{AB}$, and $\sigma_B=\tr_C\sigma_{BC}$ with
\[F(\rho_{BC},\sigma_{BC})\ge 1-\delta,\]
then
\[ \norm{\rho_{BC}^{\frac12}\rho_{B}^{-\frac12}\rho_{AB}\rho_{B}^{-\frac12}\rho_{BC}^{\frac12}-\sigma_{BC}^{\frac12}\sigma_{B}^{-\frac12}\rho_{AB}\sigma_{B}^{-\frac12}\sigma_{BC}^{\frac12}}{2}\le 8\delta^{\frac12}.\]
\end{cor}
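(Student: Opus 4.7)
The plan is to factor each recovery state as an outer product and reduce the $\ell_2$ bound to the already-proved estimate of Lemma \ref{distance}. Introduce
\[X := \rho_{BC}^{1/2}\rho_{B}^{-1/2}\rho_{AB}^{1/2}, \qquad Y := \sigma_{BC}^{1/2}\sigma_{B}^{-1/2}\rho_{AB}^{1/2},\]
(with the usual tensor-with-identity conventions), so that the two operators being compared are exactly $XX^{\dagger}$ and $YY^{\dagger}$. The standard telescoping $XX^{\dagger}-YY^{\dagger}=(X-Y)X^{\dagger}+Y(X^{\dagger}-Y^{\dagger})$ together with H\"older yields
\[\norm{XX^{\dagger}-YY^{\dagger}}{2}\le \bigl(\norm{X}{\infty}+\norm{Y}{\infty}\bigr)\,\norm{X-Y}{2}.\]
Since $XX^{\dagger}$ is the image of $\rho_{AB}$ under the (CPTP) Petz recovery map $\Phi_{\rho}(\cdot)=\rho_{BC}^{1/2}\rho_{B}^{-1/2}(\cdot\otimes \mathrm{id}_C)\rho_{B}^{-1/2}\rho_{BC}^{1/2}$, it is itself a density operator, so $\norm{X}{\infty}^2=\norm{XX^{\dagger}}{\infty}\le 1$; the same argument gives $\norm{Y}{\infty}\le 1$. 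Thus the task reduces to proving $\norm{X-Y}{2}\le 4\sqrt{\delta}$.

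For that estimate I would invoke the simplified version of Lemma \ref{distance} stated in the remark following its proof, specialized to the case $\sigma_{AB}=\rho_{AB}$: the hypotheses $\rho_{B}=\tr_A\rho_{AB}$ and $\sigma_{B}=\tr_C\sigma_{BC}$ are exactly what that lemma requires, yielding
\[\norm{X-Y}{2}\le \norm{\rho_{BC}^{1/2}-\sigma_{BC}^{1/2}}{2}+\norm{\rho_{B}^{1/2}-\sigma_{B}^{1/2}}{2}.\]
Now convert each square-root difference back to an infidelity bound via Lemma \ref{f-1}. The hypothesis $F(\rho_{BC},\sigma_{BC})\ge 1-\delta$ gives $\norm{\rho_{BC}^{1/2}-\sigma_{BC}^{1/2}}{2}\le 2\sqrt{\delta}$, and by monotonicity of fidelity under the partial trace $\tr_C$ (using the tacit marginal consistency $\rho_B=\tr_C\rho_{BC}$ that accompanies the existence of the joint state) we also have $F(\rho_{B},\sigma_{B})\ge 1-\delta$, so $\norm{\rho_{B}^{1/2}-\sigma_{B}^{1/2}}{2}\le 2\sqrt{\delta}$. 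Combining, $\norm{X-Y}{2}\le 4\sqrt{\delta}$ and finally $\norm{XX^{\dagger}-YY^{\dagger}}{2}\le 2\cdot 4\sqrt{\delta}=8\sqrt{\delta}$, which is the desired inequality.

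\textbf{Main obstacle.} The only nontrivial point is the operator-norm bound $\norm{X}{\infty},\norm{Y}{\infty}\le 1$: it hinges on recognizing $XX^{\dagger}$ and $YY^{\dagger}$ as outputs of genuine CPTP Petz recoveries, which in turn needs the trace-preservation identities $\rho_B=\tr_C\rho_{BC}$ and $\sigma_B=\tr_C\sigma_{BC}$. The latter is stated in the hypothesis; the former is an implicit consistency assumption in the tripartite setup that should be called out explicitly in the proof. Apart from this bookkeeping, the argument is a direct two-step telescoping and carries no analytic difficulty.
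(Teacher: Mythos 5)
Your proof is correct, but it takes a genuinely different route for the final step than the paper does. The paper derives the corollary from Theorem \ref{petz:fidelity}: monotonicity of fidelity under $\tr_C$ gives $F(\rho_B,\sigma_B)\ge 1-\delta$, the theorem (applied with $\delta_1=0$, $\delta_2=\delta_3=\delta$) gives $F\ge 1-8\delta$ for the two recovery states, and then the chain $\norm{\cdot}{2}\le\norm{\cdot}{1}\le 2\sqrt{1-F^2}$ yields $8\delta^{1/2}$. You instead stay in $\ell_2$ throughout: the remark after Lemma \ref{distance} (with $\sigma_{AB}=\rho_{AB}$, so the third term drops) bounds $\norm{X-Y}{2}\le 4\delta^{1/2}$ via Lemma \ref{f-1}, and the telescoping $XX^{\dagger}-YY^{\dagger}=(X-Y)X^{\dagger}+Y(X-Y)^{\dagger}$ with H\"older and $\norm{X}{\infty},\norm{Y}{\infty}\le 1$ transfers this to the recovery states---essentially the splitting used in Lemma \ref{lemma:4to2}, with an $(2,\infty)$ H\"older pairing, in place of the paper's detour through Lemma \ref{fidelity} (polar decomposition) and the Fuchs--van de Graaf conversion. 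Both routes land on the same constant $8$, and yours is arguably more direct. Two minor remarks: (i) the operator-norm bound you single out as the main obstacle does not actually require recognizing $XX^{\dagger}$, $YY^{\dagger}$ as outputs of trace-preserving maps; positivity together with $\tr(XX^{\dagger})\le 1$ and $\tr(YY^{\dagger})\le 1$, which follow directly from $\rho_B=\tr_A\rho_{AB}$, $\sigma_B=\tr_C\sigma_{BC}$ and $\tr\rho_{BC}=1$, already gives $\norm{X}{\infty},\norm{Y}{\infty}\le 1$; (ii) the ``tacit'' consistency $\rho_B=\tr_C\rho_{BC}$ that you flag is equally implicit in the paper's own step $F(\rho_B,\sigma_B)\ge F(\rho_{BC},\sigma_{BC})$, so you are not assuming more than the paper does.
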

\begin{proof}
We first observe
\[
F(\rho_B,\sigma_B)\geq F(\rho_{BC},\sigma_{BC})\geq 1-\delta.
\]
Theorem \ref{petz:fidelity} implies
\[
F(\rho_{BC}^{\frac12}\rho_{B}^{-\frac12}\rho_{AB}\rho_{B}^{-\frac12}\rho_{BC}^{\frac12},\sigma_{BC}^{\frac12}\sigma_{B}^{-\frac12}\rho_{AB}\sigma_{B}^{-\frac12}\sigma_{BC}^{\frac12})\ge 1-8\delta.
\]
According to the relation between $\ell_2$ norm, $\ell_1$-norm and infidelity, we have
\[ \norm{\rho_{BC}^{\frac12}\rho_{B}^{-\frac12}\rho_{AB}\rho_{B}^{-\frac12}\rho_{BC}^{\frac12}-\sigma_{BC}^{\frac12}\sigma_{B}^{-\frac12}\rho_{AB}\sigma_{B}^{-\frac12}\sigma_{BC}^{\frac12}}{2}\leq \norm{\rho_{BC}^{\frac12}\rho_{B}^{-\frac12}\rho_{AB}\rho_{B}^{-\frac12}\rho_{BC}^{\frac12}-\sigma_{BC}^{\frac12}\sigma_{B}^{-\frac12}\rho_{AB}\sigma_{B}^{-\frac12}\sigma_{BC}^{\frac12}}{1} \le 8\delta^{\frac12}.\]
\end{proof}
Full-state tomography is not always necessary for the quantum Markov Chain testing problem.

\vspace{2mm}
{\bf{Theorem}} \ref{main-3}. \
$\tilde{O}(\frac{\min\{d_Ad_B^3d_C^3,d_A^3d_B^3d_C\}}{\epsilon^2})$ copies are sufficient to certify whether
$\rho_{ABC}$ is a quantum Markov Chain or $\epsilon$-far from its Petz recovery state for trace distance $\epsilon$.
\vspace{2mm}
\begin{proof}
We use the following tester to show the upper bound $\tilde{O}(\frac{d_Ad_B^3d_C^3}{\epsilon^2})$.
\begin{itemize}
\item Use Lemma \ref{HHJ} to tomography $\rho_{BC}=\tr_A\rho_{ABC}$ with infidelity $\delta:=\frac{\epsilon^2}{400d_Ad_Bd_C}$, and the estimation is $\sigma_{BC}$.
\item Apply Petz map $\sigma_{BC}^{\frac12}\sigma_{B}^{-\frac12}\cdot\sigma_{B}^{-\frac12}\sigma_{BC}^{\frac12}$ on $\rho_{AB}$ to obtain $\sigma_{BC}^{\frac12}\sigma_{B}^{-\frac12}\rho_{AB}\sigma_{B}^{-\frac12}\sigma_{BC}^{\frac12}$.
\item Use Lemma \ref{BOW17} to distinguish $\rho_{ABC}$ and
$\sigma_{BC}^{\frac12}\sigma_{B}^{-\frac12}\rho_{AB}\sigma_{B}^{-\frac12}\sigma_{BC}^{\frac12}$ in $\ell_2$ norm.
Return ``$\rho_{ABC}$ is a quantum Markov China'' if $||\rho_{ABC}-\sigma_{BC}^{\frac12}\sigma_{B}^{-\frac12}\rho_{AB}\sigma_{B}^{-\frac12}\sigma_{BC}^{\frac12}||_2<0.4 \frac{\epsilon}{\sqrt{d_Ad_Bd_C}}$. Otherwise, return ``$ \norm{\rho_{ABC}-\rho_{BC}^{\frac12}\rho_{B}^{-\frac12}\rho_{AB}\rho_{B}^{-\frac12}\rho_{BC}^{\frac12}}{1}\geq \eps$''.
\end{itemize}
Corollary \ref{cor:L2-new} implies
\[ \norm{\rho_{BC}^{\frac12}\rho_{B}^{-\frac12}\rho_{AB}\rho_{B}^{-\frac12}\rho_{BC}^{\frac12}-\sigma_{BC}^{\frac12}\sigma_{B}^{-\frac12}\rho_{AB}\sigma_{B}^{-\frac12}\sigma_{BC}^{\frac12})}{2}\le 8\delta^{\frac12}=\frac{2\epsilon}{5\sqrt{d_Ad_Bd_C}}.\]
If $\rho_{ABC}$ is a quantum Markov Chain, we have
\[ \norm{\rho_{ABC}-\sigma_{BC}^{\frac12}\sigma_{B}^{-\frac12}\rho_{AB}\sigma_{B}^{-\frac12}\sigma_{BC}^{\frac12}}{2}\le \frac{2\epsilon}{5\sqrt{d_Ad_Bd_C}}.\]
If
\[\norm{\rho_{ABC}-\rho_{BC}^{\frac12}\rho_{B}^{-\frac12}\rho_{AB}\rho_{B}^{-\frac12}\rho_{BC}^{\frac12}}{1}\geq \epsilon,\]
then
\[\norm{\rho_{ABC}-\rho_{BC}^{\frac12}\rho_{B}^{-\frac12}\rho_{AB}\rho_{B}^{-\frac12}\rho_{BC}^{\frac12}}{2}\geq \frac{\epsilon}{\sqrt{d_Ad_Bd_C}}.\]
Then
\[ \norm{\rho_{ABC}-\sigma_{BC}^{\frac12}\sigma_{B}^{-\frac12}\rho_{AB}\sigma_{B}^{-\frac12}\sigma_{BC}^{\frac12}}{2}\ge \frac{3\epsilon}{5\sqrt{d_Ad_Bd_C}}.\]
In this tester, we use
\[
O(\frac{d_B^2d_C^2 \ln (d_Bd_C/\delta)}{\delta}+\frac{d_Ad_Bd_C}{\epsilon^2})=\tilde O(\frac{d_Ad_B^3d_C^3}{\epsilon^2})
\]
samples of $\rho_{ABC}$.

The upper bound $\tilde{O}(\frac{d_Cd_A^3d_B^3}{\epsilon^2})$ follows from the following fact of quantum Markov Chain: For quantum Markov Chain $\rho_{ABC}$,
\[
\rho_{ABC}=\rho_{BC}^{\frac12}\rho_{B}^{-\frac12}\rho_{AB}\rho_{B}^{-\frac12}\rho_{BC}^{\frac12}=\rho_{AB}^{\frac12}\rho_{B}^{-\frac12}\rho_{BC}\rho_{B}^{-\frac12}\rho_{AB}^{\frac12}.
\]
\end{proof}

\section{Acknowledgement}
We want to thank Andreas Winter and Marco Tomamichel for their helpful discussions.
\bibliographystyle{alpha}
\bibliography{opt-tomo}

\end{document}